\begin{document}

\markboth{Kai Jin}
{Finding all Maximal Area Parallelograms in a Convex Polygon}

\catchline

\title{Finding all Maximal Area Parallelograms in a Convex Polygon\footnote{The conference version of this paper was published in CCCG 2011: The 23rd Canadian Conference on Computational Geometry. Moreover, at the date of submitting this manuscript, there is already a followup research which improves this result.
The followup research combines this part of research with another part which introduces and studies a novel geometric structure.
Since it is way too lengthy to put both parts inside one single paper, we decide to publish two parts sequentially.}}

\author{Kai Jin\footnote{Supported by National Basic Research Program of China Grant 2007CB807900, 2007CB807901, and National Natural Science Foundation of China Grant 61033001, 61061130540, 61073174.}}

\address{Department of Computer Science, The University of Hong Kong,\\
Pokfulam Road, Hong Kong SAR, China. \\ \texttt{cscjjk@gmail.com}
}

\maketitle

\pub{Received (received date)}{Revised (revised date)}
{Communicated by (Name)}

\begin{abstract}
Polygon inclusion problems have been studied extensively in geometric optimization.
In this paper, we consider the variant of computing the maximum area parallelograms (MAPs) and
  all the locally maximal area parallelograms (LMAPs) in a given convex polygon.
By proving and utilizing several structural properties of the LMAPs, we compute all of them (including all the MAPs)
  in $O(n^2)$ time, where $n$ denotes the number of edges of the given polygon.
In addition, we prove that the LMAPs interleave each other and thus the number of LMAPs is $O(n)$.
We discuss applications of our result to, among others, the problem of computing
  the maximum area centrally-symmetric convex body inside a convex polygon,
  and the simplest case of the Heilbronn triangle problem.

\keywords{Geometric optimization; Polygon inclusion problem; Quadratic programming; Parallelograms; Distance product function.}
\end{abstract}

\section{Introduction}\label{sect:introduction}

Polygon inclusion problems consider searching for extremal shapes with special properties inside a polygon and have been studied for four decades.
Famous pioneer work includes the studies of the diameter problem \cite{Classic-shamosCG-dissertation},
  potato-peeling problem \cite{Potato-polynomial-DCG86} (which concerns finding the largest convex polygon in a simple polygon),
  maximum $k$-gon problem \cite{Kgon-extremal-STOC82},
  maximum ellipse problem \cite{ellipse-LPtype-SCG92},
  maximum equilateral triangle and squares problem \cite{Othershape-square-Allerton87},
  maximum homothetic / similar copy problem \cite{Placement-chazelle-CR83,Placement-ST-CGTA94,Placement-convex-DCG98}.
  See subsection~\ref{subsect:related} for an overview of these and other results.
  These problems usually arise from more practical problems of operations research and pattern recognition \cite{ShapeSurvey}. \smallskip

In this paper, we study the following polygon inclusion problem:
  \emph{Given a convex polygon $P$ bounded by $n$ halfplanes, compute the maximum area parallelograms (MAPs) in $P$.}
In addition to the fact that this problem is as clean and fundamental as the aforementioned related problems, it has several special motivations:

\smallskip \noindent\textbf{I.} \mbox{  } It is a natural extension of the diameter problem.
  To find the diameter, we determine a center $O$ and one vector $\beta_1$ so that $O\pm \beta_1$ are contained in $P$
    and that a measure (i.e.\ length) of the vector is maximized.
  To find the MAP, we determine a center $O$ and two vectors $\beta_1,\beta_2$ so that $O\pm \beta_1,O\pm \beta_2$ are contained in $P$
    and that a measure (i.e.\ cross product) of the vectors is maximized.
    (By this observation, our problem can be formulated as a quadratic programming; see \ref{sect:address-cp}.)\smallskip

\smallskip \noindent\textbf{II.} \mbox{  } By finding the MAP in $P$, we find an affine transformation $\sigma$ in the special linear group $SL(2)$,
    so that the area of the largest \emph{square} in $\sigma(P)$ is maximized.
    We can bring the polygon into a ``good position'' by transformation $\sigma$, to avoid almost degenerate, i.e., needle-like bodies.
    Alternatively, we may find the maximum area ellipse (MAE) and thus find $\sigma'$ so that the area of the largest \emph{circle} in $\sigma'(P)$ is maximized,
    yet the algorithms for finding the MAE are not as practical as ours.\smallskip

\smallskip \noindent\textbf{III.} \mbox{  } Because parallelograms are the simplest polygons that are centrally symmetric, the MAPs are good approximations of the maximum area centrally symmetric body (MAC) in $P$.
We may give the simplest credit to squares, but obeying only one constraint of being centrally-symmetric,
  parallelograms are simpler in the less-constraints sense and hence are more suitable for approximating the MAC.
\begin{theorem}\cite{Math-Sas-CM39,Math-Dow-AMS44}\label{theorem:approx}
For any centrally symmetric convex region $K$, the area of the MAP in $K$ is at least $\frac{2}{\pi}$ times the area of $K$.
  This bound is tight; $\frac{2}{\pi}$ is realized when $K$ is surrounded by an ellipse.
  As a corollary, computing the MAP serves as a $\frac{2}{\pi}$-approximation for computing the MAC in a convex region.
\end{theorem}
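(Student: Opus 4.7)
The plan is to handle the tight case and the general lower bound separately, and then deduce the corollary. First I would verify the tightness assertion by exploiting affine invariance: both the notion of ``parallelogram'' and the ratio of areas are preserved under invertible linear maps, so we may reduce to the case where $K$ is the unit disk. Any inscribed centrally symmetric parallelogram then has vertices $\pm u,\pm v$ with $|u|,|v|\le 1$ and area $2|u\times v|\le 2$, the maximum $2$ being attained by the inscribed square; dividing by the disk area $\pi$ gives the ratio $2/\pi$, which transfers to every ellipse.

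For the lower bound, place the symmetry center at the origin and write $\partial K$ in polar form as $r(\theta)$ with $r(\theta+\pi)=r(\theta)$. Then $\mathrm{area}(K)=\int_0^{\pi} r(\theta)^2\,d\theta$, while the inscribed parallelogram with vertices $\pm r(\alpha)e^{i\alpha},\pm r(\beta)e^{i\beta}$ has area $2\,r(\alpha)\,r(\beta)\,|\sin(\beta-\alpha)|$. The theorem therefore reduces to establishing
\[
   \max_{\alpha,\beta}\; r(\alpha)\,r(\beta)\,|\sin(\beta-\alpha)|\;\ge\;\tfrac{1}{\pi}\!\int_0^{\pi}\! r(\theta)^2\,d\theta,
\]
with equality iff $K$ is an ellipse. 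This inequality is the crux of the proof.

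The main obstacle is that the naive enclosing-parallelogram bound is too weak. By local maximality of the MAP at vertices $\pm A,\pm B$, the supporting lines of $K$ at these four points must be parallel to the opposite diagonal (otherwise a small rotation of $A$ or $B$ along $\partial K$ would enlarge the cross product), so $K$ is contained in a parallelogram of area exactly twice that of the MAP. This only yields the ratio $1/2$, not $2/\pi$. To reach the sharp constant, I would follow the Sas--Dowker averaging approach: normalize affinely so that the MAP becomes the square with vertices $(\pm 1,\pm 1)$, use the MAP's maximality to obtain a direction-dependent upper envelope on $r(\theta)$, and integrate that envelope against $d\theta$; the resulting isoperimetric-type inequality is saturated precisely by ellipses, which is where all the work lies.

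The corollary is a one-liner. Since $\mathrm{MAC}\subseteq K$ and every parallelogram inscribed in $\mathrm{MAC}$ is also inscribed in $K$, we have $\mathrm{area}(\mathrm{MAP}\text{ in }K)\ge\mathrm{area}(\mathrm{MAP}\text{ in }\mathrm{MAC})$; applying the main inequality to the centrally symmetric body $\mathrm{MAC}$ then gives $\mathrm{area}(\mathrm{MAP}\text{ in }K)\ge (2/\pi)\,\mathrm{area}(\mathrm{MAC})$, which is the claimed approximation guarantee.
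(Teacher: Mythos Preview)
The paper does not actually prove this theorem; it is quoted from the literature, and the only argument the paper supplies (in Appendix~\ref{sect:ratio}) is a two-line reduction to two classical results: Sas's theorem that among all planar convex bodies the ellipse minimizes the ratio $(\text{max inscribed }n\text{-gon})/\mathrm{area}(K)$, and Dowker's theorem that for centrally-symmetric $K$ and even $n$ the maximal inscribed $n$-gon may be taken centrally-symmetric. For $n=4$ a centrally-symmetric quadrilateral is exactly a parallelogram, so the MAP coincides with the maximal inscribed $4$-gon, and Sas's extremality of the ellipse then gives the bound $2/\pi$ immediately.

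Your route is genuinely different: you attempt a self-contained proof rather than invoking these two black boxes. Your treatment of the tight case (affine reduction to the disk) and of the corollary (monotonicity of the MAP under inclusion, applied to $\mathrm{MAC}\subseteq K$) are both correct and complete. The difference in philosophy is that the paper outsources all the analytic work to Sas and Dowker, whereas you try to carry out the Sas-type extremal argument by hand for $n=4$.

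That said, there is a real gap in your proposal precisely at the point you flag as ``where all the work lies.'' After normalizing the MAP to the square $(\pm 1,\pm 1)$, you assert that maximality of the MAP yields a direction-dependent upper envelope on $r(\theta)$ whose integral is at most $\pi$, but you do not produce this envelope or justify the integral bound. The maximality condition you actually derive (supporting lines at the MAP vertices parallel to the opposite diagonal) only traps $K$ in a parallelogram of twice the MAP's area, which as you note gives $1/2$, not $2/\pi$. Upgrading to the sharp constant requires a genuinely global comparison (for the disk, $r\equiv 1$ and $\int_0^\pi r^2=\pi$ is exactly the extremal value), and nothing in your sketch shows why an arbitrary convex $r(\theta)$ with MAP equal to that square must satisfy $\int_0^\pi r(\theta)^2\,d\theta\le\pi$. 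Absent that inequality, the lower-bound half of the theorem is unproved. If you want a self-contained argument, you would effectively have to reprove Sas's theorem for quadrilaterals; the paper's approach sidesteps this by citing it.
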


See the literature of the above theorem in \ref{sect:ratio}.\smallskip

\smallskip \noindent\textbf{IV.} \mbox{  }  By finding the MAPs, we solve the easiest case of the Heilbronn triangle problem, a minimax problem in discrete geometry. This is discussed in subsection~\ref{subsect:heil}.

\medskip \noindent\textbf{Our results.}
   A locally maximal area parallelograms (LMAP) is a parallelogram in $P$ that is no smaller than all its sufficiently close parallelograms that lie in $P$ (see Definition~\ref{def:LMAP}). Our results are summarized in the following.

   \smallskip 1.  We prove a group of $\Theta(n^2)$ nontrivial constraints on the corners of the LMAPs.
      These constraints are of independent interest in convex geometry --
       as shown in our subsequent manuscript \cite{followup-MAP-arxiv15}, an interesting geometric structure associated with the given convex polygon $P$ can be defined from these constraints.

   \smallskip 2. By utilizing these constraints, we design an $O(n^2)$ time algorithm for computing all the LMAPs (and thus computing all the MAPs).

   \smallskip 3. As an auxiliary result, we prove that the LMAPs are pairwise interleaving (see Definition~\ref{def:interleaving}),
    which implies that the number of LMAPs is only $O(n)$.

\subsection{Related work and some comparisons}\label{subsect:related}

As a typical example of the polygon inclusion problem, the maximum area / perimeter $k$-gon in a convex polygon has been studied intensively.
The first algorithm for computing it runs in $O(kn\log n+n \log^2n)$ time \cite{Kgon-extremal-STOC82}.
A factor of $\log n$ is saved later by the \emph{matrix search technique} \cite{Kgon-Matrixsearch-Algc87} (see also \cite{Kgon-Klink-DCG94} and \cite{Kgon-Klink-soda95}).
Chandran and Mount \cite{Triangle-correct-IJCGA02} presented a linear time algorithm for the special case of $k=3$, but only for the maximum area case.
An alternative algorithm is found later by Jin \cite{Triangle-ultimate-Arxiv}, which also runs in linear time but is more elegant and simpler.
An earlier linear time algorithm claimed in \cite{Triangle-wrong-FOCS79} is wrong due to a report of \cite{Triangle-reportwrong-Arxiv}.
For the maximum perimeter triangle, the best known algorithm is given in \cite{Kgon-extremal-STOC82}, which runs in $O(n\log n)$ time.
Note that the corners of the maximum $4$-gon (or any $k$-gon) can be restricted to the vertices of the given polygon $P$,
  whereas the corners of the MAP can only be restricted to the boundary of $P$ (proved in Lemma~\ref{lemma:local-maximal-non-slidable} below),
    the MAP problem might be inherently more difficult than the maximum $4$-gon problem.
    So, it could be acceptable that our algorithm is not as efficient as the one for computing the maximum $4$-gon.

In fact, our algorithm is as efficient as the best algorithms for many related problems, e.g.\
    the $O(n^3)$ time algorithm for computing the maximum rectangles \cite{Othershape-rect-EuroCG14},
      the $O(n^2\log n)$ time one for the maximum similar copy of a given triangle \cite{Placement-ST-CGTA94},
      and the $O(n^2)$ time one for the maximum inscribed squares / equilateral triangles \cite{Othershape-square-Allerton87}.

\smallskip Researchers also search for the extremal shapes enclosing a given convex polygon.
Interestingly, both the algorithm in \cite{Triangle-correct-IJCGA02} and the one in \cite{Triangle-ultimate-Arxiv} mentioned above can also find the minimum area enclosing triangle in linear time.
Yet the first linear time algorithm for computing such a triangle is given in \cite{Triangle-EnclArea-JA86}.
The minimum perimeter triangle, minimum area rectangle, and minimum area parallelogram, enclosing a convex polygon, can also be found in $O(n)$ time \cite{Triangle-EnclPeri-JCDCG02,Classic-rcaliper-MEL83,3dorEncl-areaParallelogram-SOCG95}.
The minimum perimeter enclosing parallelogram is partially solved (using an unproved conjecture) in \cite{3dorEncl-periParallelogram-cccg10} .

The \emph{rotating-caliper technique} \cite{Classic-rcaliper-MEL83} can be applied in many variants of the enclosing problems, including the case of triangle, rectangle, and parallelogram.
    For different variants of the inclusion problems, however, it seems different techniques must be employed.
Usually, the inclusion problems are more difficult than their enclosing counterparts \cite{Othershape-square-Allerton87}.
    So it is not a shame that our algorithm is slower than its corresponding one in \cite{3dorEncl-areaParallelogram-SOCG95}; the latter, though fast, is essentially easier.

\medskip The extremal ellipsoid problems ask the maximum enclosed ellipse (or ellipsoid in $\mathbb{R}^d$ space) of $P$ (defined by $n$ half-plane boundaries) and
          the minimum enclosing ellipse (or ellipsoid in $\mathbb{R}^d$) of $P$ (defined by $n$ vertices).
        They are \emph{LP-type problems} and can be solved in $O(n)$ time for fixed $d$ \cite{ellipse-LPtype-JA96,ellipse-LPtype-SCG92}.
          Alternatively, they can be formulated as \emph{convex programming problems} and thus be solved in $O(n)$ time \cite{ellipse-spanconvex-SCG92,ellipse-ipconvex-tr,ellipse-book-co}.

    Is our problem also a convex programming problem?
  As the MAE is unique (\cite{ellipse-john-CAV48,ellipse-book-co}),
    whereas there could be multiple LMAPs (for instance, there are five LMAPs if $P$ is a regular pentagon; see \ref{sect:many-pentagon}),
      which are locally optimal solutions,
      the answer is probably no.
    In \ref{sect:address-cp}, we show a quadratic programming formulation (\ref{eqn:plausibleCP}) of the MAP problem
      that looks like but is \textbf{not} a convex programming.

\subsection{More related work (maximum parallelepiped in convex bodies)}

The maximum volume parallelepiped in convex bodies has been studied extensively in convex geometry.
Assume $C$ is a convex body in $\mathbb{R}^d$ and $Q$ is the maximum volume parallelepiped in $C$.
\cite{Math-approx-GD98} proved that the concentric scaling of $Q$ by a factor of $2d-1$ covers $C$; and
\cite{Math-approx-JDG04} proved that there exists one scaling of $Q$ by a factor of $d$ which covers $C$.
A closely related research is the maximum volume ellipsoid (MVE) in convex bodies.
In his seminal paper \cite{ellipse-john-CAV48}, Fritz John proved that inside every convex body there is a unique MVE, and the concentric expansion of the MVE by a factor of $d$ contains the convex body.
\cite{Math-3dinscribed-french54} proved that any convex body in $\mathbb{R}^3$ admits an inscribed parallelepiped.
This is in general not true in $\mathbb{R}^d$ for $d\geq 5$; see \cite{Math-survey-GM97}.
For any planar convex closed curve $C$, there is a parallelogram inscribed in $C$ whose area is at least $\frac{1}{2}$ times the area surrounded by $C$ \cite{Math-PinC-AMM60}.
A similar result in $\mathbb{R}^3$ is proved in \cite{Math-3dinscribed-french52}, and the ratio is $3!/3^3$; see also \cite{Math-PinS-DM00,Math-PinS-DCG99}.
See \cite{Math-PinE-AMM07,Math-SafeDomain-EJP04} for other interesting results.

\subsection{Motivation from the Heilbronn triangle problem}\label{subsect:heil}

The Heilbronn triangle problem (\cite{Heilbronn-convexhull-LMS71}) is a classic minimax problem in discrete geometry,
    which concerns placing $m$ points in a convex region, in order to avoid small triangles spanned by these points.
Polynomial algorithms were given for finding considerably good placements (\cite{Heilbronn-Lefmann-JC00}, \cite{Heilbronn-Lefmann3d-JC02}).
On finding the optimal placement, the following lemma states that the simplest nontrivial case, namely $m=4$, reduces to finding the MAP and the maximum area triangle (MAT) in the region.
As a corollary, if the given region is a convex polygon with $n$ vertices, the case $m=4$ can be solved in $O(n^2)$ time by combining our algorithm with the algorithm for the MAT in \cite{Triangle-ultimate-Arxiv}.

\begin{lemma}[Reduction]
  Solving the Heilbronn triangle problem in a convex region $K$ for $m=4$ reduces to finding the MAP and the MAT in $K$.
\end{lemma}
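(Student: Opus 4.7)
\medskip\noindent\textbf{Proof proposal.}
The plan is to classify the four points $P_1,P_2,P_3,P_4 \in K$ by the combinatorial type of their convex hull and then bound the minimum triangle area in each case by either half the area of an inscribed parallelogram or one third the area of an inscribed triangle.

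First I would distinguish two cases: either the convex hull of the four points is a (convex) quadrilateral, or it is a triangle containing the fourth point in its interior. In the quadrilateral case, label the points $A,B,C,D$ in cyclic order and write $S = \mathrm{area}(ABCD)$. Splitting $ABCD$ along the diagonal $AC$ gives $\mathrm{area}(ABC) + \mathrm{area}(ACD) = S$, and splitting along $BD$ gives $\mathrm{area}(ABD)+\mathrm{area}(BCD) = S$. Therefore $\min(\mathrm{area}(ABC),\mathrm{area}(ACD))\le S/2$ and likewise for the other diagonal, so the minimum of the four triangle areas is at most $S/2$. Equality forces both diagonals to bisect the area of $ABCD$, which is equivalent to the diagonals bisecting each other, and hence equivalent to $ABCD$ being a parallelogram. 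Thus within this case the Heilbronn value equals $\tfrac{1}{2}\cdot \mathrm{area}(\mathrm{MAP}(K))$.

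In the triangle case, let $T$ be the outer triangle with vertices among the four points, and let $P$ be the interior point. The three subtriangles with apex $P$ partition $T$, so their areas sum to $\mathrm{area}(T)$, giving $\min\le \mathrm{area}(T)/3$, with equality iff $P$ is the centroid of $T$. Since $K$ is convex, any triangle inscribed in $K$ has its centroid in $K$, so the maximum over such configurations is exactly $\tfrac{1}{3}\cdot\mathrm{area}(\mathrm{MAT}(K))$. Combining the two cases, the Heilbronn value for $m=4$ equals $\max\!\bigl(\tfrac{1}{2}\mathrm{area}(\mathrm{MAP}(K)),\,\tfrac{1}{3}\mathrm{area}(\mathrm{MAT}(K))\bigr)$, so any algorithm that computes the MAP and the MAT in $K$ also solves the $m=4$ Heilbronn problem.

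The only subtle point, and the step I expect to require a brief justification, is the equality characterization in the quadrilateral case: one needs that a convex quadrilateral whose two diagonals both bisect its area must be a parallelogram. This is where I would be most careful; the other inequalities are immediate from the area decompositions, and degenerate configurations (three collinear points) may be ignored because they make some triangle have area zero and are therefore never optimal.
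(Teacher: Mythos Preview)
Your triangle case is fine and matches the paper. The quadrilateral case, however, has a real gap: from $\min_i \mathrm{area}(\triangle_i) \le S/2$ with equality exactly for parallelograms, you cannot conclude that the supremum of $\min_i \mathrm{area}(\triangle_i)$ over all convex quadrilaterals in $K$ equals $p/2$. The quantity $S$ is the area of the particular quadrilateral $ABCD$, which for a non-parallelogram may well exceed the MAP area $p$; so the bound $S/2$ need not be below $p/2$, and ``equality only for parallelograms'' says nothing about whether some non-parallelogram with large $S$ could achieve a min-triangle value strictly above $p/2$. You have characterized when your inequality is tight, not where the objective is maximized; the ``Thus'' in your last sentence of that paragraph does not follow.

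The paper closes this gap by a direct construction rather than an equality analysis. Assuming the smallest of the four triangles is $ABC$, it draws the line through $A$ parallel to $BC$ and the line through $C$ parallel to $AB$, and lets $D'$ be their intersection. Because $ABC$ has the smallest area, $D$ must lie on the far side of each of these lines from $B$, which forces $D'\in\triangle ACD\subseteq K$. Then $ABCD'$ is a parallelogram contained in $K$ with $\mathrm{area}(ABCD')=2\,\mathrm{area}(ABC)$, giving $\mathrm{area}(ABC)\le p/2$ immediately for every convex-quadrilateral configuration. Incidentally, the step you flagged as delicate---that both diagonals bisecting the area forces a parallelogram---is correct and easy (each condition places the midpoint of one diagonal on the line through the other, hence the diagonals bisect each other); it is not where your argument is incomplete.
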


\begin{proof}
For any points $A,B,C,D$, denote by $h(A,B,C,D)$ the area of the smallest triangle among $\triangle ABC,\triangle ABD,\triangle ACD$ and $\triangle BCD$.
When $m=4$, the Heilbronn triangle problem asks the optimal location of the points $(A,B,C,D)$ in $K$, so that $h(A,B,C,D)$ is maximum.
Let $A_1B_1C_1$ be an MAT in $K$, whose area equals $t$.
  Let $A_2B_2C_2D_2$ be an MAP in $K$, whose area equals $p$.
  We state three observations.
\begin{enumerate}
\item \emph{There is a location of $(A,B,C,D)$ such that $h(A,B,C,D)=t/3$.}\\
    Proof: Let $(A,B,C)=(A_1,B_1,C_1)$ and $D$ be the centroid of $\triangle A_1B_1C_1$.\smallskip
\item \emph{There is a location of $(A,B,C,D)$ such that $h(A,B,C,D)=p/2$.}\\
    Proof: Let $(A,B,C,D)=(A_2,B_2,C_2,D_2)$. \smallskip
\item \emph{For every location of $(A,B,C,D)$, we have $h(A,B,C,D)\leq \max (t/3,p/2)$.}
\end{enumerate}

\begin{figure}[h]
    \centering \includegraphics[width=.5\textwidth]{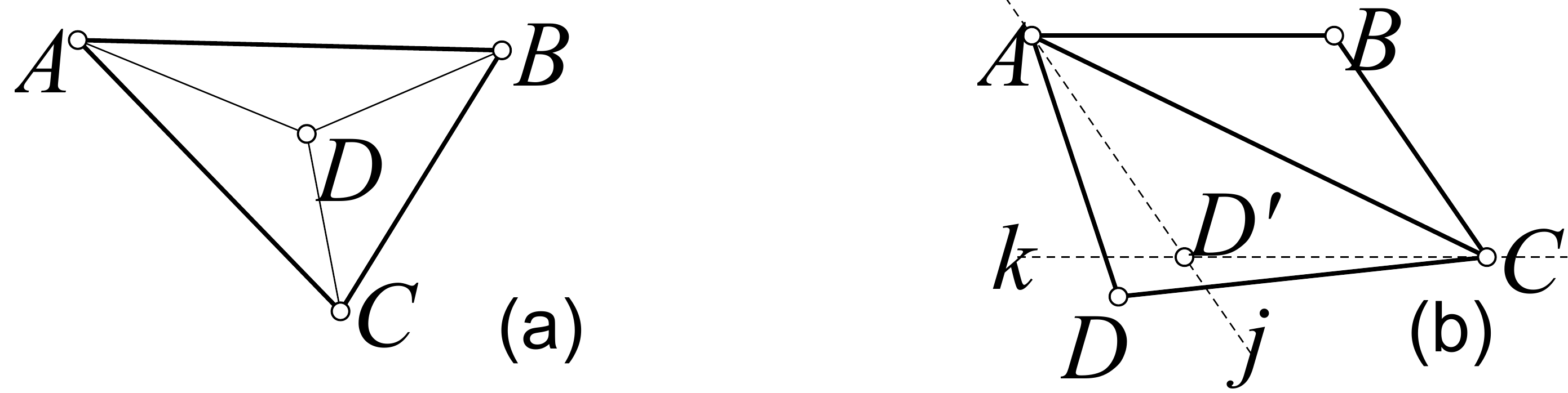}
    \caption{Illustration of Lemma Reduction.}\label{fig:reduction}
\end{figure}

To prove the last observation, we discuss two cases:
\begin{enumerate}
\item[Case~1] the convex hull of $A,B,C,D$ is a triangle. See Figure~\ref{fig:reduction}~(a).
    Without loss of generality, assume $D\in \triangle ABC$.
   Then, $h(A,B,C,D)=\min\{Area(ABD), Area(BCD)$, $Area(CAD)\}\leq Area(ABC)/3 \leq t/3$.
\item[Case~2] the convex hull of $A,B,C,D$ is a quadrilateral. See Figure~\ref{fig:reduction}~(b).
    Without loss of generality, assume $h(A,B,C,D)=\triangle ABC$.
  Draw a line $j$ at $A$ that is parallel to $BC$. Draw a line $k$ at $C$ that is parallel to $AB$.
    Denote by $D'$ the intersection of $j$ and $k$.  We claim that  $D'\in \triangle ACD$.
        The proof is as follows. First, observe that $D,B$ do not lie on the same side of $j$, since otherwise $BCD$ would have a smaller area than $ABC$.
	Similarly, $D,A$ do not lie on the same side of $k$. These together imply that $D'\in \triangle ACD$.
    Therefore, $ABCD'$ is a parallelogram that lies in $K$.
    Therefore, $h(A,B,C,D)=Area(\triangle ABC)=Area(ABCD')/2\leq p/2$.
\end{enumerate}

Altogether, $\max\{h(A,B,C,D)\}=\max(t/3,p/2)$. Moreover, by the above analysis, computing an optimal location reduces to finding the MAT and MAP in $K$.
\end{proof}

\section{Preliminaries and Technique Overview}\label{sect:preliminiary}

Denote the boundary of $P$ by $\partial P$.
Let $e_1,\ldots,e_n$ be a clockwise enumeration of the edges of $P$.
Let $v_1,\ldots,v_n$ be the vertices of $P$ such that $e_i=(v_i,v_{i+1})$ where $v_{n+1}=v_1$.
Let $|AB|$ denote the distance between any pair of points $A$ and $B$.

\medskip \noindent\textbf{Key assumptions.}
We regard $P$ as a \textbf{compact} set; namely, it contains its boundary and interior.
If a point is said lying in $P$, it may lie in $P$'s boundary.
We regard all edges of $P$ as \textbf{open} segments; namely, they do not contain their endpoints.
If a point is said lying in $e_i$, it does not lie on any endpoint of $e_i$.
For simplicity of discussion, we assume that all edges of $P$ are \textbf{pairwise-nonparallel}.
Unless otherwise stated, an edge or a vertex refers to an edge or a vertex of $P$, respectively.

\begin{definition}\label{def:locallymaximal}
We say a parallelogram lies in $P$ if all its corners lie in $P$.
Consider any parallelogram $Q=A_0A_1A_2A_3$ that lies in $P$.
We say $Q$ is \emph{maximum}, if it has the largest area among all parallelograms that lie in $P$.
We say $Q$ is \emph{locally maximal}, if it has an area larger than or equal to all its sufficiently close parallelograms that lie in $P$;
formally, if $\exists \delta>0 \text{ such that }\forall Q'\in N_\delta(Q), Area(Q)\geq Area(Q'), \text{ where}$
\[N_\delta(A_0A_1A_2A_3)=\{B_0B_1B_2B_3\text{ is a parallelogram in }P\mid \forall i, |A_iB_i|<\delta\}.\]
\end{definition}

\begin{definition}[inscribed \& slidable]
A parallelogram is \emph{inscribed}, if all its corners lie in $P$'s boundary.
A parallelogram is \emph{slidable}, if it has two corners lying in the same edge of $P$.
(If corner $A$ lies in $e_i$ whereas corner $B$ lies on some endpoint of $e_i$,
  these two corners are \textbf{not} counted as lying in the same edge, since the endpoint of $e_i$ does not belong to $e_i$.)
A parallelogram is \emph{non-slidable} if it is not slidable.
\end{definition}

\begin{lemma}\label{lemma:local-maximal-non-slidable}
1. A locally maximal parallelogram must be inscribed.
2. For any inscribed slidable parallelogram, there is an inscribed non-slidable parallelogram with the same area.
(The proof of this lemma is trivial and deferred to \ref{sect:omit-slidable}.)
\end{lemma}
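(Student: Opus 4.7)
\emph{Plan for Part~1.} I would argue by contradiction. Suppose $Q=A_0A_1A_2A_3$ is locally maximal but, without loss of generality, $A_0$ lies in the interior of $P$. Using the parallelogram identity $A_0 = A_1+A_3-A_2$, I perturb $A_1 \mapsto A_1+\epsilon\xi$ while holding $A_2$ and $A_3$ fixed; this forces $A_0$ to translate by the same $\epsilon\xi$, and since $A_0$ is strictly interior, $A_0+\epsilon\xi \in P$ for small $\epsilon$. I pick $\xi$ to be a feasible inward direction at $A_1$ (the edge direction along $\partial P$ if $A_1$ lies on an edge, any vector in the inward tangent cone if $A_1$ is at a vertex, or an arbitrary direction if $A_1$ is interior), so that $A_1+\epsilon\xi$ also stays in $P$. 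A first-order expansion of the signed area $(A_1-A_0)\times(A_3-A_0-\epsilon\xi)$ shows the change is proportional to $(A_1-A_0)\times\xi$, which is nonzero whenever $\xi$ is not parallel to the side $A_0A_1$. The crucial geometric fact is that such a $\xi$ always exists: if the admissible $\xi$ (say the edge direction at $A_1$) were parallel to $A_0A_1$, then the supporting line of $P$ at $A_1$ would pass through $A_0$, contradicting $A_0 \in \operatorname{int} P$. Choosing the sign of $\xi$ appropriately then produces a strictly larger parallelogram inside $P$, contradicting local maximality.

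\emph{Plan for Part~2.} Given inscribed slidable $Q$ with $A_0,A_1 \in e_i$, the side $A_0A_1$ lies along $e_i$, so the opposite side $A_2A_3$ is parallel to $e_i$. By pairwise non-parallelism of the edges, $A_2$ and $A_3$ cannot lie on a single common edge. My plan is to consider the two-dimensional family of inscribed parallelograms obtained by sliding $A_1$ along $e_i$ and $A_2$ along its host edge, with $A_3$ and $A_0 = A_1+A_3-A_2$ then determined (and $A_0$ constrained to stay on $e_i$, $A_3$ on its host edge). On this family, the area is a smooth function whose level sets are one-dimensional curves. I would follow the level curve of area through $Q$ until it exits the family. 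If it exits through the boundary where $A_0$ or $A_1$ reaches an endpoint of $e_i$ (a vertex of $P$), the resulting parallelogram is inscribed, has the same area as $Q$, and by the paper's convention is non-slidable with respect to the $(A_0,A_1)$ pair. If instead the curve exits through an $A_2$- or $A_3$-at-vertex boundary, $Q$ remains slidable in $e_i$; I would then iterate, noting that the count of corner-at-vertex incidences strictly increases each time and is bounded, so the process terminates at a non-slidable inscribed parallelogram.

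\emph{Main obstacle.} Part~2 is by far the subtler piece: the goal is a non-slidable inscribed parallelogram of exactly the \emph{same} area, not merely a larger one. Sliding alone preserves area but destroys inscribed-ness, since pairwise non-parallelism forces $A_2, A_3$ off $\partial P$ into the interior as soon as we translate; conversely, any naive re-inscription by scaling increases the area. The level-curve approach in the 2-d inscribed family circumvents this, but one must verify that the curve terminates at a ``good'' boundary (where $A_0$ or $A_1$ reaches a vertex of $e_i$) rather than somewhere that leaves the parallelogram slidable. I expect this to be handled by the termination argument above together with careful case analysis of degenerate configurations (e.g., corners already at vertices, or edges meeting at extreme angles).
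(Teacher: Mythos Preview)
Your Part~1 argument is sound and essentially parallel to the paper's: both perturb the parallelogram using the interior corner to absorb the movement. The paper translates the side $A_2A_3$ (with $A_3$ interior) rather than moving $A_1$ and letting $A_0$ follow, but the mechanism is the same and your first-order area computation is a perfectly good substitute.

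Part~2, however, is built on a misconception. Your ``main obstacle'' is not an obstacle at all. If $A_0,A_1$ both lie in the open edge $e_i$, you may slide the \emph{single side} $A_0A_1$ along $e_i$ while leaving $A_2$ and $A_3$ \emph{completely fixed}. Because $A_2A_3$ is parallel to and congruent with $A_0A_1$, the quadrilateral $A_0'A_1'A_2A_3$ is still a parallelogram; its base and height are unchanged, so the area is preserved exactly; and $A_2,A_3$ never leave $\partial P$ since they never move. Slide until one of $A_0',A_1'$ reaches a vertex of $e_i$: that corner is now on a vertex (hence not in the open edge), so this pair is no longer slidable. This is the paper's entire proof of Part~2.

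Your level-curve scheme in a two-parameter inscribed family is not wrong in spirit, but it is fighting a difficulty that does not exist. The sentence ``sliding alone preserves area but destroys inscribed-ness, since pairwise non-parallelism forces $A_2,A_3$ off $\partial P$'' reveals the error: you are implicitly translating the \emph{whole} parallelogram rather than shearing it by moving only the $A_0A_1$ side. Once you drop that assumption, Part~2 becomes a one-line observation, and the iteration/termination machinery is unnecessary.
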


\begin{definition}[MAP \& LMAP]\label{def:LMAP}
A parallelogram is an \emph{MAP} (Maximum Area Parallelogram) if it is maximum and non-slidable,
and is an \emph{LMAP} (Locally Maximal Area Parallelogram) if it is locally maximal and non-slidable
(an MAP is always an LMAP).
We safely exclude the slidable parallelograms due to Lemma~\ref{lemma:local-maximal-non-slidable}.
We have to exclude them since otherwise there could be infinitely many LMAPs and MAPs.
\end{definition}

\noindent\textbf{Unit.} A \emph{unit} of $P$ refers to an edge or a vertex of $P$. There are $2n$ units.\smallskip

\noindent\textbf{Boundary-portion.} A \emph{boundary-portion} of $P$ refers to a continuous portion of $\partial P$.\smallskip

\newcommand{\dist}{\mathsf{disprod}}
\newcommand{\el}{\ell}        
\noindent\textbf{Distance-product.} The \emph{distance-product} from a point $X$ to two lines $l,l'$,
  denoted by $\dist_{l,l'}(X)$, is defined to be $d_l(X)\cdot d_{l'}(X)$, where $d_l(X)$ denotes the distance from $X$ to $l$.
In this paper, we mainly focus on the case where $l,l'$ are extended lines of the edges of $P$.
For convenience, denote the extended line of $e_i$ by $\el_i$.

\subsection{Technique overview \& Organization of the paper}\label{subsect:techover}

By Lemma~\ref{lemma:local-maximal-non-slidable}, the LMAPs must be inscribed in $P$.
However, there are infinite many inscribed parallelograms:
  Given two distinct points $A,A'$ in $\partial P$, if we find the other chord of $P$ that is a translate of chord $\overline{AA'}$, we obtain an inscribed parallelogram. In order to find the LMAPs, we must prove more constraints of the LMAPs.

\smallskip \noindent \textbf{A group of constraints (preview).}\mbox{  }
  We find out $\Theta(n^2)$ boundary-portions $\{\zeta(u,u')\}\mid u,u'\text{ are two units}\}$ so that the following hold:
  For every corner $A_i$ of every LMAP $Q=A_0A_1A_2A_3$, if $A_{i+1},A_{i-1}$ lie in units $u,u'$ respectively
    (throughout, assume that $A_0,A_1,A_2,A_3$ lie in clockwise order, and subscripts of $A$ taken modulo 4), then $A_i$ is restricted to $\zeta(u,u')$.
    In other words, we find a range $\zeta(u,u')$ for bounding $A_i$ given the units $u$ and $u'$ containing $A_i$'s neighboring corners.

How do we define $\zeta(u,u')$ is crucial and is explained in the following.

\medskip \noindent \textbf{On defining $\zeta(u,u')$.}\mbox{  } Some observations are required for this definition.
Section~\ref{sect:Z} proves that for every pair $(\el_i,\el_j)$, in domain $P$, function $\dist_{\el_i,\el_j}()$ achieves maximum value at a unique point (denoted by $Z_i^j$) and (i) this function is unimodal in the boundary-portion connecting $e_i,e_j$ and containing $Z_i^j$.
  Section~\ref{sect:identity} proves that given two nonparallel lines $l,l'$ and two points $X,X'$, only one parallelogram has a pair of opposite corners lying on $l,l'$ and the other
    two corners lying at $X,X'$, and more importantly, (ii) its area is proportional to $|\dist_{l,l'}(X)-\dist_{l,l'}(X')|$.

\smallskip Combining these observations, we define (deduce) $\zeta(u,u')$ in Section~\ref{sect:clamping} using the following idea.
    Assume parallelogram $Q=A_0A_1A_2A_3$ is inscribed and $A_{i+1},A_{i-1}$ lie in units $u,u'$ respectively.
        We aim to find some (bad) (as wide as possible) boundary-portion(s) $\alpha$ between $u'$ and $u$ (clockwise), so that $Q$ cannot be an LMAP as long as $A_i\in \alpha$.
      If so, by the contrapositive, we can get $A_i\notin \alpha$ when $Q$ is an LMAP, and hence we can define the complementary portion of $\alpha$ to be $\zeta(u,u')$.

    The challenge lies in arguing that $Q$ is not an LMAP.
      Choose $l,l'$ from $\el_1,\ldots,\el_n$ which respectively contain $u,u'$ (the choice is not unique if $u$ or $u'$ is a vertex).
    We can construct a parallelogram $Q'$ from $Q$ (and sufficiently close to $Q$) by fixing $A_{i+2}$ and (slightly) \emph{adjusting $A_i$ along $\partial P$},
       while restricting the other two corners on $l,l'$ respectively.
    Applying (ii), $Area(Q')-Area(Q)$ is proportional to the change of $\dist_{l,l'}(A_{i})$.
    Then, by analyzing the change $\dist_{l,l'}(A_{i})$ with respect to the movement of $A_i$ as depicted in (i),
      we might be able to deduce that $Area(Q')>Area(Q)$ and thus conclude that $Q$ is not locally maximal and hence not an LMAP.

\newcommand{\block}{\mathsf{block}}
\medskip\noindent\textbf{Algorithm.}\mbox{  }
We try every pair of units $(u,u')$ as the units containing two opposite corners of an LMAP,
  and try every vertex lying in $\zeta(u,u')$ as the third corner (the case where the third corner is not on a vertex of $P$ is handled by a symmetric algorithm) and then compute the last corner.
   The running time is bounded by $O(n^2)$ using monotonicity properties of $\{\zeta(u,u')\}$.
   The algorithms are given in section~\ref{sect:algo}.

\subsection{More basic notions and frequently applied notations}

\begin{definition}[\textbf{``Chasing''}]
Edge $e_i$ is \emph{chasing} $e_j$, denoted by $e_i\prec e_j$,
  if the intersection of $\el_i$ and $\el_j$ lies between $e_i$ and $e_j$ clockwise.
For example, in Figure~\ref{fig:chasing}, $e_1$ is chasing $e_2$ and $e_3$, whereas $e_4,\ldots,e_7$ are chasing $e_1$.
By pairwise-nonparallel assumption of edges, for any pair of edges, exactly one of them is chasing the other.
\end{definition}

\begin{figure}[b]
\begin{minipage}[b]{.45\textwidth}
\centering\includegraphics[width=.32\textwidth]{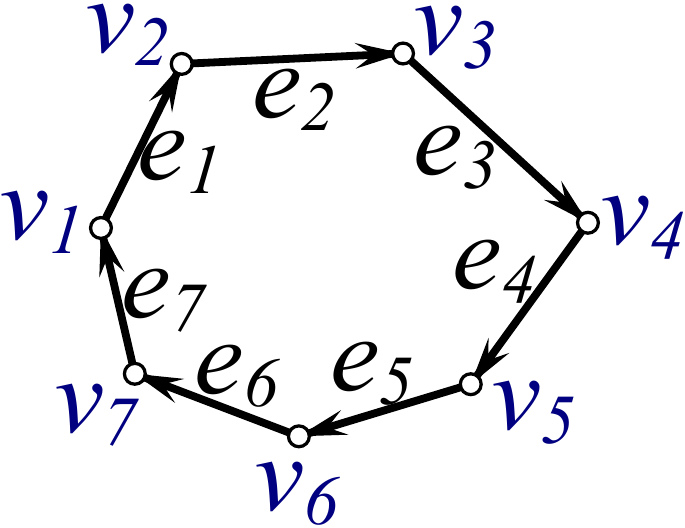}
\caption{Chasing relation}\label{fig:chasing}
\end{minipage}
\begin{minipage}[b]{.45\textwidth}
\centering\includegraphics[width=.35\textwidth]{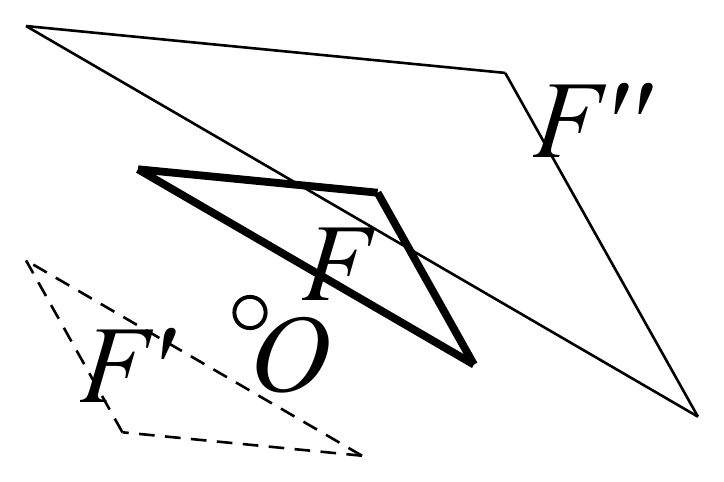}
\caption{Reflect \& Scale}\label{fig:scale-reflect}
\end{minipage}
\end{figure}

The \textbf{\emph{reflection}} and \textbf{\emph{scaling}} of a figure $F$ with respect to a point $O$ is defined in a standard manner:
The \emph{reflection} is the figure $F'$ which is congruent to $F$ and is centrally-symmetric to $F$ with respect to $O$,
and the \emph{$k$-scaling} is the figure $F''$ which contains point $X$ if and only if $F$ contains $(X-O)/k+O$. See Figure~\ref{fig:scale-reflect}.

We consider every boundary-portion \emph{directed} and its direction conforms with the \emph{clockwise order} of $\partial P$.
Its two endpoints are referred to as its \emph{starting and terminal points} in the way that conforms with the clockwise order.
Given points $X,X'$ on $\partial P$, we will pass through a boundary-portion if we travel along $\partial P$ in clockwise from $X$ to $X'$;
  its endpoints-inclusive and endpoints-exclusive versions are denoted by $[X\circlearrowright X']$ and $(X\circlearrowright X')$.
Note that $[X\circlearrowright X']$ only contains the single point $X$ when $X=X'$.
Given points $A,B$ on a boundary-portion $\rho$, we state $A<_\rho B$ if $A$ appears earlier than $B$ traveling along $\rho$. Denote by $A\leq_\rho B$ if $A=B$ or $A<_\rho B$.

\newcommand{\M}{\mathsf{M}}   
\newcommand{\D}{\mathsf{D}}
\newcommand{\I}{\mathsf{I}}
Denote by $e_i\preceq e_j$ if $e_i=e_j$ or $e_i\prec e_j$.
Let $\M(A,B)$ denote the mid point of points $A$ and $B$.
For two edges $e_i$ and $e_j$, we denote by $\I_{i,j}$ the intersection of $\el_i$ and $\el_j$.
Denote by $\D_i$ the unique vertex with largest distance to $\el_i$.
    The uniqueness follows from the pairwise-nonparallel assumption of edges.

\section{The distance-product function and the $Z$-points}\label{sect:Z}

In this section, we show that function $\dist_{l,l'}(X)$ enjoys many interesting properties, especially when we select two extended lines $\el_i,\el_j$ of the edges of $P$ as $l,l'$.

\begin{lemma}[\textbf{Strict concavity of $\dist_{l,l'}$}]\label{lemma:dist_concave}
\begin{enumerate}
\item Given nonparallel lines $l,l'$ in the plane. Assume points $B,B'$ lie on $l,l'$, respectively, and neither of them lie on the intersection of $l,l'$.
    Then, $\dist_{l,l'}()$ is strictly concave on $\overline{BB'}$ and maximized at $\M(B,B')$.
\item Assume moreover there are two \textbf{distinct} points $X,X'$ in $\overline{BB'}$ such that $B,X,X',B'$ lie in this order.
Then, the following hold.\\
(a) We have $|BX'|\leq \frac{1}{2}|BB'|$ if $\dist_{l,l'}()$ is maximized at $X'$ in $\overline{XX'}$.\\
(b) The signs of $|BX|-|B'X'|$ and $\dist_{l,l'}(X)-\dist_{l,l'}(X')$ are the same.
\end{enumerate}
\end{lemma}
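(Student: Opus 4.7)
The plan is to reduce all three claims to elementary facts about a single one-variable quadratic, by linearly parametrizing the segment $\overline{BB'}$. I would set $P(t) := (1-t)B + tB'$ for $t \in [0,1]$, so that $|BP(t)| = t|BB'|$ and $|B'P(t)| = (1-t)|BB'|$. Since distance to a line is an affine function of position and $B\in l$, $B'\in l'$, one gets $d_l(P(t)) = t\cdot d_l(B')$ and $d_{l'}(P(t)) = (1-t)\cdot d_{l'}(B)$. The hypothesis that neither $B$ nor $B'$ lies on $l\cap l'$ guarantees $d_l(B')>0$ and $d_{l'}(B)>0$, so with $c := d_l(B')\cdot d_{l'}(B) > 0$ the distance-product along $\overline{BB'}$ takes the explicit form
\[f(t) := \dist_{l,l'}(P(t)) = c\cdot t(1-t),\]
a strictly concave parabola in $t$.

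From here Part 1 is immediate: $f''(t) = -2c < 0$ on all of $\mathbb{R}$, and the unique maximum of $f$ on $[0,1]$ is at $t=1/2$, which corresponds exactly to $\M(B,B')$. For Part 2(a), let $t_X < t_{X'}$ be the parameters of $X, X'$ and assume for contradiction that $t_{X'} > 1/2$. Using that $f$ is strictly increasing on $[0,1/2]$ and strictly decreasing on $[1/2,1]$, I would conclude either $f(1/2) > f(t_{X'})$ (when $1/2\in[t_X,t_{X'}]$) or $f(t_X) > f(t_{X'})$ (when $1/2<t_X$). Either case contradicts the assumption that $f$ is maximized at $X'$ on $\overline{XX'}$, so $t_{X'}\leq 1/2$, which is exactly $|BX'|\leq |BB'|/2$.

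For Part 2(b), I would expand $f(t_X) - f(t_{X'})$ using $f(t) = ct(1-t)$; this factors as $c(t_X - t_{X'})(1 - t_X - t_{X'})$, whose sign (using $t_X-t_{X'}<0$ and $c>0$) equals the sign of $t_X + t_{X'} - 1$. On the other hand $|BX| - |B'X'| = \bigl(t_X - (1-t_{X'})\bigr)|BB'| = (t_X + t_{X'} - 1)|BB'|$ visibly has the same sign, finishing (b). I do not anticipate any real obstacle here: once the linear parametrization turns both distances into affine functions of one parameter, the whole lemma collapses to elementary properties of the parabola $c\,t(1-t)$. The one subtle point to flag along the way is that the hypothesis ``$B, B'$ are not at $l \cap l'$'' is precisely what ensures $c>0$, which in turn makes the concavity strict rather than degenerate.
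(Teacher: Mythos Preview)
Your proposal is correct and follows essentially the same approach as the paper: parametrize $\overline{BB'}$ linearly, observe that $\dist_{l,l'}$ becomes the quadratic $k\cdot x(|BB'|-x)$ (equivalently your $c\cdot t(1-t)$), and read off all three claims from elementary properties of that parabola. The only difference is cosmetic---the paper parametrizes by arc length and uses $\sin\angle B,\sin\angle B'$ for the constants, and it dismisses Part~2 with ``easily follows from the formula,'' whereas you spell out the case analysis for (a) and the factorization for (b).
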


\begin{proof}
Suppose $X$ is a point on segment $\overline{BB'}$ and its distance to $B$ is $x$, as shown in Figure~\ref{fig:Z_basic}~(a).
Obviously, we have the following formula:
\[\dist_{l,l'}(X)= x \sin \angle B \cdot (|BB'|-x) \sin \angle B'= k\cdot x (|BB'|-x),\]
where $k$ is a constant.
Therefore, by calculus, it is strictly concave on $\overline{BB'}$ and maximized at $x=\frac{1}{2}|BB'|$.
Part~2 also easily follows from the formula.
\end{proof}

\begin{figure}[h]
  \centering \includegraphics[width=.85\textwidth]{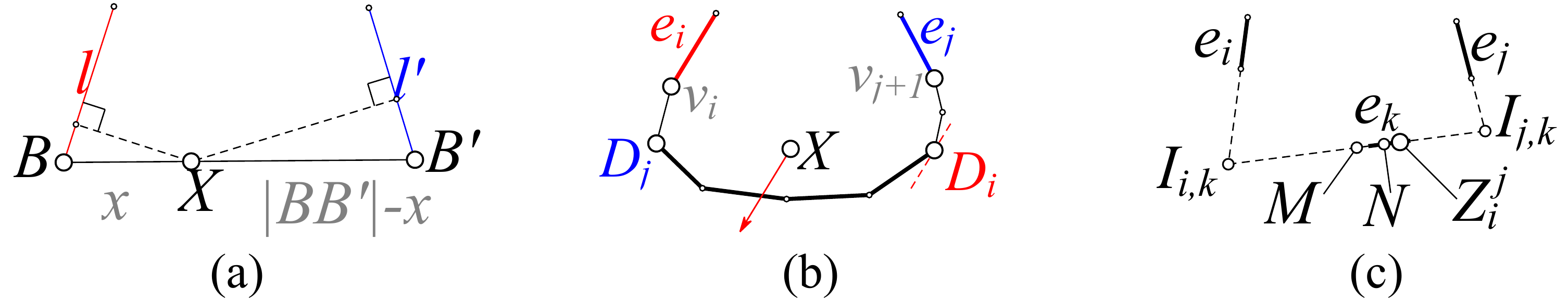}
  \caption{Illustrations of the proofs of Lemmas~\ref{lemma:dist_concave}  and \ref{lemma:dist-unique-location}}\label{fig:Z_basic}
\end{figure}

\begin{lemma}\label{lemma:dist-unique-location}
Given two edges $e_i,e_j$ such that $e_i\prec e_j$.
\begin{enumerate}
\item In domain $P$, function $\dist_{\el_i,\el_j}()$ achieves maximum value at a unique point;
    denoted as $Z_{e_i}^{e_j}$ or $Z_i^j$ henceforth.
        (The points in $\{Z_i^j\mid e_i\prec e_j\}$ are called \textbf{\emph{$Z$-points}}. All of them are lying in $\partial P$ due to the second part of this lemma).
\item Point $Z_i^j$ lies in $\partial P$. Moreover, it lies in both $[\D_i\circlearrowright\D_j]$ and $(v_{j+1}\circlearrowright v_i)$.
\item If $Z_i^j$ lies in some edge $e_k$, it lies at the mid point of $\I_{i,k}$ and $\I_{j,k}$.
\end{enumerate}
\end{lemma}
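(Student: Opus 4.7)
The plan is to view $f(X):=\dist_{\el_i,\el_j}(X)=d_{\el_i}(X)\cdot d_{\el_j}(X)$ as a quadratic polynomial on the plane: restricted to $P$ each factor is a nonnegative affine function. Computing $\nabla f=d_{\el_j}\,\nabla d_{\el_i}+d_{\el_i}\,\nabla d_{\el_j}$ and using that $\nabla d_{\el_i},\nabla d_{\el_j}$ are linearly independent unit normals (by the pairwise-nonparallel assumption), the unique planar critical point of $f$ is $\I_{i,j}$, where $f=0$. Hence $f$ has no critical point in the interior of $P$ (where both factors are strictly positive), so by compactness the maximum $M:=\max_{X\in P} f(X)$ is attained on $\partial P$. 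For uniqueness (Part~1), inside the positive quadrant $\{d_{\el_i},d_{\el_j}>0\}$ each super-level set $\{f\ge c\}$ is convex, being bounded by the hyperbola branch $uv=c$ (the epigraph of the convex function $u\mapsto c/u$). Hence the maximizer set in $P$ is convex; if it contained two distinct points, $f$ would be constant on the joining segment, forcing that segment's direction to be parallel to both $\el_i$ and $\el_j$, i.e., $\el_i\parallel\el_j$, a contradiction.

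Part~3 drops out by restricting $f$ to a single edge: under a linear parametrization of $e_k$, $f|_{e_k}$ is a quadratic in the parameter whose two roots are the parameters for $\I_{i,k}$ and $\I_{j,k}$, so an interior maximum on $e_k$ lies at their mean, namely $\M(\I_{i,k},\I_{j,k})$. The first sentence of Part~2 ($Z_i^j\in\partial P$) is already in hand. Since $f\equiv 0$ on $e_i\cup e_j$ while $M>0$, we have $Z_i^j\notin e_i\cup e_j$, so $Z_i^j$ lies in one of the two open arcs of $\partial P\setminus(e_i\cup e_j)$: the ``short'' arc $(v_{i+1}\circlearrowright v_j)$ on the side of $\I_{i,j}$, or the ``long'' arc $(v_{j+1}\circlearrowright v_i)$ opposite to $\I_{i,j}$.

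The technical heart of the proof is the full Part~2 localization. To rule out the short arc, I would use a clean ray argument: on any ray emanating from $\I_{i,j}$, both $d_{\el_i}$ and $d_{\el_j}$ vanish at $\I_{i,j}$ and are linear multiples of the ray parameter $r$, so $f$ restricted to such a ray equals $c\cdot r^2$, strictly increasing in $r$. Given a short-arc candidate $X$, the ray from $\I_{i,j}$ through $X$ crosses $P$ and exits $\partial P$ at a long-arc point $X'$ with a strictly larger $r$-value, hence $f(X')>f(X)$, excluding $X$. For $Z_i^j\in[\D_i\circlearrowright\D_j]$, I would exploit the unimodality of each $d_{\el_\bullet}$ along $\partial P$ (each attains its unique boundary maximum at $\D_\bullet$): on any portion of the long arc outside $[\D_i\circlearrowright\D_j]$, one of $d_{\el_i}$ and $d_{\el_j}$ is still strictly moving toward its maximum at $\D_i$ or $\D_j$ while the other has not yet passed its maximum, so moving clockwise in the correct direction strictly increases $f$. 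Formalizing this monotonicity argument---in particular, pinning down the clockwise order of $\D_i,\D_j$ relative to the long arc from the chasing relation $e_i\prec e_j$, and handling degenerate cases where $\D_i$ or $\D_j$ coincides with one of the endpoints $v_i,v_{i+1},v_j,v_{j+1}$---is the finicky part I expect to be the main obstacle.
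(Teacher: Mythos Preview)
Your proposal is correct, and Parts~1 and~3 are essentially equivalent to the paper's arguments: the paper simply invokes the strict concavity of $\dist_{\el_i,\el_j}$ on a segment (its Lemma~\ref{lemma:dist_concave}) to get uniqueness directly, rather than going through critical points and convex super-level sets, and your quadratic-with-two-roots observation for Part~3 is exactly the content of that lemma.

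The genuine difference is in Part~2. You split the localization into two steps: rays from $\I_{i,j}$ (along which $f=cr^2$) to exclude the short arc, and then monotonicity of the two affine factors $d_{\el_i},d_{\el_j}$ along $\partial P$ to pin $Z_i^j$ into $[\D_i\circlearrowright\D_j]$. The paper instead uses a \emph{single} ray trick, applied twice. For any $X\in P$ not in $[\D_i\circlearrowright v_i]$, shoot the ray from $X$ in the direction opposite to $e_i$; by the choice of $X$ this ray stays in $P$ for a positive length, and along it $d_{\el_i}$ is constant while $d_{\el_j}$ strictly increases, so $X\neq Z_i^j$. This one argument simultaneously forces $Z_i^j$ onto $\partial P$, onto the long arc, and clockwise of $\D_i$. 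The symmetric argument with direction $e_j$ gives $Z_i^j\in[v_{j+1}\circlearrowright\D_j]$, and intersecting yields $[\D_i\circlearrowright\D_j]$. The payoff of the paper's route is exactly that it sidesteps the ``finicky'' clockwise ordering of $\D_i$ and $\D_j$ on the long arc that you correctly flag as your main obstacle: the intersection $[\D_i\circlearrowright v_i]\cap[v_{j+1}\circlearrowright\D_j]$ comes out as $[\D_i\circlearrowright\D_j]$ without having to establish that order beforehand. Your approach also goes through, but you would indeed need to prove that ordering separately (it follows from $e_i\prec e_j$ via the clockwise rotation of edge directions, but it is an extra lemma).
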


\begin{proof}
1. If $\dist_{\el_i,\el_j}()$ achieves maximum value at two distinct points in $P$, e.g.\ $X_1$ and $X_2$,
applying the concavity of $\dist_{\el_i,\el_j}$ (Lemma~\ref{lemma:dist_concave}),
  the points between $X_1$ and $X_2$ have larger distance-products to $(\el_i,\el_j)$ than $X_1$ and $X_2$; contradictory.

\smallskip\noindent
2. We state that (i) $Z_i^j\in [\D_i\circlearrowright v_i]$; and (ii) $Z_i^j\in [v_{j+1}\circlearrowright\D_j]$.
Combining (i) and (ii), we get $Z_i^j \in [\D_i\circlearrowright \D_j]$.
Moreover, because $[\D_i\circlearrowright\D_j]\subseteq[v_{j+1}\circlearrowright v_i]$ while $Z_i^j$ obviously cannot lie on $v_{j+1}$ or $v_i$, point $Z_i^j$ must lie in $(v_{j+1}\circlearrowright v_i)$.

The proof of (i) is as follows; proof of (ii) is symmetric and omitted.
Take any point $X$ that lies in $P$ but not in $[\D_i\circlearrowright v_i]$, we shall prove that $Z_i^j\neq X$.
See Figure~\ref{fig:Z_basic}~(b). Make a ray at $X$ which has the opposite direction to $e_i$.
By the assumption of $X$, there is another point $X'$ on the ray which still lies in $P$.
Clearly, $X'$ has a larger distance-product to $(\el_i,\el_j)$ than $X$, which implies that $Z_i^j\neq X$.

\smallskip \noindent
3. Suppose $Z_i^j\neq \M(\I_{i,k},\I_{j,k})$ as shown in Figure~\ref{fig:Z_basic}~(c). (In this figure, $M$ denotes $\M(\I_{i,k},\I_{j,k})$.)
There must be a point $N$, which lies in $e_k$ and lies between $Z_i^j$ and $\M(\I_{i,k},\I_{j,k})$. Applying the strict concavity of $\dist_{\el_i,\el_j}()$ on $\overline{\I_{i,k}\I_{j,k}}$ (Lemma~\ref{lemma:dist_concave}), $\dist_{\el_i,\el_j}(N)>\dist_{\el_i,\el_j}(Z_i^j)$, which contradicts the definition of $Z_i^j$.
 \end{proof}

\begin{lemma}[\textbf{Unimodality of $\dist_{\ell_i,\ell_j}$}]\label{lemma:dist_unimodal}
For edge pair $(e_i,e_j)$ such that $e_i\prec e_j$, $\dist_{\ell_i,\ell_j}()$ is \emph{strictly unimodal} on $[v_{j+1}\circlearrowright v_i]$. Specifically, $\dist_{\ell_i,\ell_j}(X)$\\
\quad (1) strictly increases when $X$ travels from $v_{j+1}$ to $Z_i^j$ in clockwise along $\partial P$, and\\
\quad (2) strictly decreases when $X$ travels from $Z_i^j$ to $v_i$ in clockwise along $\partial P$.
\end{lemma}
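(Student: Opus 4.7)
The plan is to prove part~(1) by contradiction; part~(2) then follows by an entirely symmetric argument obtained by reversing orientation along $\partial P$ and swapping the roles of $\ell_i$ and $\ell_j$. The key preliminary observations are that $\dist_{\ell_i,\ell_j}(v_{j+1})=0$ (since $v_{j+1}\in\ell_j$) while $\dist_{\ell_i,\ell_j}(X)>0$ for every other point of $[v_{j+1}\circlearrowright Z_i^j]$ (such an $X$ lies outside $\ell_i\cup\ell_j$), and that $Z_i^j$ is the unique maximizer on $P$ by Lemma~\ref{lemma:dist-unique-location}(1). Combined with continuity, any failure of strict increase must therefore force $\dist_{\ell_i,\ell_j}|_{\partial P}$ to attain a local maximum at some point $X^*\in(v_{j+1}\circlearrowright Z_i^j)$ with $X^*\neq Z_i^j$: from $\dist_{\ell_i,\ell_j}(X_1)\ge \dist_{\ell_i,\ell_j}(X_2)$ with $X_1<_\rho X_2$, the maximum of $\dist_{\ell_i,\ell_j}$ over $[v_{j+1}\circlearrowright X_2]$ must be attained strictly between the two endpoints.

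If $X^*$ lies in the open interior of some edge $e_k$, then by Lemma~\ref{lemma:dist_concave}(1) the restriction of $\dist_{\ell_i,\ell_j}$ to $\ell_k$ is a strictly concave quadratic with unique maximum at $M_k:=\M(\I_{i,k},\I_{j,k})$, so the interior local maximum on $e_k$ must satisfy $X^*=M_k\in e_k$; Lemma~\ref{lemma:dist-unique-location}(3) then yields $M_k=Z_i^j$, contradicting $X^*\neq Z_i^j$.

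If instead $X^*=v_m$ for some vertex $v_m\in(v_{j+1}\circlearrowright Z_i^j)$, then $v_m\notin\ell_i\cup\ell_j$ (the only vertices on these two lines are $v_i,v_{i+1},v_j,v_{j+1}$), so $\dist_{\ell_i,\ell_j}$ is smooth at $v_m$. Writing $\hat e_k$ for the unit vector along $e_k$ in the clockwise direction, the local-maximum condition along the two adjacent edges translates via one-sided derivatives into
\[\nabla\dist_{\ell_i,\ell_j}(v_m)\cdot\hat e_{m-1}\ge 0\quad\text{and}\quad \nabla\dist_{\ell_i,\ell_j}(v_m)\cdot\hat e_m\le 0.\]
On the other hand, the chord $\overline{v_mZ_i^j}$ lies in $P$ by convexity; by Lemma~\ref{lemma:dist_concave}(1) $\dist_{\ell_i,\ell_j}$ is strictly concave on it, and by uniqueness its maximum on the chord is attained only at the endpoint $Z_i^j$. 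Strict concavity together with endpoint-maximum forces strict monotone increase from $v_m$ to $Z_i^j$ along the chord, so
\[\nabla\dist_{\ell_i,\ell_j}(v_m)\cdot\widehat{v_mZ_i^j}>0.\]
Convexity of $P$ places $\widehat{v_mZ_i^j}$ in the tangent cone at $v_m$, which is the conic hull of $\hat e_m$ and $-\hat e_{m-1}$; writing $\widehat{v_mZ_i^j}=\alpha\hat e_m-\beta\hat e_{m-1}$ with $\alpha,\beta\ge 0$ and inserting the two tangential inequalities yields $\nabla\dist_{\ell_i,\ell_j}(v_m)\cdot\widehat{v_mZ_i^j}\le 0$, contradicting the strict positivity above.

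The hard part is this vertex case: on $\partial P$ the function is only piecewise smooth, so a local-maximum condition at $v_m$ supplies only one-sided information along each adjacent edge. The crux is to introduce a transverse direction of strict increase---the chord to $Z_i^j$, produced by chord-concavity together with uniqueness of the global maximum---and then combine it with the tangential one-sided data through the tangent-cone decomposition at $v_m$ to close the contradiction.
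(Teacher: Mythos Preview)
Your route is genuinely different from the paper's. The paper proves (2) directly by an edge-by-edge induction: starting from $Z_i^j$ it shows, via Lemma~\ref{lemma:dist_concave}, that on the first edge $e_{k-1}$ one has $|Av_k|<\tfrac12|AB|$ (with $A=\I_{k-1,i}$, $B=\I_{k-1,j}$), and then propagates this half-length inequality across each vertex by a similar-triangles comparison between consecutive edges; past $\D_j$ both factors decrease. Your argument instead reduces everything to ruling out a boundary local maximum $X^*\neq Z_i^j$, combining one-sided tangential derivative information with the strict increase along the chord $\overline{X^*Z_i^j}$ via a tangent-cone decomposition. This is cleaner conceptually and avoids the explicit edge-to-edge bookkeeping; the paper's proof is more elementary but more computational.

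There is, however, a real gap in your interior-edge case. From $X^*=M_k\in e_k$ you invoke Lemma~\ref{lemma:dist-unique-location}(3) to conclude $M_k=Z_i^j$, but that lemma only asserts the forward implication ``$Z_i^j\in e_k\Rightarrow Z_i^j=M_k$''; you are using its unproved converse. The converse is true, but it needs its own argument. One clean fix: in coordinates with $\ell_i=\{y=0\}$, $\ell_j=\{x=0\}$, $\ell_k:\,bx+ay=ab$, $M_k=(a/2,b/2)$, the AM--GM inequality gives $ab\cdot xy=(bx)(ay)\le\bigl(\tfrac{bx+ay}{2}\bigr)^2\le (ab/2)^2$ for every $(x,y)$ in the first quadrant with $bx+ay\le ab$; hence $\dist_{\ell_i,\ell_j}\le ab/4=\dist_{\ell_i,\ell_j}(M_k)$ throughout $P$, so $M_k=Z_i^j$. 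Equivalently, $\ell_k$ is tangent at $M_k$ to the convex super-level set $\{\dist_{\ell_i,\ell_j}\ge\dist_{\ell_i,\ell_j}(M_k)\}$ and $P$ lies on the opposite side.

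One smaller point: your appeal to Lemma~\ref{lemma:dist_concave}(1) for strict concavity on the chord $\overline{v_mZ_i^j}$ tacitly extends the chord to meet $\ell_i$ and $\ell_j$; this is fine when the chord is parallel to neither line, but if it is parallel to (say) $\ell_i$ then $\dist_{\ell_i,\ell_j}$ is merely affine along it. In that degenerate case the needed conclusion $\nabla\dist_{\ell_i,\ell_j}(v_m)\cdot\widehat{v_mZ_i^j}>0$ still holds, simply because $\dist_{\ell_i,\ell_j}(Z_i^j)>\dist_{\ell_i,\ell_j}(v_m)$ and the function is affine on the chord; it is worth saying so explicitly.
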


\begin{proof}
We prove (2); the proof of (1) is symmetric.

First, consider the traveling process of $X$ from $Z_i^j$ to $v_k$, where $v_k$ denotes the clockwise first vertex in $[Z_i^j\circlearrowright \D_j]$ that is not equal to $Z_i^j$.
See Figure~\ref{fig:Z_unimodal}~(a). Let $A=\I_{k-1,i},B=\I_{k-1,j}$.
By definition, $Z_i^j$'s distance-product to $(\el_i,\el_j)$ is superior to all the other points on $\overline{v_kZ_i^j}$,
  which implies that $|Av_k|<|AZ_i^j|\leq \frac{1}{2}|AB|$ by Lemma~\ref{lemma:dist_concave}.
Again by Lemma~\ref{lemma:dist_concave}, this inequality implies that when $X$ travels from $Z_i^j$ to $v_k$, its distance-product to $(\el_i,\el_j)$ strictly decreases.

Next, consider the travel of $X$ from $v_k$ to $v_{k+1}$. See Figure~\ref{fig:Z_unimodal}~(b). Let $A'=\I_{k,i}$, $B'=\I_{k,j}$.
Make a line at $A'$ parallel to $e_j$ and assume it intersects $\el_{k-1}$ at point $C$.
Because $A'C$ is parallel to $BB'$, we get $|A'v_k|:|B'v_k|=|Cv_k|:|Bv_k|<|Av_k|:|Bv_k|$.
Because $|Av_k|<\frac{1}{2}|AB|$, we get $|Av_k|<|Bv_k|$.
Together, $|A'v_k|<|B'v_k|$. So, $|A'v_k|<\frac{1}{2}|A'B'|$.
Therefore, $|A'v_{k+1}|<|A'v_k|<\frac{1}{2}|A'B'|$.
This means $\dist_{\el_i,\el_j}(X)$ strictly decreases when $X$ goes from $v_k$ to $v_{k+1}$ due to Lemma~\ref{lemma:dist_concave}.

\smallskip By induction, $\dist_{\el_i,\el_j}(X)$ strictly decreases before $X$ arrives at $\D_j$.
Finally, consider the traveling process from $\D_j$ to $v_i$.
In this process, $\dist_{\ell_i,\ell_j}(X)$ strictly decreases because both $d_{\el_i}(X)$ and $d_{\el_j}(X)$ strictly decrease.
 \end{proof}

\begin{figure}[h]
\begin{minipage}{.66\textwidth}
\centering\includegraphics[width=.85\textwidth]{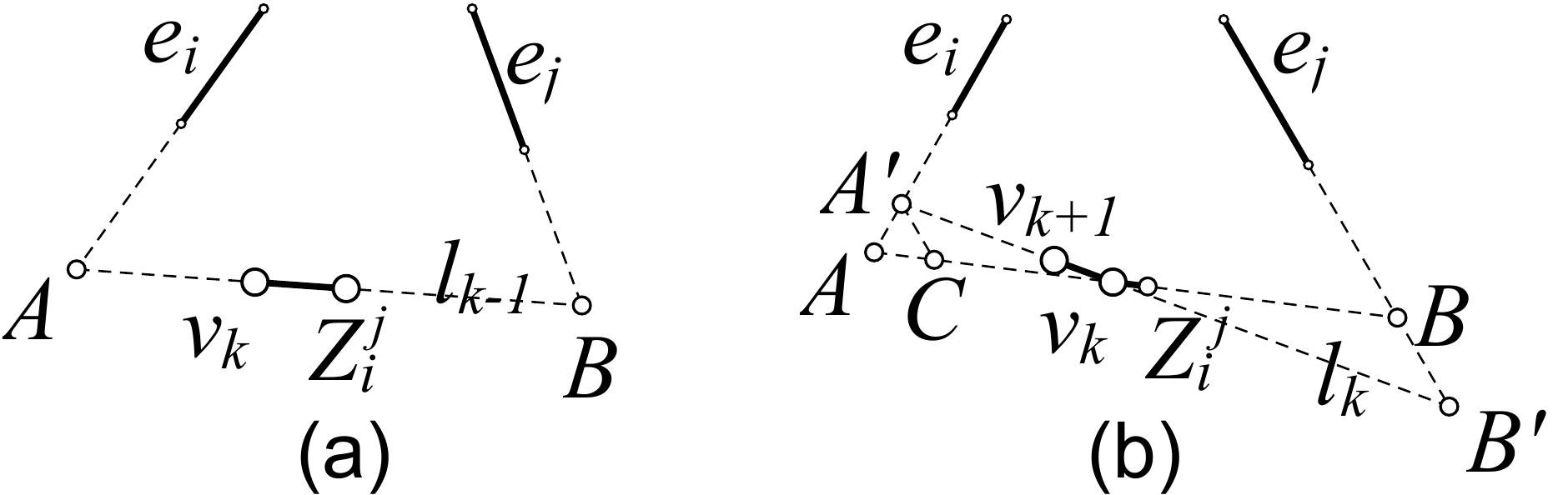}
\caption{Illustration of the proof of Lemma~\ref{lemma:dist_unimodal}}\label{fig:Z_unimodal}
\end{minipage}
\begin{minipage}{.33\textwidth}
\centering\includegraphics[width=.73\textwidth]{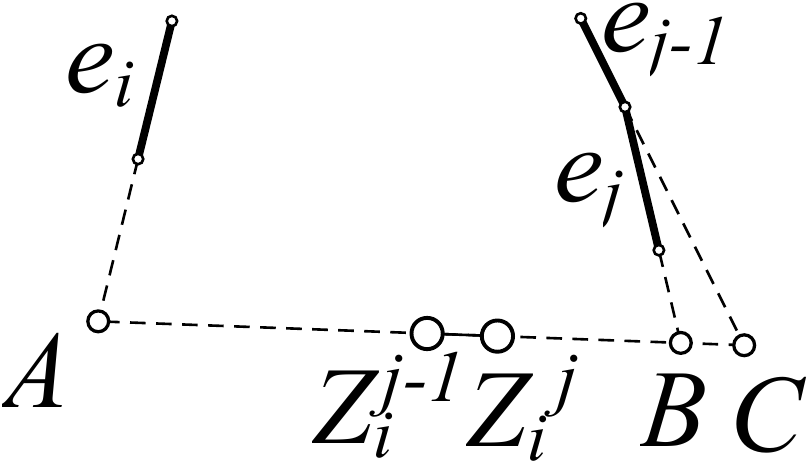}
\caption{Illustration of the proof of Lemma~\ref{lemma:Z_bi-monotonicity}}\label{fig:Z_monotone}
\end{minipage}
\end{figure}

\begin{lemma}[\textbf{Bi-monotonicity of $Z$-points}]\label{lemma:Z_bi-monotonicity}
Given $e_s,e_t$ such that $e_s\preceq e_t$. Let $S=\{(e_i,e_j)\mid e_i\prec e_j, \text{and $e_i,e_j$ both belong to $\{e_s,e_{s+1},\ldots,e_t\}$.}\}$
We claim that all the $Z$-points in set $\{Z_i^j\mid (e_i,e_j)\in S\}$ lie in $\rho=[v_{t+1} \circlearrowright v_s]$ and they obey the following \textbf{bi-monotonicity}:
For $(e_{i},e_{j})\in S$ and $(e_{i'},e_{j'})\in S$,
$$\text{if $e_{i}\preceq e_{i'}$ and $e_{j}\preceq e_{j'}$, then $Z_{i}^{j}\leq_\rho Z_{i'}^{j'}$}.$$
\end{lemma}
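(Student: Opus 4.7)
My plan is to handle the two claims separately: first the containment of every $Z$-point in $\rho$; then the bi-monotonicity, which I reduce to a one-step clockwise change of a single edge.

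For the containment, I would invoke Lemma~\ref{lemma:dist-unique-location}(2) to place $Z_i^j$ inside $[\D_i\circlearrowright\D_j]$. The hypothesis $e_s\preceq e_t$ implies that the edges $\{e_s,\ldots,e_t\}$ together turn through a clockwise angle strictly less than $\pi$; therefore for every $e_i$ in this range the farthest vertex $\D_i$ (whose supporting tangent direction is antiparallel to $e_i$) lies in the complementary arc $\rho$, and similarly $\D_j\in\rho$. The map $i\mapsto\D_i$ is itself monotone along $\partial P$, so $e_i\preceq e_j$ gives $\D_i\leq_\rho\D_j$. Combining these, $[\D_i\circlearrowright\D_j]\subseteq\rho$, and hence $Z_i^j\in\rho$.

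For the bi-monotonicity, by transitivity it suffices to treat the two one-step changes (i)~$(e_i,e_j)\to(e_{i'},e_j)$ with $e_i\prec e_{i'}$ and (ii)~$(e_i,e_j)\to(e_i,e_{j'})$ with $e_j\prec e_{j'}$; I would describe (ii), as (i) is analogous. Fix $e_i$ and assume $e_j\prec e_{j'}$. Set $f_j(X)=\dist_{\el_i,\el_j}(X)$, $f_{j'}(X)=\dist_{\el_i,\el_{j'}}(X)$, and consider the ratio $r(X)=f_{j'}(X)/f_j(X)=d_{\el_{j'}}(X)/d_{\el_j}(X)$, whose level sets are the straight lines through $\I_{j,j'}$. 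Because $e_j\prec e_{j'}$, the point $\I_{j,j'}$ lies outside $P$ on the clockwise side of $e_j$, which is disjoint from $\rho$; consequently $\rho$ is seen from $\I_{j,j'}$ as a convex angular arc, the polar angle at $\I_{j,j'}$ of $X\in\rho$ varies strictly monotonically with the clockwise parameter on $\rho$, and so $r$ is strictly monotone on $\rho$. An orientation check (using that $\el_{j'}$ is rotated clockwise from $\el_j$ relative to $\rho$) shows $r$ to be strictly increasing clockwise along $\rho$. Together with the peak condition $f_j'(Z_i^j)=0$ supplied by Lemma~\ref{lemma:dist_unimodal}, this gives $f_{j'}'(Z_i^j)=r'(Z_i^j)\,f_j(Z_i^j)>0$; by the strict unimodality of $f_{j'}$ on the arc containing $\rho$, its maximizer $Z_i^{j'}$ must therefore lie strictly clockwise of $Z_i^j$.

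The main obstacle will be pinning down the sign in the monotonicity of $r$ and handling the vertex-transitions of $\rho$, where tangential derivatives are only one-sided. If the derivative approach becomes awkward there, I would fall back on the midpoint characterization of Lemma~\ref{lemma:dist-unique-location}(3): when both $Z$-points land in a common edge $e_k$, they equal $\M(\I_{i,k},\I_{j,k})$ and $\M(\I_{i,k},\I_{j',k})$ respectively, so the comparison on $\el_k$ reduces to that of $\I_{j,k}$ and $\I_{j',k}$, which is immediate from $e_j\prec e_{j'}$. A continuous sweep rotating $\el_j$ to $\el_{j'}$ (equivalently, moving $j$ through intermediate edges) would then glue per-edge statements across vertices to recover the global bi-monotonicity.
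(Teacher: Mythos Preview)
Your containment argument matches the paper's: both invoke Lemma~\ref{lemma:dist-unique-location}(2) to place $Z_i^j\in[\D_i\circlearrowright\D_j]$ and then argue this arc lies inside $\rho$.

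For the bi-monotonicity, your approach is correct in outline but takes a genuinely different and heavier route than the paper. You use an analytic argument: the ratio $r=d_{\el_{j'}}/d_{\el_j}$ is monotone along $\rho$, and combining this with the product rule at the critical point of $f_j$ shows the peak of $f_{j'}$ lies clockwise of $Z_i^j$. This works (at a vertex one uses the one-sided left derivative of $f_j$, which is nonnegative, so the conclusion still goes through), but you correctly anticipate that pinning down the sign of $r'$ and gluing across vertices requires care. The paper sidesteps all of this with a two-line elementary contradiction: it reduces to adjacent edges (showing $Z_i^{j-1}\leq_\rho Z_i^j$ suffices), supposes $Z_i^j<_\rho Z_i^{j-1}$, draws the single straight line through these two $Z$-points, and lets $A,B,C$ be its intersections with $\el_i,\el_j,\el_{j-1}$. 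Two applications of Lemma~\ref{lemma:dist_concave}(2a) on segments $\overline{AB}$ and $\overline{AC}$ give $|AZ_i^j|\leq\tfrac12|AB|$ and $|AZ_i^{j-1}|\geq\tfrac12|AC|$, hence $|AC|<|AB|$, contradicting the evident order of $B$ and $C$ on that line. No derivatives, no sign checks, no vertex cases. Your ratio idea gives a nice conceptual picture of \emph{why} the peak shifts, but the paper's argument is shorter and immediately rigorous.
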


\begin{proof}
Assume that $e_s\prec e_t$, otherwise $e_s=e_t$ and the claim is trivial.

Assume that $(e_i,e_j)\in S$. According to Lemma~\ref{lemma:dist-unique-location}, $Z_i^j$ lies in $[\D_i\circlearrowright \D_j]$.
Since $e_s\prec e_t$, we have $[\D_i\circlearrowright \D_j]\subseteq [v_{t+1} \circlearrowright v_s]$.
Together, $Z_i^j$ lies in $\rho=[v_{t+1} \circlearrowright v_s]$.

Proving the monotonicity of the $Z$-points reduces to proving the following facts:
\emph{If $(e_i,e_j)$ belongs to $S$ and $e_i,e_j$ are not adjacent, then} $Z_i^{j-1}\leq_\rho Z_i^j\text{ and }Z_i^j \leq_\rho Z_{i+1}^j.$

We prove the first inequality; the other is symmetric.
See Figure~\ref{fig:Z_monotone}. Suppose to the contrary that $Z_i^j<_\rho Z_i^{j-1}$.
  The line connecting these two $Z$-points intersects with $\el_i,\el_j,\el_{j-1}$, and we denote the intersections by $A,B,C$, respectively.
  Applying the concavity of $\dist_{\el_i,\el_j}()$ on segment $\overline{AB}$ (see Lemma~\ref{lemma:dist_concave}), we get $|AZ_i^j|\leq \frac{1}{2}|AB|$.
  Applying the concavity of $\dist_{\el_i,\el_{j-1}}()$ on segment $\overline{AC}$, we get $|AZ_i^{j-1}|\geq \frac{1}{2}|AC|$.
  Together, we get $|AC|<|AB|$. This contradicts with the assumption of $A,B,C$.
 \end{proof}

\begin{lemma}[Computational aspect of the $Z$-points]\label{lemma:Z-compute}~
\begin{enumerate}
\item Given $e_i,e_j$ such that $e_i\prec e_j$,
    the position of $Z_i^j$ can be computed in $O(1)$ time if the unit containing $Z_i^j$ is known. (Recall unit = vertex or edge.)
\item Assume $e_i\prec e_j$ and $v_k$ lies in $(v_{j+1} \circlearrowright v_i)$.
    By Lemma~\ref{lemma:dist-unique-location}, $Z_i^j\in (v_{j+1} \circlearrowright v_i)$.
    So, the position of $Z_i^j$ has the three possibilities:
    (i) $=v_k$; (ii) $\in (v_{j+1} \circlearrowright v_k)$; or (iii) $\in (v_k \circlearrowright v_i)$.
    Given $i,j,k$, we can distinguish these cases in $O(1)$ time.
\item Given $m$ pairs of edges $(a_1,b_1),\ldots,(a_m,b_m)$ so that $a_i\prec b_i$ for $1\leq i\leq m$, and that $a_1,\ldots,a_m$ lie in clockwise order around $\partial P$ and $b_1,\ldots,b_m$ lie in clockwise order around $\partial P$, we can compute $Z_{a_1}^{b_1},\ldots,Z_{a_m}^{b_m}$ all together in $O(m+n)$ time.
\end{enumerate}
\end{lemma}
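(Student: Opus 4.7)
Part~1 follows directly from Lemma~\ref{lemma:dist-unique-location}: if the known unit is a vertex $v_k$, then $Z_i^j=v_k$; if it is an edge $e_k$, then $Z_i^j=\M(\I_{i,k},\I_{j,k})$, computable in $O(1)$ from the two line intersections.

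For Part~2, my plan is to classify the position of $Z_i^j$ relative to $v_k$ via the one-sided derivatives of $\dist_{\el_i,\el_j}$ at $v_k$ along the two edges incident to $v_k$. By Lemma~\ref{lemma:dist_concave}, along any extended line $\el_m$ the function $\dist_{\el_i,\el_j}$ is a strictly concave quadratic in arc length with unique maximum at $M_m:=\M(\I_{i,m},\I_{j,m})$; hence its one-sided derivative at $v_k$ in a prescribed direction along $\el_m$ has the same sign as the signed displacement of $M_m$ past $v_k$ in that direction. Concretely I would compute $M_k,M_{k-1}$ in $O(1)$ and set $\sigma$ to the sign of $(M_k-v_k)\cdot(v_{k+1}-v_k)$ and $\tau$ to the sign of $(M_{k-1}-v_k)\cdot(v_k-v_{k-1})$; these are the signs of the right-derivative at $v_k$ into $e_k$ and the left-derivative at $v_k$ from $e_{k-1}$, respectively. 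Unimodality (Lemma~\ref{lemma:dist_unimodal}) then gives: $\sigma>0$ forces $Z_i^j$ strictly past $v_k$, i.e.\ case~(iii); $\tau<0$ forces $Z_i^j$ strictly before $v_k$, i.e.\ case~(ii); and the remaining regime $(\sigma\le 0,\tau\ge 0)$ is case~(i). These are mutually exclusive by uniqueness of $Z_i^j$.

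For Part~3, I plan a single forward sweep that exploits the bi-monotonicity of $Z$-points (Lemma~\ref{lemma:Z_bi-monotonicity}). I maintain a pointer $p$ to a vertex of $P$, initialized at $v_{b_1+1}$. Processing the pairs in order $l=1,\ldots,m$: while the $O(1)$ check of Part~2 at $v_p$ returns case~(iii) for the pair $(a_l,b_l)$, I increment $p$; on case~(i) I record $Z_{a_l}^{b_l}=v_p$; on case~(ii) I invoke that the most recent advance was triggered by case~(iii) at $v_{p-1}$, pinning $Z_{a_l}^{b_l}$ strictly inside edge $e_{p-1}$, so I record $\M(\I_{a_l,p-1},\I_{b_l,p-1})$. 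Because the sequence $\{Z_{a_l}^{b_l}\}_l$ is monotone along $\partial P$, the pointer never retreats. The total number of advance-steps of $p$ is $O(n)$ and the charged per-pair work is $O(1)$, yielding the overall $O(m+n)$ bound.

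The hardest part, I expect, is nailing Part~2's three-way classification in boundary configurations (e.g.\ $\sigma=0$ or $\tau=0$, which arise when $M_k=v_k$ or $M_{k-1}=v_k$); I would resolve these by invoking the uniqueness of $Z_i^j$ to conclude that the three cases are mutually exclusive and that any ``equality'' outcome in the sign tests is correctly absorbed into case~(i). A secondary subtlety for Part~3 is the cyclic nature of $\partial P$: since both $\{a_l\}_l$ and $\{b_l\}_l$ are clockwise-monotone, the allowed ranges $[v_{b_l+1}\circlearrowright v_{a_l}]$ themselves progress clockwise with $l$, bounding the total sweep of $p$ to $O(n)$.
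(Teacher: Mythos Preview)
Your proposal is correct and follows essentially the same approach as the paper. Part~1 is identical; your Part~2 sign tests on $(M_k-v_k)$ and $(M_{k-1}-v_k)$ are exactly the $O(1)$ concavity checks the paper phrases as ``does $v_k$ dominate every point in $e_k$ (resp.\ $e_{k-1}$)?''; and your Part~3 sweep is a fleshed-out version of the paper's one-line ``walk around $\partial P$ in order using bi-monotonicity.'' One small point to tighten: in your case~(ii) branch you write that ``the most recent advance was triggered by case~(iii) at $v_{p-1}$,'' but that advance may have occurred while processing an \emph{earlier} pair $l'<l$, so it only directly tells you $Z_{a_{l'}}^{b_{l'}}$ is past $v_{p-1}$; you need one more invocation of bi-monotonicity ($Z_{a_{l'}}^{b_{l'}}\le Z_{a_l}^{b_l}$) to conclude $Z_{a_l}^{b_l}\in e_{p-1}$---which you do cite at the outset, so the argument closes.
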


\begin{proof}
1. If the unit containing $Z_i^j$ is a vertex, $Z_i^j$ can be computed directly;
otherwise, $Z_i^j$ can be computed in $O(1)$ time according to part~3 of Lemma~\ref{lemma:dist-unique-location}.

\smallskip\noindent
2. We say that a point $X$ \emph{dominates} point $X'$, if $\dist_{\el_i,\el_j}(X)>\dist_{\el_i,\el_j}(X').$
The unimodality of $\dist_{\el_i,\el_j}$ (Lemma~\ref{lemma:dist_unimodal}) implies the following facts:\\
1. ``$Z_i^j$ lies on $v_k$'' if and only if ``$v_k$ dominates all points on $e_{k-1}$ and $e_k$.''\\
2. ``$Z_i^j$ lies in $(v_{j+1}\circlearrowright v_k)$'' if and only if ``some point in $e_{k-1}$ dominates $v_k$.''\\
3. ``$Z_i^j$ lies in $(v_k \circlearrowright v_i)$'' if and only if ``some point in $e_k$ dominates $v_k$.''

Thus, it reduces to answering the following queries:

\quad \emph{Does $v_k$ dominate every point in $e_{k-1}$?} \emph{Does $v_k$ dominate every point in $e_k$?}

We can answer them in $O(1)$ time by applying the concavity of $\dist_{\el_i,\el_j}$.

\smallskip\noindent
3. By part~1, to compute $Z_{a_1}^{b_1},\ldots,Z_{a_m}^{b_m}$, we only need to determine the respective units that they lie on.
By the bi-monotonicity of the $Z$-points, $Z_{a_1}^{b_1},\ldots,Z_{a_m}^{b_m}$ lie in clockwise order,
so the units they lie on are also in clockwise order.
So, we can walk around $\partial P$ to compute these $Z$-points in order, which costs $O(m+n)$ time.
 \end{proof}

\section{Calculating area of parallelogram using distance-product}\label{sect:identity}

In this section, we show that if a parallelogram has a pair of opposite corners restricted to lines $l,l'$ respectively,
  its area is proportional to $|\dist_{l,l'}(X)-\dist_{l,l'}(X')|$, where $X,X'$ are the positions of the other two corners.
  This simple connection between the area and the distance-product is crucial to study the LMAPs.

Let $r_O(F)$ denote the reflection of figure $F$ with respect to point $O$.
\begin{claim}\label{claim:ll'}
Given nonparallel lines $l,l'$ and a point $M$ in the plane.
Let $Y$ be the intersection of $l$ and $r_M(l')$, and $Y'$ the intersection of $l'$ and $r_M(l)$.
Then, $(Y,Y')$ is the unique pair of points such that $Y\in l,Y'\in l'$ and $\M(Y,Y')=M$.
\end{claim}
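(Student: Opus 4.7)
The plan is to split the claim into existence and uniqueness. The key observation I will exploit throughout is that $r_M$ is an involution (so $r_M \circ r_M = \mathrm{id}$) and that reflection through a point sends any line to a parallel line. Since $l,l'$ are nonparallel, $r_M(l)$ is parallel to $l$ and hence nonparallel to $l'$, so $l' \cap r_M(l)$ is a single point; symmetrically $l \cap r_M(l')$ is a single point. Thus $Y$ and $Y'$ are well defined.

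For \emph{existence}, I would just verify that the $Y,Y'$ defined by the intersection formulas satisfy $\M(Y,Y')=M$. Since $Y \in r_M(l')$, applying $r_M$ gives $r_M(Y) \in r_M(r_M(l')) = l'$. Similarly $r_M(Y) \in r_M(l)$, so $r_M(Y) \in l' \cap r_M(l) = \{Y'\}$, i.e.\ $Y' = r_M(Y)$. By definition of reflection through a point, $M = \M(Y,Y')$.

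For \emph{uniqueness}, suppose $(\tilde Y, \tilde Y')$ is any pair with $\tilde Y \in l$, $\tilde Y' \in l'$, and $\M(\tilde Y, \tilde Y') = M$. The midpoint condition gives $\tilde Y' = r_M(\tilde Y)$, and then $\tilde Y' \in l'$ is equivalent (by applying $r_M$) to $\tilde Y \in r_M(l')$. So $\tilde Y \in l \cap r_M(l') = \{Y\}$, forcing $\tilde Y = Y$ and thus $\tilde Y' = r_M(Y) = Y'$.

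There is no real obstacle here; the only subtlety is remembering to use the nonparallel hypothesis to guarantee that the two intersections used to define $Y$ and $Y'$ are each a single point. The proof should be short enough to write inline in a few lines.
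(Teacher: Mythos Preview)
Your proposal is correct and follows essentially the same approach as the paper: both show existence by checking that $r_M(Y)=Y'$ via the involution property of $r_M$, and both prove uniqueness by observing that any admissible first coordinate must lie in $l\cap r_M(l')$. Your extra remark that the nonparallel hypothesis guarantees each intersection is a single point is a small clarification the paper leaves implicit.
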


\begin{proof}
Since $Y$ is the intersection of $l$ and $r_M(l')$,
  its reflection $r_M(Y)$ equals to the intersection of $r_M(l)$ and $r_M(r_M(l'))=l'$.
  So, $r_M(Y)=Y'$, i.e.\ $\M(Y,Y')=M$.

We now prove the uniqueness. Assume $(B,B')$ satisfy that $B\in l,B'\in l'$ and $\M(B,B')=M$.
    Because $\M(B,B')=M$, we know $B=r_M(B')$. This implies $B\in r_M(l')$ because $B'\in l'$.
    Further since $B\in l$, we get $B=Y$. Similarly, $B'=Y'$.
 \end{proof}

\newcommand\parallelogram{\mathord{\text{\tikz[baseline] \draw (0,.1ex) -- (.8em,.1ex) -- (1em,1.6ex) -- (.2em,1.6ex) -- cycle;}}}

Henceforth in this section, we assume that $l,l'$ are two lines intersecting at $O$ and $X,X'$ are two points which lie in the some quadrant divided by $l,l'$.

A parallelogram is \emph{degenerate} if all of its four corners lie in the same line.

\begin{claim}[A corollary of Claim~\ref{claim:ll'}]\label{claim:XX'll'}
See Figure~\ref{fig:def-XXLL}. Let $M=\M(X,X')$.
Let $Y$ be the intersection of $l$ and $r_M(l')$, and $Y'$ be the intersection of $l'$ and $r_M(l)$.
Using Claim~\ref{claim:ll'}, $\M(Y,Y')=M=\M(X,X')$. So, $XYX'Y'$ is a parallelogram (which may be degenerate)
and we denote it by $\parallelogram(X,X',l,l')$.
We claim that $\parallelogram(X,X',l,l')$ is the unique parallelogram (which may be degenerate) which has a pair of opposite corners lying on $X,X'$ and has two other corners on $l,l'$ respectively.
\end{claim}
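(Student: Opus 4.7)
The plan is to show that the claim is essentially a restatement of Claim~\ref{claim:ll'} after invoking the elementary characterization of a parallelogram by its diagonals. I will split the argument into an existence step (showing that $\parallelogram(X,X',l,l')$ as constructed really is a parallelogram whose corners satisfy the required incidences) and a uniqueness step (showing no other configuration works).

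For existence, I set $M=\M(X,X')$, take $Y,Y'$ as in the construction, and invoke Claim~\ref{claim:ll'} to conclude $\M(Y,Y')=M=\M(X,X')$. Then the diagonals $\overline{XX'}$ and $\overline{YY'}$ of the quadrilateral $XYX'Y'$ share a common midpoint, which is exactly the criterion for a (possibly degenerate) parallelogram. By construction $Y\in l$ and $Y'\in l'$, and $X,X'$ are the other pair of opposite corners, so the required incidence conditions hold.

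For uniqueness, suppose some parallelogram $XZX'Z'$ (vertices listed cyclically, so that $X,X'$ are opposite corners and $Z,Z'$ are opposite corners) satisfies $Z\in l$ and $Z'\in l'$. Since the diagonals of a parallelogram bisect each other, $\M(Z,Z')=\M(X,X')=M$. Thus $(Z,Z')$ is a pair with $Z\in l$, $Z'\in l'$, and midpoint $M$; the uniqueness clause of Claim~\ref{claim:ll'} forces $Z=Y$ and $Z'=Y'$, so the parallelogram coincides with $\parallelogram(X,X',l,l')$.

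I do not expect any real obstacle here: once the diagonal-midpoint characterization of a parallelogram is invoked, the statement reduces mechanically to Claim~\ref{claim:ll'}. The only mild care required is the bookkeeping of which pair of vertices is opposite, so that the midpoint condition can legitimately be applied to the $(l,l')$-corners rather than to some adjacent pair; I will make this explicit in the write-up.
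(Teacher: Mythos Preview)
Your proposal is correct and matches the paper's approach exactly: the paper treats this claim as an immediate corollary of Claim~\ref{claim:ll'}, with the existence reasoning (diagonals sharing midpoint $M$) already sketched inside the statement itself and the uniqueness left implicit. Your write-up simply makes explicit the uniqueness step via the diagonal-midpoint characterization and the uniqueness clause of Claim~\ref{claim:ll'}, which is precisely what the paper intends.
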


\begin{lemma}\label{lemma:area-pre}
By comparing $\dist_{l,l'}(X)$ with $\dist_{l,l'}(X')$, we can infer the following relations between the four corners $X,X',Y,Y'$ of $\parallelogram(X,X',l,l')$.
\begin{enumerate}
  \item If $\dist_{l,l'}(X)<\dist_{l,l'}(X')$, then $X\in \triangle OYY'$ and $X'\notin \triangle OYY'$.
  \item If $\dist_{l,l'}(X)>\dist_{l,l'}(X')$, then $X\notin \triangle OYY'$ and $X'\in \triangle OYY'$.
  \item If $\dist_{l,l'}(X)=\dist_{l,l'}(X')$, then $\parallelogram(X,X',l,l')$ is degenerate.
\end{enumerate}
\textbf{Note:} Here $\triangle OYY'$ is regarded as a closed set, so it contains its boundary.\smallskip
\end{lemma}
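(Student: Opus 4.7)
The plan is to reduce everything to a short algebraic check in affine coordinates aligned with $l$ and $l'$. Place $O$ at the origin and use a basis $(e_1,e_2)$ with $e_1$ along $l$ and $e_2$ along $l'$. Since $X,X'$ lie in a common (open) quadrant cut by $l,l'$, after relabeling we may write $X=(a_1,b_1)$ and $X'=(a_2,b_2)$ with $a_1,a_2,b_1,b_2>0$. In these coordinates the distance from a point $(a,b)$ to $l'$ is $|a|\sin\theta$ and to $l$ is $|b|\sin\theta$ (where $\theta$ is the angle between $l$ and $l'$), so $\dist_{l,l'}(a,b)=|ab|\sin^2\theta$. Consequently, comparing $\dist_{l,l'}(X)$ and $\dist_{l,l'}(X')$ is the same as comparing $a_1b_1$ with $a_2b_2$.

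Next, I would pin down $Y$ and $Y'$ explicitly. The midpoint is $M=((a_1+a_2)/2,(b_1+b_2)/2)$, so $r_M(l')$ is the line $\{a=a_1+a_2\}$ and $r_M(l)$ is the line $\{b=b_1+b_2\}$. Intersecting with $l$ and $l'$ as in Claim~\ref{claim:ll'} yields $Y=(a_1+a_2,0)$ and $Y'=(0,b_1+b_2)$. Notice that $Y$ lies on the positive $a$-axis and $Y'$ on the positive $b$-axis, so the rays $OY$ and $OY'$ bound exactly the open quadrant containing $X$ and $X'$. This is the crucial observation that lets me replace membership in $\triangle OYY'$ by a single half-plane test: since $X,X'$ already lie between the rays $OY$ and $OY'$, the condition $X\in\triangle OYY'$ is equivalent to $X$ being on the same side of the line $YY'$ as $O$ (and similarly for $X'$).

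Now I carry out the half-plane test. The line $YY'$ has the intercept equation
\[
\frac{a}{a_1+a_2}+\frac{b}{b_1+b_2}=1,
\]
and $O$ corresponds to the value $0<1$. Substituting $X=(a_1,b_1)$ and clearing denominators shows that $X$ lies on the $O$-side, on the line, or on the opposite side exactly when $a_1b_1<a_2b_2$, $=$, or $>$, respectively. Combined with the identification $\dist_{l,l'}(X)\lessgtr\dist_{l,l'}(X')\Leftrightarrow a_1b_1\lessgtr a_2b_2$ from the first paragraph, this delivers parts (1) and (2): in the strict cases, the corner with the smaller distance-product lies in $\triangle OYY'$ and the other does not. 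Swapping $X\leftrightarrow X'$ treats both parts symmetrically.

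For part (3), if $\dist_{l,l'}(X)=\dist_{l,l'}(X')$, then $a_1b_1=a_2b_2$, so $X$ lies on the line $YY'$. Because $X'$ is the reflection of $X$ through $M$, and $M$ lies on the line $YY'$ (it is the midpoint of $YY'$), $X'$ lies on $YY'$ as well. Thus all four corners $X,Y,X',Y'$ are collinear, which is exactly the definition of degeneracy. The only genuine subtlety in the whole argument is the sector observation in the second paragraph—without it, being on the $O$-side of $YY'$ would not by itself imply membership in $\triangle OYY'$—and the positivity of $a_1+a_2$ and $b_1+b_2$ makes this immediate from the quadrant hypothesis.
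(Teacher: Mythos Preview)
Your proof is correct and takes a genuinely different route from the paper's. The paper argues geometrically and splits into three cases according to the signs of $d_l(X)-d_l(X')$ and $d_{l'}(X)-d_{l'}(X')$; in the mixed-sign case it draws the chord through $X,X'$ to $l,l'$ and invokes the concavity lemma (Lemma~\ref{lemma:dist_concave}, part~2b) to compare $|BX|$ with $|B'X'|$ and then argue about the position of $M$ relative to $\overline{YY'}$. Your argument instead passes to affine coordinates adapted to $l,l'$, computes $Y=(a_1+a_2,0)$ and $Y'=(0,b_1+b_2)$ directly, and reduces membership in $\triangle OYY'$ to the single intercept inequality $\tfrac{a}{a_1+a_2}+\tfrac{b}{b_1+b_2}\le 1$, which after clearing denominators is exactly $a_1b_1\le a_2b_2$. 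This collapses the paper's three cases into one line and makes part~(3) immediate via the observation that $M=\M(Y,Y')$. The paper's approach has the virtue of staying purely synthetic and reusing Lemma~\ref{lemma:dist_concave}, which is the workhorse elsewhere; your approach is shorter and more transparent, at the cost of introducing coordinates. One small remark: your phrase ``open quadrant'' and the strict positivity $a_i,b_i>0$ are slightly stronger than what the paper literally assumes, but this is harmless here since the boundary cases (some $a_i$ or $b_i$ equal to zero) are either degenerate or handled by continuity.
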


\begin{figure}[h]
\begin{minipage}{0.26\textwidth}
\centering\includegraphics[width=.66\textwidth]{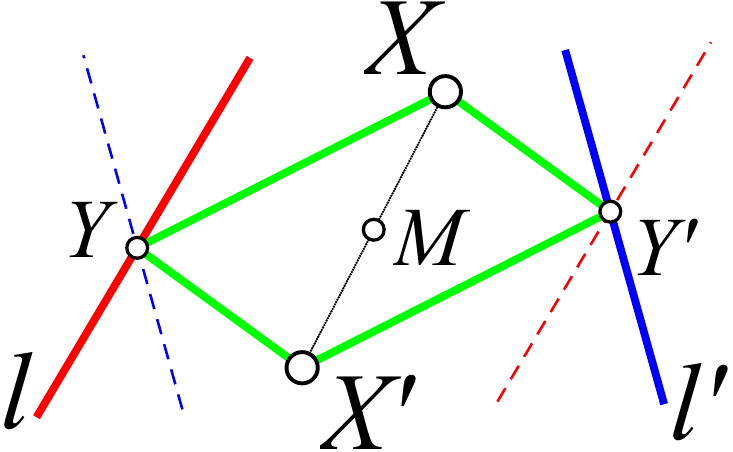}
\caption{Illustration of Claim~\ref{claim:XX'll'}}\label{fig:def-XXLL}
\end{minipage}
\begin{minipage}{0.73\textwidth}
\centering\includegraphics[width=.95\textwidth]{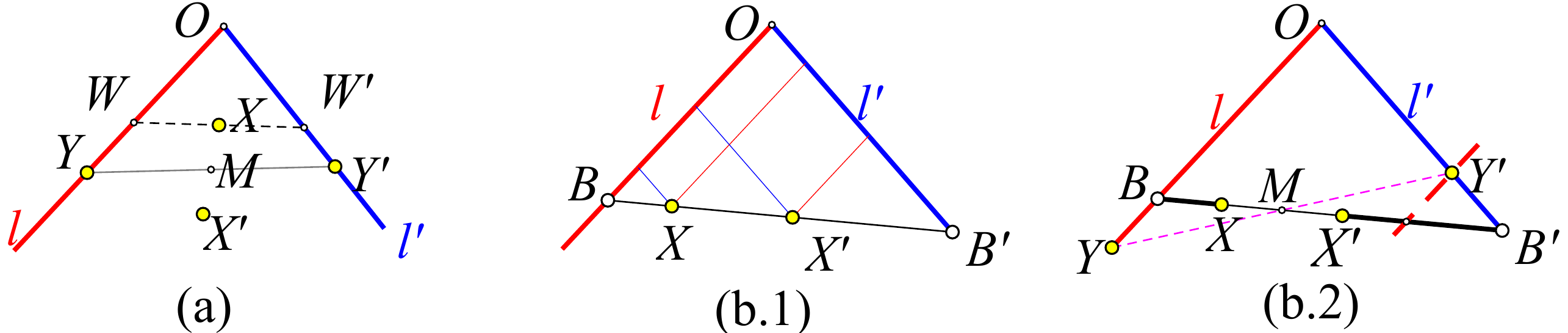}
\caption{Illustration of the proof of Lemma~\ref{lemma:area-pre}}\label{fig:area-pre}
\end{minipage}
\end{figure}

\begin{proof}
Let $M$ be the center of $\parallelogram(X,X',l,l')$.

(1) When $\dist_{l,l'}(X)<\dist_{l,l'}(X')$, there are three cases:
\begin{enumerate}
\item[Case~a:] \underline{$d_l(X)\leq d_l(X')$, $d_{l'}(X)\leq d_{l'}(X')$, and at least one inequality is strict.}
    See Figure~\ref{fig:area-pre}~(a).
    Let $(W,W')$ be the unique pair of points in $l,l'$ so that $\M(W,W')=X$.
    $
    \left\{\begin{array}{ccccccccc}
    d_{l'}(W) &=&2d_{l'}(X) &\leq & d_{l'}(X)+d_{l'}(X') &= &2d_{l'}(M)&=&d_{l'}(Y),\\
    d_l(W')   &=&2d_l(X)    &\leq & d_l(X)+d_l(X') &= &2d_l(M)&=&d_l(Y').
    \end{array}\right.
    $

    Therefore, $|OW|\leq |OY|$ and $|OW'|\leq |OY'|$, and at least one inequality is strict.
    Therefore, $X$ lies in $\triangle OYY'$ and $X\notin \overline{YY'}$, and so $X'\notin \triangle OYY'$. \smallskip

\item[Case~b:] \underline{$d_l(X)< d_l(X')$, $d_{l'}(X)>d_{l'}(X')$}. See Figure~\ref{fig:area-pre}~(b.1).
    Assume the extended line of $\overline{XX'}$ intersects $l,l'$ at $B,B'$ respectively.
    Using the condition of this case, $B,X,X',B'$ lie in this order.
    Applying $\dist_{l,l'}(X)<\dist_{l,l'}(X')$ and part (2b) of Lemma~\ref{lemma:dist_concave}, $|BX|<|X'B'|$.
    Now, see Figure~\ref{fig:area-pre}~(b.2). Since $|BX|<|X'B'|$ and $|MX|=|MX'|$, we get $|BM|<|B'M|$.
    So, $d_{l'}(B)<2d_{l'}(M)$, namely, $d_{l'}(B)<d_{l'}(Y)$, i.e.\ $|OB|<|OY|$.
    Further since $X$ lies in $\overline{BM}$, point $X$ lies in $\triangle OYY'$.
    Moreover, $X\neq M$ since $X\neq X'$. This further implies $X'\notin \triangle OYY'$. \smallskip

\item[Case~c:] \underline{$d_l(X)>d_{l'}(X')$ and $d_{l'}(X)< d_{l'}(X')$}. This case is symmetric to Case~b.
\end{enumerate}

In each case, we get $X\in \triangle OYY'$ and $X'\notin \triangle OYY'$.

\smallskip\noindent (2) This one is symmetric to (1); proof omitted.

\smallskip \noindent (3) It is similar to (1). We only consider Case~b here.
    In this case, we can get $|BX|=|B'X'|$. So, $|BM|=|MB'|$.
    By Claim~\ref{claim:ll'}, this means $B=Y$ and $B'=Y'$. Thus $X,X',Y,Y'$ lie in the same line, and so $\parallelogram(X,X',l,l')$ is degenerate.
 \end{proof}

\begin{lemma}\label{lemma:area}
Assume $l,l',X,X'$ are given as in Claim~\ref{claim:XX'll'}.
\footnote{For the equation in Lemma~\ref{lemma:area} to hold, the constraint ``$X,X'$ lie in the same quadrant'' is actually redundant.
  However, if we remove it, we should define the distance from a point to a line in a signed version,
    so that the points on different sides of a line have different signs.}
Let $\theta$ denote the angle of the quadrant divided by $l,l'$ and containing $X,X'$, then
    \begin{equation}\label{eqn:area}
    Area\left(\parallelogram(X,X',l,l')\right)=\left| \dist_{l,l'}(X)-\dist_{l,l'}(X')\right| / \sin \theta.
    \end{equation}
\end{lemma}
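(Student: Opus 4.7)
The plan is to set up oblique coordinates adapted to the pair $(l,l')$ and reduce both sides of (\ref{eqn:area}) to the same algebraic expression. Specifically, I would take $O$ as the origin and let $\hat{u},\hat{v}$ be the unit vectors along $l,l'$ pointing into the quadrant that contains $X,X'$. Every point $P$ in that quadrant then admits a unique representation $P = \alpha_P \hat{u} + \beta_P \hat{v}$ with $\alpha_P,\beta_P \geq 0$, and elementary trigonometry gives $d_{l'}(P) = \alpha_P \sin\theta$ and $d_l(P) = \beta_P \sin\theta$, so that $\dist_{l,l'}(P) = \alpha_P \beta_P \sin^2\theta$.

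Next, I would use the midpoint identity $\M(Y,Y') = \M(X,X')$ supplied by Claim~\ref{claim:XX'll'}, together with $Y \in l$ and $Y' \in l'$, to write $Y = (\alpha_X + \alpha_{X'})\hat{u}$ and $Y' = (\beta_X + \beta_{X'})\hat{v}$. I would then compute the area of $\parallelogram(X,X',l,l') = XYX'Y'$ from two of its adjacent sides, namely $Y - X = \alpha_{X'}\hat{u} - \beta_X \hat{v}$ and $Y' - X = -\alpha_X \hat{u} + \beta_{X'}\hat{v}$. Their cross product equals $(\alpha_{X'}\beta_{X'} - \alpha_X \beta_X)(\hat{u} \times \hat{v})$, and since $|\hat{u} \times \hat{v}| = \sin\theta$, the area of the parallelogram is $|\alpha_{X'}\beta_{X'} - \alpha_X \beta_X|\sin\theta$.

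Combining this with $|\dist_{l,l'}(X) - \dist_{l,l'}(X')| = |\alpha_X \beta_X - \alpha_{X'}\beta_{X'}|\sin^2\theta$ and dividing by $\sin\theta$ yields (\ref{eqn:area}). The degenerate case $\dist_{l,l'}(X) = \dist_{l,l'}(X')$ gives area $0$ on both sides, which is consistent with Lemma~\ref{lemma:area-pre}(3). I do not foresee a serious obstacle: the only point requiring care is the $\sin\theta$ conversion between oblique and Euclidean areas, together with the observation that $\alpha_X,\beta_X,\alpha_{X'},\beta_{X'}$ are all nonnegative (since $X,X'$ lie in the same quadrant), which is what makes the mixed terms cancel cleanly and places $Y,Y'$ on the correct rays of $l,l'$.
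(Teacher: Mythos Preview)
Your proposal is correct, and it takes a genuinely different route from the paper's own proof. The paper gives what it calls a ``totally geometric'' argument: it introduces the auxiliary parallelograms $OGXH$ and $OG'X'H'$ (where $G,G'\in l$ and $H,H'\in l'$ are chosen so that these are parallelograms), invokes Lemma~\ref{lemma:area-pre} to place $X$ inside the quadrilateral $OYX'Y'$, and then uses a pair of triangle congruences ($\triangle GXY\cong\triangle H'Y'X'$ and $\triangle HXY'\cong\triangle G'YX'$) to rewrite $Area(XYX'Y')$ as $Area(OG'X'H')-Area(OGXH)$; finally it computes $Area(OGXH)=\dist_{l,l'}(X)/\sin\theta$.

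Your oblique-coordinate approach is essentially the algebraic shadow of the same picture (your $\alpha_P,\beta_P$ are exactly $|OG|,|OH|$ in the paper's notation), but it bypasses the case distinction on the sign of $\dist_{l,l'}(X)-\dist_{l,l'}(X')$ and the appeal to Lemma~\ref{lemma:area-pre}, replacing the congruent-triangle bookkeeping with a single cross-product computation. That makes your argument shorter and more uniform; the paper's version, on the other hand, is self-contained in synthetic language and makes the cancellation visually explicit. One small quibble: the phrase ``makes the mixed terms cancel cleanly'' is slightly misleading, since the cross-product identity $(Y-X)\times(Y'-X)=(\alpha_{X'}\beta_{X'}-\alpha_X\beta_X)(\hat u\times\hat v)$ holds regardless of signs; the nonnegativity of $\alpha,\beta$ is really only used to identify $d_l(P)=\beta_P\sin\theta$ and $d_{l'}(P)=\alpha_P\sin\theta$ without absolute values.
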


Below we give a totally geometric proof for this identity.

\begin{figure}[h]
  \centering\includegraphics[width=.88\textwidth]{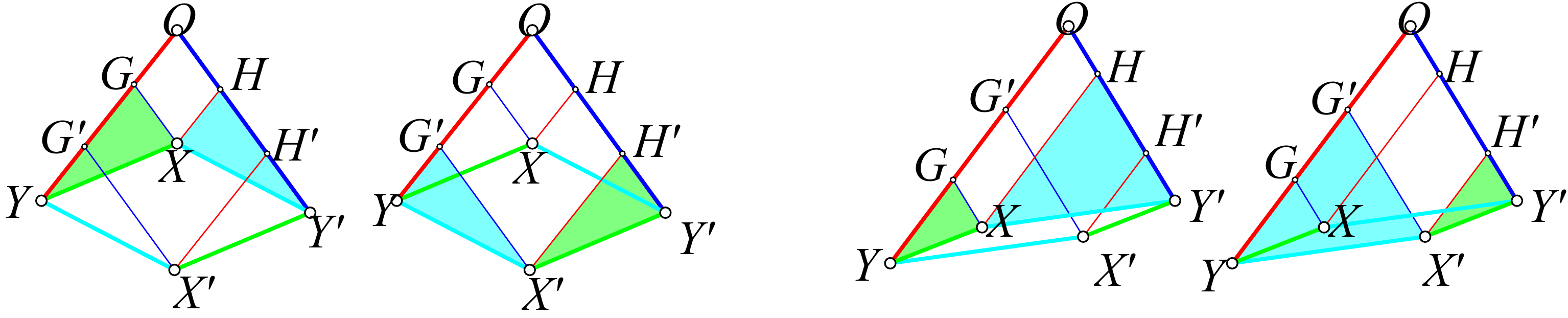}
  \caption{The geometric proof of Identity~(\ref{eqn:area}).}\label{fig:area_geometric}
\end{figure}

\begin{proof}
See Figure~\ref{fig:area_geometric}.
Let $G,H$ be the two points on $l$ and $l'$ such that $OGXH$ is a parallelogram,
and $G',H'$ the two points on $l$ and $l'$ such that $OG'X'H'$ is a parallelogram.
When $\dist_{l,l'}(X)=\dist_{l,l'}(X')$, by Lemma~\ref{lemma:area-pre}, $\parallelogram(X,X',l,l')$ is degenerate, so (\ref{eqn:area}) holds.
Next, assume $\dist_{l,l'}(X)<\dist_{l,l'}(X')$ (the other case where $\dist_{l,l'}(X)>\dist_{l,l'}(X')$ is symmetric).
We state two facts. (Fact (i) follows from Lemma~\ref{lemma:area-pre}, whereas (ii) is because $XYX'Y'$ is a parallelogram.)
\begin{enumerate}
\item[(i)] Point $X$ lies in the quadrilateral $OYX'Y'$.
\item[(ii)] $\triangle~GXY$ is congruent to $\triangle~H'Y'X'$ while $\triangle~HXY'$ is congruent to $\triangle~G'YX'$.
\end{enumerate}
Together, $Area(XYX'Y')\\
    =Area(OYX'Y')-Area(GXY)-Area(HXY')-Area(OGXH)\\
    =Area(OYX'Y')-Area(H'Y'X')-Area(G'YX')-Area(OGXH)\\
    =Area(OG'X'H')-Area(OGXH)$.

Notice that $d_l(X)=|XG|\cdot \sin \theta$ and $d_{l'}(X)=|XH|\cdot \sin \theta$. Therefore,
$$Area(OGXH)=|XG|\cdot |XH|\cdot \sin \theta=d_l(X)d_{l'}(X)/\sin \theta=\dist_{l,l'}(X)/\sin \theta.$$
Similarly, $Area(OG'X'H')=\dist_{l,l'}(X')/\sin \theta.$

Substituting the last two equations into the previous one, we obtain (\ref{eqn:area}).
 \end{proof}

\section{The constraints of the LMAPs}\label{sect:clamping}

This section presents two types of constraints of the LMAPs.
First, each LMAP has a corner lying on a vertex of $P$.
Second, for each corner of the LMAP,
  we distinguish $\Theta(n^2)$ situations and prove that under each situation
    this corner lies in a corresponding boundary-portion.
The situation depends on the locations of its two neighboring corners.
(The second type was previewed in subsection~\ref{subsect:techover}.)
All these constraints are deduced from the local maximality of the LMAPs (Definition~\ref{def:locallymaximal}).

\subsection{Each LMAP has a corner lying on a vertex of $P$}

The following lemma implies that each LMAP has a corner lying on a vertex of $P$.
\begin{lemma}\label{lemma:narrow_anchored}
Assume that $Q=A_0A_1A_2A_3$ is an LMAP, where $A_0,A_1,A_2,A_3$ lie in clockwise order.
Further assume that $A_3,A_1$ lie on $e_i,e_j$, respectively, such that $e_i\prec e_j$.
We claim that corner $A_0$ lies on a vertex of $P$.
\end{lemma}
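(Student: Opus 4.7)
The plan is to argue by contradiction. Assume $A_0$ does not lie on any vertex of $P$; I will exhibit a perturbation of $Q$ in $P$ with strictly larger area, violating local maximality. Since $Q$ is an LMAP (inscribed and non-slidable by Definition~\ref{def:LMAP} and Lemma~\ref{lemma:local-maximal-non-slidable}), the corners $A_1, A_3$ lie in the open edges $e_j, e_i$, and $A_0$ cannot lie on $e_i$ or $e_j$ (else $Q$ would be slidable). Thus $A_0$ lies in the interior of some edge $e_k$ with $k \notin \{i,j\}$. Furthermore, since $A_0, A_1, A_2, A_3$ appear in the same clockwise cyclic order around $\partial P$ as around the parallelogram, and $e_i \prec e_j$, corner $A_0$ lies on the ``wedge'' arc from $A_3$ clockwise to $A_1$ (through $v_{i+1},\ldots,v_j$, on the side containing $\I_{i,j}$), while $A_2$ lies on the opposite arc.

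I would then invoke Lemma~\ref{lemma:area} with $l=\el_i$, $l'=\el_j$, and $(X,X')=(A_0,A_2)$ to get $Area(Q)=|\dist_{\el_i,\el_j}(A_0)-\dist_{\el_i,\el_j}(A_2)|/\sin\theta$. To remove the absolute value I argue $A_0\in T$ and $A_2\notin T$, where $T:=\triangle \I_{i,j}A_3A_1$. Since $P\subseteq H_i\cap H_j$ (the closed half-planes bounded by $\el_i,\el_j$ that contain $P$), the portion of $P$ on the $\I_{i,j}$ side of chord $\overline{A_3A_1}$ is contained in $T$, and the wedge arc carrying $A_0$ lies on that side. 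Meanwhile $A_2$ lies on the opposite side of the chord and cannot lie on the chord itself, because that would force all four corners to be collinear and $Q$ to be a degenerate zero-area parallelogram, impossible for an LMAP in $P$. Applying Lemma~\ref{lemma:area-pre} (case~1, since the other two cases are ruled out) yields $\dist_{\el_i,\el_j}(A_0)<\dist_{\el_i,\el_j}(A_2)$, hence $Area(Q)=(\dist_{\el_i,\el_j}(A_2)-\dist_{\el_i,\el_j}(A_0))/\sin\theta$.

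Now I construct the perturbation: fix $A_2$ and slide $A_0$ slightly along $e_k$ to a new point $A_0'$. By Claim~\ref{claim:XX'll'}, this determines a unique parallelogram $Q(A_0')=\parallelogram(A_0',A_2,\el_i,\el_j)=A_0'A_1'A_2A_3'$ with $A_3'\in \el_i$ and $A_1'\in \el_j$. Continuity guarantees that for $A_0'$ sufficiently close to $A_0$, the corners $A_3',A_1'$ remain in the open edges $e_i,e_j$, the strict inequality $\dist_{\el_i,\el_j}(A_0')<\dist_{\el_i,\el_j}(A_2)$ persists, and $Q(A_0')$ lies in every prescribed neighborhood $N_\delta(Q)$. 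Hence $Area(Q(A_0'))=(\dist_{\el_i,\el_j}(A_2)-\dist_{\el_i,\el_j}(A_0'))/\sin\theta$. By Lemma~\ref{lemma:dist_concave}, $\dist_{\el_i,\el_j}$ is strictly concave on the segment $\overline{\I_{i,k}\I_{j,k}}$, which contains $e_k$. Since $A_0$ lies in the interior of $e_k$, I can pick $A_0'$ on $e_k$ close to $A_0$ with $\dist_{\el_i,\el_j}(A_0')<\dist_{\el_i,\el_j}(A_0)$: move along $e_k$ away from the concave maximum, or, if $A_0$ happens to be the maximum on $e_k$, move in either direction. This yields $Area(Q(A_0'))>Area(Q)$, contradicting the local maximality of $Q$.

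The main obstacle is pinning down the sign of $\dist_{\el_i,\el_j}(A_0)-\dist_{\el_i,\el_j}(A_2)$: without it, strict concavity of $\dist_{\el_i,\el_j}$ on $e_k$ would not tell us which direction of perturbation enlarges the area, and the argument stalls. The geometric inclusion $A_0\in T$, $A_2\notin T$ (combined with Lemma~\ref{lemma:area-pre}) is therefore the crux; once the sign is fixed, the concavity-plus-continuity perturbation closes the argument immediately.
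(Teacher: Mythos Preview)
Your proposal is correct and follows essentially the same route as the paper: assume $A_0\in e_k$, use Lemma~\ref{lemma:area-pre} to fix the sign $\dist_{\el_i,\el_j}(A_2)>\dist_{\el_i,\el_j}(A_0)$, then slide $A_0$ along $e_k$ (keeping $A_2$ fixed and the other two corners on $\el_i,\el_j$) so that $\dist_{\el_i,\el_j}(A_0)$ strictly decreases by the concavity in Lemma~\ref{lemma:dist_concave}, thereby enlarging the area via Lemma~\ref{lemma:area}. The only cosmetic difference is that the paper names the direction explicitly (towards the nearer of $\I_{i,k},\I_{j,k}$), whereas you invoke strict concavity abstractly; and you spell out the geometric inclusion $A_0\in\triangle \I_{i,j}A_3A_1$, $A_2\notin\triangle \I_{i,j}A_3A_1$ that the paper leaves implicit in its one-line appeal to Lemma~\ref{lemma:area-pre}.
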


\begin{proof}
For a contradiction, suppose $A_0$ does not lie on a vertex of $P$ but lies in edge $e_k$. See Figure~\ref{fig:PC0}.
Denote by $B$ the one among $\I_{i,k},\I_{j,k}$ which is closer to $A_0$; let $B$ be either of them for a tie.
Denote $Q_X=\parallelogram(X,A_2,\el_i,\el_j)$ and $d()=\dist_{\el_{i},\el_{j}}()$ for short.
Suppose $X$ is any point on $\overline{A_0B}$ but $A_0$. Then,
\[
\begin{array}{cl}
  d(A_2)>d(A_0), & \hbox{according to Lemma~\ref{lemma:area-pre};} \\
  d(A_0)>d(X), & \hbox{according to the concavity of $d()$ on $\overline{\I_{i,k}\I_{j,k}}$;}\\
  \begin{array}{c}
     Area(Q_X)=c\cdot |d(A_2)-d(X)|,\\
     Area(Q_{A_0})=c\cdot |d(A_2)-d(A_0)|,
   \end{array} & \hbox{due to (\ref{eqn:area}). Here, $c$ is a positive constant.}
\end{array}\]

Altogether, $Area(Q_X)>Area(Q_{A_0})$.
Moreover, since neither $A_1$ nor $A_3$ lies on a vertex of $P$, parallelogram $Q_X$ will be inscribed in $P$ when $X$ is sufficiently close to $A_0$.
Thus, there is an inscribed parallelogram $Q_X$ sufficiently close to $Q_{A_0}$ and is larger than $Q_{A_0}$.
Hence $Q=Q_{A_0}$ is not locally maximal and is not an LMAP.
 \end{proof}

\begin{figure}[h]
\begin{minipage}[b]{.48\textwidth}
  \centering\includegraphics[width=.5\textwidth]{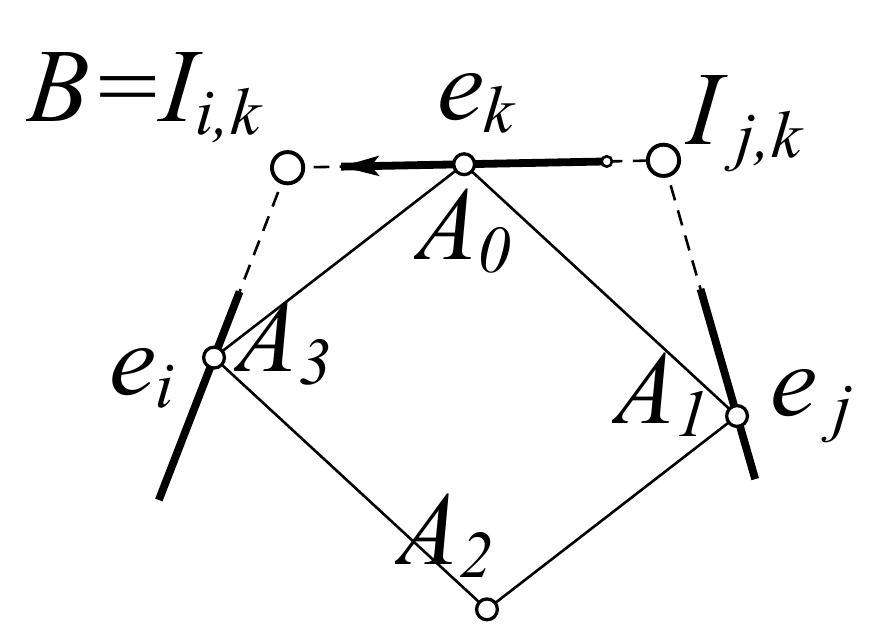}
  \caption{Illustration of Lemma~\ref{lemma:narrow_anchored}.}\label{fig:PC0}
\end{minipage}
\begin{minipage}[b]{.48\textwidth}
  \centering\includegraphics[width=.35\textwidth]{definition-chasing.pdf}
  \caption{Chasing relation between units.}\label{fig:chasing-u}
\end{minipage}
\end{figure}

\subsection{Description of clamping bounds of LMAPs}\label{subsect:clamping-description}

\newcommand{\unit}{\mathbf{u}}
To describe the the second type of constraints, we first introduce some terms and notations.
Recall that we call each vertex and each edge (of $P$) a \emph{\textbf{unit}} of $P$.

For any point $X$ on $\partial P$, there is a unique unit containing $X$, denoted by $\unit(X)$.

\begin{definition}[\textbf{Backward and forward edge of units and points}]\label{def:units}
Assume $u$ is a unit of $P$.
If $u$ is vertex $v_i$, its \emph{backward edge} and \emph{forward edge} is defined to be $e_{i-1}$ and $e_i$, respectively.
Otherwise, the \emph{backward edge} and \emph{forward edge} of $u$ is defined to be the edge $u$ itself.
Intuitively, when you start at any point in $u$ and move backward (forward) in clockwise along $\partial P$ by an infinitely small step, you will be located at the backward (forward) edge of $u$.
Denote the backward and forward edge of $u$ by $back(u)$ and $forw(u)$ respectively.
For convenience, we also define the backward and forward edge of point $X$. Specifically,
$$back(X):=back(\unit(X)); \quad forw(X):=forw(\unit(X)).$$
\end{definition}

\begin{definition}[\textbf{``Chasing'' relation between units}]
 We now extend the chasing relation $\prec$ given in section~\ref{sect:preliminiary} so that it is defined among units.
For two units $u,u'$, we say that $u$ \emph{is chasing} $u'$ if $back(u)\prec back(u')\text{ and }forw(u)\prec forw(u')$.
\end{definition}

The relation chasing between units is a compatible extension of chasing between edges.
Note that it is possible that neither of them is chasing the other, so there are three relations between two units $u,u'$:
1. $u$ is chasing $u'$ while $u'$ is not chasing $u$.
2. $u'$ is chasing $u$ while $u$ is not chasing $u'$. 3. Neither of them is chasing the other.

In Figure~\ref{fig:chasing-u}, $v_1$ is chasing $v_2,e_2,v_3,e_3$ whereas $e_4,v_5,\ldots,e_6,v_7$ are chasing $v_1$. For other units $e_1,e_7,v_4$, neither they are chasing $v_1$, nor $v_1$ is chasing them.

\begin{definition}[$\zeta$]\label{def:zeta}
Recall the $Z$-points lying on $\partial P$ introduced in section~\ref{sect:Z}.
For any unit pair $(u,u')$ such that \textbf{\emph{$u$ is chasing $u'$}}, we define a boundary-portion
    \begin{equation}\label{eqn:zeta_chasing}
        \zeta(u,u')= [Z_{back(u)}^{back(u')} \circlearrowright Z_{forw(u)}^{forw(u')}].
    \end{equation}
\end{definition}

We describe the second type of constraints of the LMAPs in two lemmas below.
We prove these two lemmas together in the next subsection.
In these lemmas, assume that $A_0,\ldots,A_3$ lie in clockwise order and subscripts of $A$ are taken modulo 4.

\begin{lemma}\label{lemma:clamping-broad}
Consider any corner $A_i$ of an LMAP $A_0A_1A_2A_3$.
Assume that $A_{i+1},A_{i-1}$ lie on units $u,u'$ respectively.
Then, $A_i$ lies in $\zeta(u,u')$ if $u$ is chasing $u'$.
\end{lemma}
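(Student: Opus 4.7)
\medskip \noindent \textbf{Plan.} I argue by contradiction. Assuming $A_i \notin \zeta(u,u')$, I will construct an inscribed parallelogram arbitrarily close to $Q$ with strictly larger area, contradicting the local maximality of $Q$.

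\medskip \noindent \textbf{Setup and case split.} Because $u$ chases $u'$, $back(u) \prec back(u')$ and $forw(u) \prec forw(u')$, so both $Z$-points appearing in Definition~\ref{def:zeta} exist; by Lemma~\ref{lemma:dist-unique-location} each lies strictly between $u'$ and $u$ on the clockwise boundary-portion from $u'$ to $u$. Since $A_0,\ldots,A_3$ appear clockwise on $\partial P$ with $A_{i+1}\in u$ and $A_{i-1}\in u'$, corner $A_i$ lies in this very portion $[u' \circlearrowright u]$, while $A_{i+2}$ lies on the complementary portion $[u \circlearrowright u']$. If $A_i \notin \zeta(u,u')$ then either (a) $A_i$ lies strictly clockwise-before $Z_{back(u)}^{back(u')}$, or (b) strictly clockwise-after $Z_{forw(u)}^{forw(u')}$. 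The two cases are symmetric (swap ``backward''$\leftrightarrow$``forward'' and reverse the perturbation direction), so I focus on (a).

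\medskip \noindent \textbf{The perturbed parallelogram.} Set $l=\el_{back(u)}$, $l'=\el_{back(u')}$. For small $t\ge 0$ let $A_i(t)$ be the point obtained by sliding $A_i$ a distance $t$ clockwise along $\partial P$, and define
\[
 Q(t) \;=\; \parallelogram\bigl(A_i(t),\,A_{i+2},\,l,\,l'\bigr).
\]
By Claim~\ref{claim:XX'll'}, $Q(t)$ is the unique parallelogram whose opposite corners are at $A_i(t),A_{i+2}$ and whose other two corners lie on $l,l'$; call them $A_{i+1}(t)\in l$ and $A_{i-1}(t)\in l'$. Since $A_{i+1}\in l$ and $A_{i-1}\in l'$ already, uniqueness gives $Q(0)=Q$. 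Lemma~\ref{lemma:area} then yields
\[
 Area(Q(t)) \;=\; \bigl|\dist_{l,l'}(A_i(t))-\dist_{l,l'}(A_{i+2})\bigr|\big/\sin\theta.
\]

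\medskip \noindent \textbf{Strict area growth.} Two ingredients combine. First, applying Lemma~\ref{lemma:area-pre} to the non-degenerate parallelogram $Q$, exactly one of $A_i, A_{i+2}$ lies in the closed triangle with vertex $O = l\cap l'$ spanned by $A_{i+1},A_{i-1}$. The convex geometry forces $A_{i+2}$ to be the inside one: it occupies the ``small'' portion $[u\circlearrowright u']$ that is cut off near $O$, whereas $A_i$ shares its portion with the global maximizer $Z_{back(u)}^{back(u')}$ of $\dist_{l,l'}$ over $P$. Hence $\dist_{l,l'}(A_i)>\dist_{l,l'}(A_{i+2})$. Second, by the strict unimodality of $\dist_{l,l'}$ on the boundary-portion containing $Z_{back(u)}^{back(u')}$ (Lemma~\ref{lemma:dist_unimodal}) combined with the hypothesis that $A_i$ lies strictly clockwise-before $Z_{back(u)}^{back(u')}$, the value $\dist_{l,l'}(A_i(t))$ strictly increases as $t$ grows from $0$. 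Combining, $\dist_{l,l'}(A_i(t))-\dist_{l,l'}(A_{i+2})>0$ and strictly grows, so $Area(Q(t))>Area(Q)$ for all sufficiently small $t>0$.

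\medskip \noindent \textbf{Inscribedness.} It remains to verify that $A_{i+1}(t)$ and $A_{i-1}(t)$ lie in $P$ for small $t>0$; they vary continuously in $t$ and coincide with $A_{i+1},A_{i-1}$ at $t=0$. If $u$ is an edge then $l$ contains $u$ as an open subsegment and $A_{i+1}(t)\in u$ for small $t$; ditto for $u'$. The subtle sub-case is when $u=v_k$ is a vertex: then $back(u)=e_{k-1}$ and $A_{i+1}(t)$ slides on $\el_{k-1}$ starting from $v_k$, and I must argue that it enters the open edge $e_{k-1}$ rather than escaping past $v_k$ out of $P$. I will verify this by expressing $A_{i+1}(t)$ as $l\cap r_{M(t)}(l')$ via Claim~\ref{claim:ll'}, computing the first-order motion of the midpoint $M(t)=\M(A_i(t),A_{i+2})$, and checking that the resulting displacement of $A_{i+1}(t)$ along $\el_{k-1}$ points into $e_{k-1}$. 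The same computation works for $u'$ when it is a vertex.

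\medskip \noindent \textbf{Main obstacle.} The genuinely delicate step is the inscribedness verification when $u$ or $u'$ is a vertex. It is also where the asymmetric definition $\zeta(u,u')=[Z_{back(u)}^{back(u')}\circlearrowright Z_{forw(u)}^{forw(u')}]$ reveals its purpose: the ``$back$'' vs.\ ``$forw$'' choice is precisely what pairs the admissible perturbation direction of $A_i$ with a safe motion of the neighbouring corners, so that in case~(a) the clockwise push that increases $\dist_{l,l'}(A_i)$ simultaneously sends $A_{i+1}(t),A_{i-1}(t)$ into the open edges $back(u),back(u')$, and symmetrically in case~(b) the counterclockwise push paired with $forw(u),forw(u')$ does the same. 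Once this matching is established, the contradiction with the local maximality of $Q$ is immediate.
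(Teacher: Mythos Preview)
Your proposal is correct and follows essentially the same approach as the paper: argue by contradiction, slide $A_i$ along $\partial P$ toward the relevant $Z$-point while holding $A_{i+2}$ fixed and restricting the other two corners to $\ell_{back(u)},\ell_{back(u')}$ (respectively $\ell_{forw(u)},\ell_{forw(u')}$ in the symmetric case), and combine the area identity of Lemma~\ref{lemma:area} with the unimodality of Lemma~\ref{lemma:dist_unimodal} to obtain strict area growth. The one step you defer---the inscribedness check when $u$ or $u'$ is a vertex---is exactly what the paper executes: since $A_{i-1},A_i$ are neighbouring corners of an inscribed parallelogram one has $back(A_{i-1})\prec forw(A_i)$, so sliding $A_i$ clockwise moves the midpoint $\M(A_i,A_{i+2})$ away from $\ell_{back(u')}$, hence (via the reflection description of Claim~\ref{claim:ll'}) the corner on $\ell_{back(u)}$ moves toward $v_{back(u)}$, i.e.\ into the open edge $back(u)$; the argument for the corner on $\ell_{back(u')}$ is symmetric.
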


\begin{figure}[h]
  \centering \includegraphics[width=.95\textwidth]{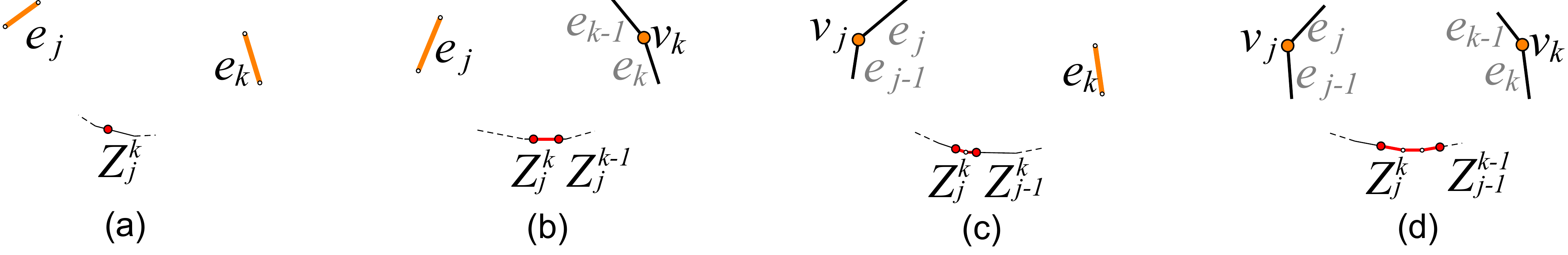}
  \caption{Illustration of $\zeta(u,u')$ when $u$ is chasing $u'$}\label{fig:BP}
\end{figure}

See Figure~\ref{fig:BP}. If $A_{i+1}\in u,A_{i-1}\in u'$, and $u$ is chasing $u'$, there are four cases:

1. $(u,u')=(e_j,e_k)$ and $e_j$ is chasing $e_k$. Then, $A_i=Z_j^k$.

2. $(u,u')=(e_j,v_k)$ and $e_j$ is chasing $v_k$. Then, $A_i\in [Z_j^{k-1} \circlearrowright Z_j^k]$.

3. $(u,u')=(v_j,e_k)$ and $v_j$ is chasing $e_k$. Then, $A_i\in [Z_{j-1}^{k}\circlearrowright Z_j^k]$.

4. $(u,u')=(v_j,v_k)$ and $v_j$ is chasing $v_k$. Then, $A_i\in [Z_{j-1}^{k-1}\circlearrowright Z_j^k]$.

\bigskip In the above lemma, a bound is given for $A_i$ when $u=\unit(A_{i+1})$ is chasing $u'=\unit(A_{i-1})$.
 In the next lemma, a bound will be given for $A_i$ under the case neither of $u,u'$ is chasing the other.
 To state the next lemma, we extend the scope of definition of $\zeta(u,u')$ to every pair of distinct units $u,u'$.

\begin{remark}
As introduced in subsection~\ref{subsect:techover}, defining $\zeta(u,u')$ for bounding $A_i$ is not easy.
  This is particularly challenging when $u=\unit(A_{i+1})$ is not chasing $u'=\unit(A_{i-1})$.
The reader may first have a try before looking at the following definition.
\end{remark}

\begin{definition}[Generalized $\zeta$]\label{def:zeta-extended}
Recall $\D_i$: the unique vertex of $P$ with the largest distance to $\el_i$.
For each pair of units $u,u'$ that are distinct, we define
\begin{equation}\label{eqn:zeta_generalized}
\zeta(u,u') = [Z_{a}^{a'}\circlearrowright Z_{b}^{b'}],
~where
\left\{\begin{array}{lcl}
e_a & = & back(u) \\
e_{a'} & = & \begin{cases}
            back(u'), & \hbox{if } back(u) \prec back(u'); \\
            back(\D_a), & \hbox{otherwise}.
          \end{cases}\\
e_{b'} & = & forw(u')\\
e_b & = & \begin{cases}
            forw(u), & \hbox{if } forw(u) \prec forw(u'); \\
            forw(\D_{b'}), & \hbox{otherwise}.
        \end{cases}
\end{array}\right.
\end{equation}
\end{definition}

Be aware that it is always true that $e_a\prec e_{a'}$ and $e_b\prec e_{b'}$, so $\zeta(u,u')$ is well-defined.
Also be aware that (\ref{eqn:zeta_generalized}) simplifies to (\ref{eqn:zeta_chasing}) when $u$ is chasing $u'$.

\begin{figure}[h]
  \centering\includegraphics[width=.6\textwidth]{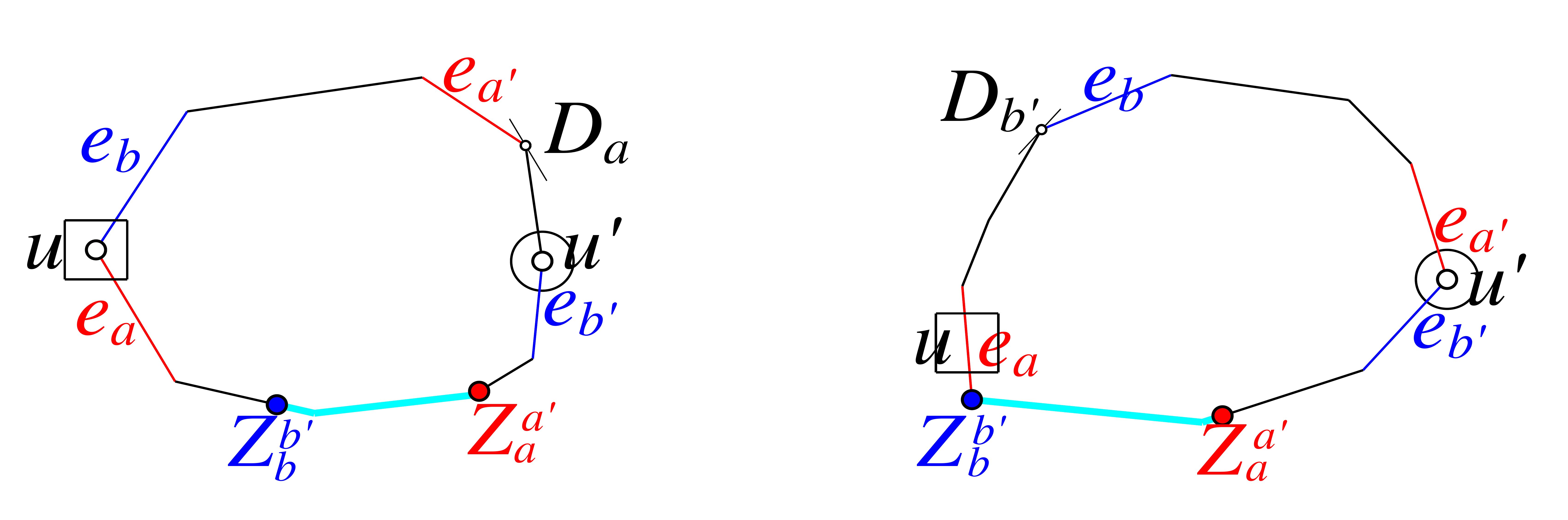}
  \caption{Illustration of $\zeta(u,u')$ when neither of $u,u'$ is chasing the other.}\label{fig:zeta2}
\end{figure}

\begin{lemma}\label{lemma:clamping-even}
Under the same assumption as Lemma~\ref{lemma:clamping-broad},
$A_i$ lies in $\zeta(u,u')$ if neither of $u,u'$ is chasing the other.
(This lemma applies he boundary-portions in $\{\zeta(u,u')\mid \text{neither of }u,u' \text{ is chasing the other}\}$ which are illustrated in Figure~\ref{fig:zeta2}.)
\end{lemma}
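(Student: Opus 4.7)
The plan is to mimic the perturbation argument of Lemma~\ref{lemma:narrow_anchored}. Suppose, for contradiction, that $A_i$ lies strictly outside $\zeta(u,u')=[Z_a^{a'}\circlearrowright Z_b^{b'}]$; by the obvious symmetry between the two endpoints, it suffices to treat the case in which $A_i$ lies strictly clockwise past $Z_b^{b'}$ on the arc of $\partial P$ from $u'$ to $u$. I would fix $A_{i+2}$ and consider the one-parameter family of parallelograms $Q_X=\parallelogram(X,A_{i+2},\el_b,\el_{b'})$ (uniquely defined by Claim~\ref{claim:XX'll'}) as $X$ slides a small counterclockwise distance along $\partial P$ from $A_i$, where $e_b,e_{b'}$ are the edges specified in Definition~\ref{def:zeta-extended}. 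Lemma~\ref{lemma:area} gives
\[
Area(Q_X)=\tfrac{1}{\sin\theta}\bigl|\dist_{\el_b,\el_{b'}}(X)-\dist_{\el_b,\el_{b'}}(A_{i+2})\bigr|,
\]
so the whole argument reduces to showing that this absolute value strictly grows at $X=A_i$ under a legal counterclockwise perturbation, contradicting local maximality of $Q=Q_{A_i}$.

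The sign inside the absolute value is pinned down by Lemma~\ref{lemma:area-pre} applied to $Q$ itself: whichever of $A_i,A_{i+2}$ lies inside the inner triangle at $\I_{b,b'}$ has the smaller distance-product. A short geometric check, using that $A_i$ lies on the clockwise arc from $u'$ to $u$ together with $Z_b^{b'}$, while $A_{i+2}$ lies on the complementary arc, should yield $\dist_{\el_b,\el_{b'}}(A_i)\geq\dist_{\el_b,\el_{b'}}(A_{i+2})$. Then Lemma~\ref{lemma:dist_unimodal} applied to the chasing pair $(e_b,e_{b'})$, combined with the hypothesis that $A_i$ lies strictly clockwise past $Z_b^{b'}$ within the unimodality domain $[v_{b'+1}\circlearrowright v_b]$, forces $\dist_{\el_b,\el_{b'}}(X)$ to strictly increase under a small counterclockwise motion of $X$ away from $A_i$, widening the absolute value as required.

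The main obstacle I anticipate is admissibility of the perturbation, i.e.\ ensuring that $A_{i+1}$ stays in $P$ after sliding along $\el_b$ (and symmetrically for $A_{i-1}$ on $\el_{b'}$). When $u,u'$ are edges with $e_b=u$ and $e_{b'}=u'$ this is immediate, since $A_{i\pm 1}$ is interior to an open edge. When $u$ (respectively $u'$) is a vertex, only one of its two incident edge-lines permits the required direction of sliding, and the $forw$ convention baked into Definition~\ref{def:zeta-extended} is precisely the one matching the counterclockwise motion of $A_i$. The genuinely delicate sub-case is $forw(u)\not\prec forw(u')$, in which Definition~\ref{def:zeta-extended} forces $e_b=forw(\D_{b'})$ and $\el_b$ need not pass through $A_{i+1}\in u$ at all. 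My plan for that sub-case is to substitute the line $\el_{forw(u)}$ for $\el_b$ at $A_{i+1}$ (since $\el_{forw(u)}$ does contain $A_{i+1}$), and to use the bi-monotonicity of $Z$-points (Lemma~\ref{lemma:Z_bi-monotonicity}) together with the defining property of $\D_{b'}$ as the globally farthest vertex from $\el_{b'}$ to transfer the inequality ``$A_i$ lies past $Z_b^{b'}$'' into a sign-preserving inequality for the modified distance-product, so that the unimodality/area argument still produces a strict gain. Carrying out this substitution cleanly in each of the four combinatorial cases $(edge,edge)$, $(edge,vertex)$, $(vertex,edge)$, $(vertex,vertex)$ for $(u,u')$ is the combinatorially heaviest part of the proof; the $Z_a^{a'}$ endpoint is handled by a mirror argument with $back$ in place of $forw$.
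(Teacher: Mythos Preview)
Your overall architecture matches the paper's: fix the opposite corner, slide $A_i$ along $\partial P$, read off the area via the distance-product identity, and use unimodality of $\dist$ to get a strict gain contradicting local maximality. You also correctly spot that in the sub-case $forw(u)\not\prec forw(u')$ the line $\el_b$ need not contain $A_{i+1}$ and some substitute line through $A_{i+1}$ is required. That diagnosis is exactly right, and $\el_{forw(u)}$ is the correct substitute.

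Where your plan goes wrong is in what happens \emph{after} the substitution. You expect a ``sign-preserving'' transfer and keep the counterclockwise direction of perturbation throughout. In the paper's treatment of the analogous sub-case (Case~2 in the joint proof of Lemmas~\ref{lemma:clamping-broad} and \ref{lemma:clamping-even}), two things flip simultaneously. First, once you replace $\el_b$ by $\el_{forw(u)}$, the chasing relation between the two working lines reverses (here $forw(u')\prec forw(u)$), so by Lemma~\ref{lemma:area-pre} the sign inside the area formula flips: the area becomes proportional to $\dist(A_{i+2})-\dist(X)$ rather than $\dist(X)-\dist(A_{i+2})$. Second, and correspondingly, the direction in which $X$ must be pushed to make the area grow reverses: the paper moves $X$ \emph{away} from the $Z$-point (into $forw(A_i)$ on this side), not toward it. Your proposal keeps both the sign and the direction fixed, so in the delicate sub-case the argument as written would show the area decreases, not increases.

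The transfer mechanism is also different from what you sketch. The paper does not use bi-monotonicity of $Z$-points or the extremal property of $\D_{b'}$ here. Instead it intersects the line through $A_i$'s current edge with the three lines $\el_b,\el_{b'},\el_{forw(u)}$, applies Lemma~\ref{lemma:dist_concave} (part~2a) to the pair $(\el_b,\el_{b'})$ to get an inequality between two of those intersection distances, and then uses the order of the intersection points to pass that inequality to the pair $(\el_{forw(u)},\el_{b'})$. A further small but essential step you omit is checking that $forw(A_i)\neq forw(A_{i-1})$ in this sub-case; the paper derives this from the hypothesis that $u'$ is not chasing $u$, and without it the intersection argument collapses. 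Finally, you should not skip over claim~(i) of the paper (that at least one of $Z_a^{a'},Z_b^{b'}$ actually lies on the arc $(A_{i+1}\circlearrowright A_{i-1})$); this is what makes the two one-sided arguments jointly exhaustive, and it is where the ``neither is chasing the other'' hypothesis is first used.
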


\begin{remark}
1. Even in the last case where $u'$ is chasing $u$, we can prove $A_i\in \zeta(u,u')$;
therefore, no matter what the chasing relation between $u$ and $u'$, we call $\zeta(u,u')$ the \textbf{\emph{clamping bound}} of $A_i$.
However, in the last case, the bound $\zeta(u,u')$ is too wide and useless; so we do not state it formally in another lemma.

2. To design our quadratic time algorithm for computing the LMAPs, we only apply Lemma~\ref{lemma:clamping-broad} but \textbf{not} Lemma~\ref{lemma:clamping-even}.
However, to design a better (subquadratic time) algorithm (in our follow-up paper), we must also apply Lemma~\ref{lemma:clamping-even}.
We find it is more convenient to prove Lemma~\ref{lemma:clamping-even} together with Lemma~\ref{lemma:clamping-broad} in this paper,
so that the follow-up paper does not need to go through this proof technique again.
Moreover, it is easier to compare these two related lemmas in the same paper.
\end{remark}

\subsection{Proofs of the clamping bounds}

We first sketch the proof of Lemma~\ref{lemma:clamping-broad} so that the reader can grasp our key ideas.
\begin{figure}[h]
  \centering \includegraphics[width=.64\textwidth]{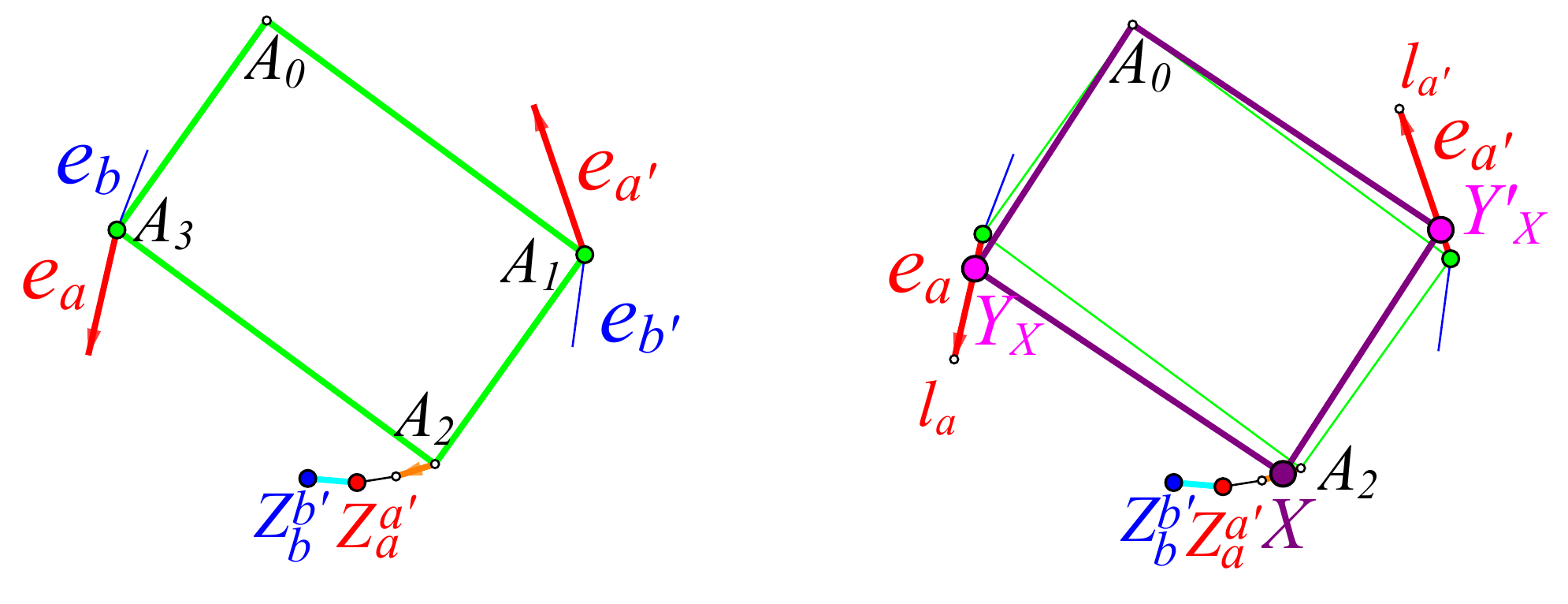}
  \caption{Illustration of the proof of the clamping bounds.}\label{fig:PC_sketch}
\end{figure}

\begin{proof}[Sketch]
Assume that $Q=A_0A_1A_2A_3$ is an LMAP and that $A_3\in u,A_1\in u'$, where $u$ is chasing $u'$. See Figure~\ref{fig:PC_sketch}.
Denote $back(u),back(u'),forw(u),forw(u')$ by $e_a,e_{a'},e_b,e_{b'}$ respectively.
Notice that $e_{a}\prec e_{a'}$ and $e_b\prec e_{b'}$, since $u$ is chasing $u'$.
We shall prove that corner $A_2$ lies in $[Z_a^{a'}\circlearrowright Z_b^{b'}]$.

For a contradiction, suppose that $A_2\notin [Z_a^{a'}\circlearrowright Z_b^{b'}]$.
Then, it must lie in $(A_1\circlearrowright Z_a^{a'})$ or $(Z_b^{b'}\circlearrowright A_3)$.
Assume it lies in $(A_1\circlearrowright Z_a^{a'})$; otherwise it is symmetric.

For any point $X\in[A_2\circlearrowright Z_a^{a'}]$, denote $Q_X=\parallelogram(A_0,X,\el_a,\el_{a'})$ and
  let $(Y_X,Y'_X)$ denote the opposite pair of corners of $Q_X$ that are restricted to $\el_a,\el_{a'}$.

We state three observations.
\begin{enumerate}
\item[(i)] \emph{$Area(Q_X)$ is proportional to $\dist_{\el_a,\el_{a'}}(X)-\dist_{\el_a,\el_{a'}}(A_0)$.}
\item[(ii)] \emph{$\dist_{\el_a,\el_{a'}}(X)$ strictly increases when $X$ moves along $[A_2\circlearrowright Z_a^{a'}]$.} Note:
``move along $[A_2\circlearrowright Z_a^{a'}]$'' is short for ``move in clockwise along $\partial P$ from $A_2$ to $Z_a^{a'}$''.
\item[(iii)] \emph{We have ($Y_X\in e_a$ and $Y'_X\in e_{a'}$) when $X$ is sufficiently close to $A_2$.}
\end{enumerate}

(i) is due to Lemma~\ref{lemma:area}; (ii) is an application of the unimodality of the product-distance function (Lemma~\ref{lemma:dist_unimodal});
  and (iii) is trivial. We omit their proofs in this sketch.

\smallskip Combining (i) and (ii), $Area(Q_X)$ strictly increases when $X$ moves along $[A_2\circlearrowright Z_a^{a'}]$ (starting from $A_2$).
By (iii), $Q_X$ is inscribed in $P$ when $X$ is sufficiently close to $A_2$. Together, $Q_{A_2}$ (i.e.\ $Q$) is not locally maximal.
So $A_2\in [Z_a^{a'}\circlearrowright Z_b^{b'}]$.
\end{proof}

\begin{proof}[Proof of Lemmas~\ref{lemma:clamping-broad}, \ref{lemma:clamping-even}]
Let $Q=A_0A_1A_2A_3$ be an LMAP as specified by Lemma~\ref{lemma:clamping-broad} or Lemma~\ref{lemma:clamping-even}.
Consider corner $A_2$.
Let $u=\unit(A_3),u'=\unit(A_1)$ and let $a,a',b,b'$ be defined according to (\ref{eqn:zeta_generalized}).
We shall prove that $A_2\in [Z_a^{a'}\circlearrowright Z_b^{b'}]$ when
\begin{equation}\label{eqn:broad-or-even}
\text{$u$ is chasing $u'$ or neither of $u,u'$ is chasing the other.}
\end{equation}

Generally, we proceed by contradiction. If $A_2$ does not lie in $\zeta(u,u')$, we want to construct a strictly larger parallelogram $Q'$ arbitrarily close to $Q$.
As mentioned in subsection~\ref{subsect:techover}, we construct $Q'$ by \emph{(slightly) changing the position of $A_2$ while fixing its opposite corner and adjusting the other corners accordingly within $\partial P$}.\smallskip

First, we state three key arguments and deduce the clamping bounds from them.
\begin{enumerate}
\item[(i)] At least one of the points $Z_a^{a'},Z_b^{b'}$ lies in $(A_1\circlearrowright A_3)$.
\item[(ii)] When point $Z_a^{a'}$ lies in $(A_1 \circlearrowright A_3)$, corner $A_2\notin (A_1 \circlearrowright Z_a^{a'})$.
\item[(iii)] When point $Z_b^{b'}$ lies in $(A_1 \circlearrowright A_3)$, corner $A_2\notin (Z_b^{b'} \circlearrowright A_3)$.
\end{enumerate}

Since $A_2$ always lies in $(A_1\circlearrowright A_3)$, using (i), (ii) and (iii) we can obtain $A_2\in [Z_a^{a'}\circlearrowright Z_b^{b'}]$.
To see this more clearly, consider whether both $Z_a^{a'},Z_b^{b'}$ or only one of them lies in $\rho=(A_1\circlearrowright A_3)$.
Let us assume that both $Z$-points lie in $\rho$; the other case is easier and similar.
Now, there are two subcases: $Z_a^{a'}\leq_\rho Z_b^{b'}$ or $Z_b^{b'}<_\rho Z_a^{a'}$,
  as shown in Figure~\ref{fig:PC1}~(a),(b).
In the former subcase, by (ii) and (iii), $A_2$ can only lie in $[Z_a^{a'}\circlearrowright Z_b^{b'}]$.
In the latter subcase, by (ii) and (iii), $A_2\notin (A_1\circlearrowright A_3)$; so actually this subcase would not happen.
(It cannot happen indeed due to the bi-monotonicity of the $Z$-points; see Lemma~\ref{lemma:Z_bi-monotonicity} and see also Figure~\ref{fig:zeta2}.)

\begin{figure}[h]
\centering\includegraphics[width=.75\textwidth]{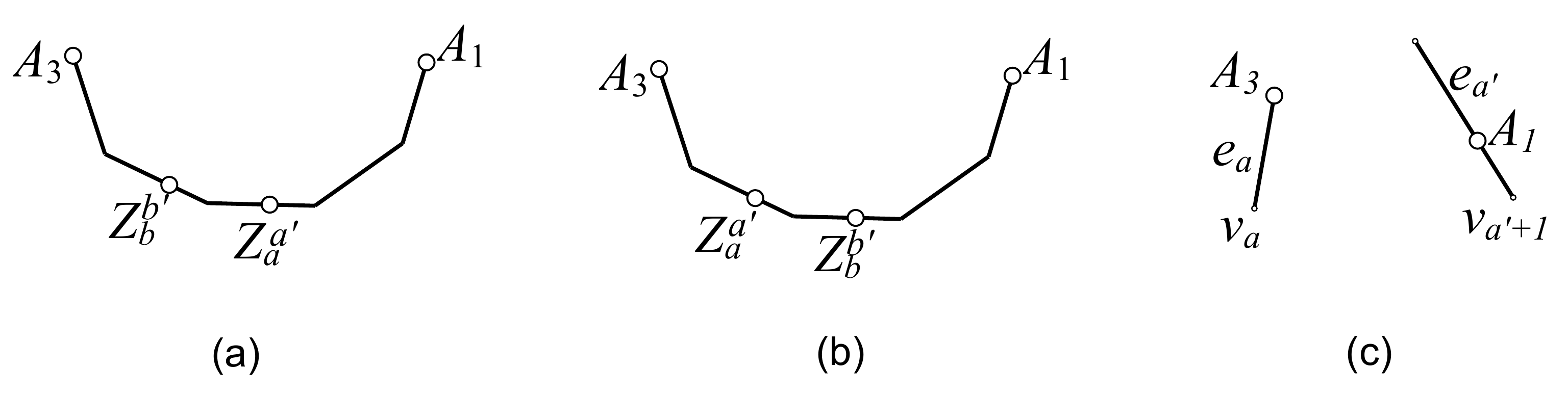}
\caption{Proofs of the clamping bounds - preliminary part}\label{fig:PC1}
\end{figure}

\smallskip We state one more argument before we prove (i), (ii) and (iii).
\begin{enumerate}
\item[(*)] $back(u')\neq back(u)$ and $forw(u')\neq forw(u)$.
\end{enumerate}

\noindent
Proof of (*):  Since $A_1,A_3$ are opposite corners of an inscribed parallelogram,
$back(A_1)\neq back(A_3)$ and $forw(A_1)\neq forw(A_3)$; this implies (*).

\medskip \noindent
Proof of (i). By (\ref{eqn:broad-or-even}), $u'$ is not chasing $u$.
This means $back(u')\nprec back(u)$ or $forw(u')\nprec forw(u)$.
Further since (*), we get $back(u)\prec back(u')$ or $forw(u)\prec forw(u')$.
Therefore, the following observations easily imply that at least one point in $\{Z_a^{a'},Z_b^{b'}\}$ lies in $(A_1 \circlearrowright A_3)$.
\begin{enumerate}
\item[(i.1)] If $back(u)\prec back(u')$, point $Z_a^{a'}$ lies in $(A_1 \circlearrowright A_3)$.
\item[(i.2)] If $forw(u)\prec forw(u')$, point $Z_b^{b'}$ lies in $(A_1 \circlearrowright A_3)$.
\end{enumerate}

We only prove (i.1); (i.2) is symmetric.

Assume that $back(u)\prec back(u')$. Then, due to (\ref{eqn:zeta_generalized}),
$e_{a'}=back(u')=back(A_1)$ and $e_a=back(u)=back(A_3)$. So $(v_{a'+1}\circlearrowright v_a)\subseteq (A_1 \circlearrowright A_3)$, as illustrated in Figure~\ref{fig:PC1}~(c).
However, $Z_a^{a'}\in (v_{a'+1}\circlearrowright v_a)$ by Lemma~\ref{lemma:dist-unique-location}.
So, $Z_a^{a'}\in (A_1 \circlearrowright A_3)$.

\medskip \noindent We prove (ii) in the following. The proof of (iii) is symmetric and omitted.

For a contradiction, suppose that $Z_{a}^{a'}\in (A_1 \circlearrowright A_3)$ and $A_2\in(A_1 \circlearrowright Z_{a}^{a'})$.
We shall show that $Q$ is not locally maximal.

We have to discuss two different cases.
Notice that $back(u)\neq back(u')$ by (*), so there are two cases:
$back(u)\prec back(u')$, or $back(u')\prec back(u)$.

\smallskip \noindent \textbf{Case~1:} $back(u)\prec back(u')$. See Figure~\ref{fig:PC2}~(a).

In this case, $e_a=back(u)=back(A_3)$ and $e_{a'}=back(u')=back(A_1)$ by (\ref{eqn:zeta_generalized}).
Take a point $B$ from the intersection of $forw(A_2)$ and $(A_2\circlearrowright Z_a^{a'})$.
Let point $X$ be restricted to segment $\overline{A_2B}$ and distinct from $A_2$.
Denote $Q_X=\parallelogram(X,A_0,\el_a,\el_{a'})$ and $d()=\dist_{\el_{a},\el_{a'}}()$.
The following imply that $Q=Q_{A_2}$ is not locally maximal.

(I) $Area(Q_X)>Area(Q_{A_2})$.

(II) $Q_X$ is inscribed in $P$ when $X$ is sufficiently close to $A_2$.

\begin{figure}[h]
\centering\includegraphics[width=.9\textwidth]{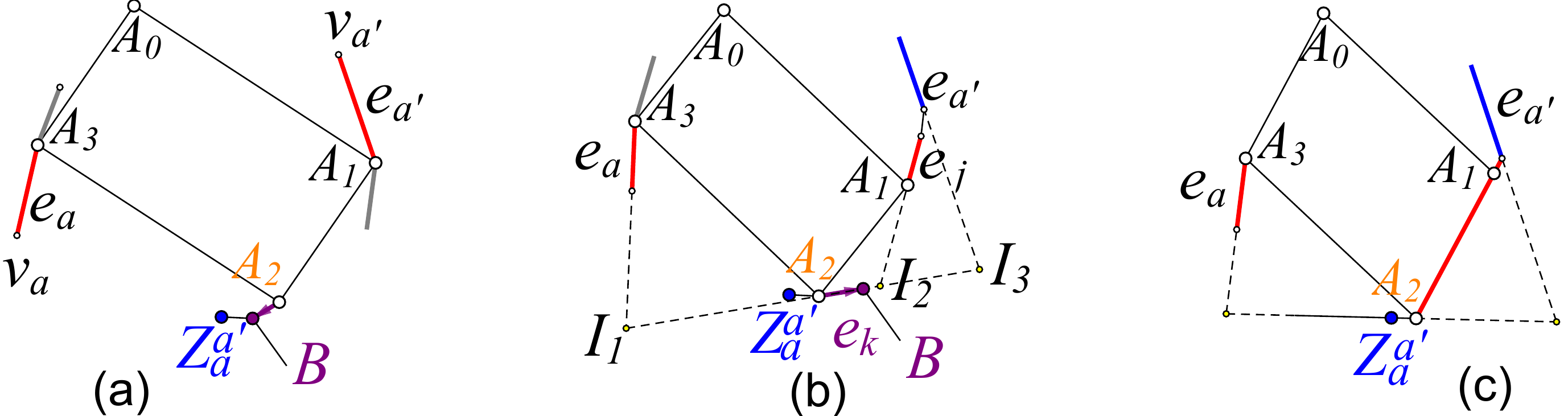}
\caption{Proofs of the clamping bounds - essential part.}\label{fig:PC2}
\end{figure}

\smallskip \noindent Proof of (I): It follows from the following facts.
$$
\begin{array}{cl}
  d(A_2)>d(A_0), & \hbox{since $e_a\prec e_{a'}$ and according to Lemma~\ref{lemma:area-pre};} \\
  d(X)>d(A_2), & \hbox{according to unimodality of $d()$ (Lemma~\ref{lemma:dist_unimodal});}\\
  \begin{array}{c}
     Area(Q_X)=c\cdot |d(X)-d(A_0)|,\\
     Area(Q_{A_2})=c\cdot |d(A_2)-d(A_0)|,
   \end{array} & \hbox{due to (\ref{eqn:area}). Here, $c$ is a positive constant.}
\end{array}$$

\noindent Proof of (II): Notice that $Q_X=\parallelogram(X,A_0,\el_a,\el_{a'})$ has a corner lying in $\el_a$ and a corner in $\el_{a'}$.
We shall prove that when $X$ moves straight from $A_2$ towards $B$,

(II.1) the corner of $Q_X$ restricted to $\el_a$ moves toward $v_a$; and

(II.2) the corner of $Q_X$ restricted to $\el_{a'}$ moves toward $v_{a'}$.

We prove (II.1) in the following; (II.2) is symmetric.

Because $A_1,A_2$ are neighboring corners of $Q$, we get $back(A_1)\prec forw(A_2)$, i.e.\ $e_{a'}\prec forw(A_2)$.
Therefore, $X$ gradually gets away from $\el_{a'}$ during its movement from $A_2$ to $B$.
So, the center of $Q_X$ gradually gets away from $\el_{a'}$, since it moves in the same direction as $X$.
So, the reflection of $\el_{a'}$ around the center of $Q_X$ gradually gets away from $\el_{a'}$;
i.e.\ the corner of $Q_X$ inscribed in $\el_a$ gradually gets away from $\el_{a'}$.
This implies (II.1) since $e_a\prec e_{a'}$.

\bigskip \noindent \textbf{Case~2:} $back(u')\prec back(u)$. See Figure~\ref{fig:PC2}~(b).

We first state that $back(A_2)\neq back(A_1)$; its proof is deferred for a moment.

Denote $e_j=back(A_1),e_k=back(A_2)$.
Let $B$ be any point in $e_k$ but not in $[A_2\circlearrowright Z_a^{a'}]$.
Let point $X$ be restricted to segment $\overline{A_2B}$ and distinct from $A_2$.
Denote $Q_X=\parallelogram(X,A_0,\el_a,\el_j)$.
Assume that $\el_k$ intersects $\el_a,\el_j,\el_{a'}$ at $\I_1,\I_2,\I_3$, respectively.\smallskip

Applying the unimodality of $\dist_{\el_a,\el_{a'}}$ (Lemma~\ref{lemma:dist_unimodal}), $\dist_{\el_a,\el_{a'}}(X)$ strictly decreases when $X$ moves straight from $A_2$ to $B$. This implies that $|A_2\I_3|\leq |A_2\I_1|$ according to Lemma~\ref{lemma:dist_concave}, which further implies that $|A_2\I_2|\leq |A_2\I_1|$. Apply the last inequality and Lemma~\ref{lemma:dist_concave} again, $\dist_{\el_a,\el_j}(X)$ decreases strictly when $X$ moves straight from $A_2$ to $B$. So, $\dist_{\el_a,\el_j}(X)<\dist_{\el_a,\el_j}(A_2)$.
(Be aware of the different subscripts of $\dist$.)

\smallskip Notice that $e_j\prec e_a$ since $back(u')\prec back(u)$. Therefore, by Lemma~\ref{lemma:area-pre} and Lemma~\ref{lemma:area}, $Area(Q_X)$ is proportional to $\dist_{\el_a,\el_j}(A_0)-\dist_{\el_a,\el_j}(X)$.

Combining the above two results, we obtain $Area(Q_X)>Area(Q_{A_2})$.

In addition, we claim that $Q_X$ is inscribed in $P$ when $X$ is sufficiently close to $A_2$. The proof is similar to that of Case~1 and hence omitted.

Together, $Q=Q_{A_2}$ is not locally maximal.

\medskip Finally, we verify $back(A_2)\neq back(A_1)$ stated above.
Suppose to the contrary that $back(A_2)=back(A_1)$, as shown in Figure~\ref{fig:PC2}~(c).
Since $back(A_2)=back(A_1)$, we get $forw(A_1)=back(A_2)$.
Since $A_2,A_3$ are neighboring corners of an inscribed parallelogram, we get $back(A_2)\prec forw(A_3)$.
Combining these two formulas, $forw(A_1)\prec forw(A_3)$, namely, $forw(u')\prec forw(u)$.
Further since $back(u')\prec back(u)$, unit $u'$ is chasing $u$, which contradicts assumption (\ref{eqn:broad-or-even}).
 \end{proof}

\section{Algorithm(s) for computing the LMAPs}\label{sect:algo}

This section demonstrates an $O(n^2)$ time algorithm for computing the LMAPs which utilizes the $\Theta(n^2)$ clamping bounds shown in the last section (Lemma~\ref{lemma:clamping-broad}).

\smallskip \noindent \textbf{A clarification.} Our algorithm will output many candidates of the LMAPs.
  Each LMAP is a candidate and thus all LMAPs will be outputted, yet not every candidate is an LMAP.
  For simplicity, we do not check whether a candidate is an LMAP or not,
    since it is unnecessary -- we can find the MAP anyway by choosing the largest candidate.
    (Nevertheless, checking the local maximality only takes $O(1)$ time since only a small amount of local information of $P$ matter in such a check.)

\begin{definition}[Classification of corners of inscribed parallelograms]\label{def:classification}
Assume a parallelogram $A_0A_1A_2A_3$ is inscribed in $\partial P$, and $A_0,A_1,A_2,A_3$ lie in clockwise order.
We classify every corner $A_i$ as narrow, broad or even:
\begin{enumerate}
\item[] \textbf{\emph{narrow}}: if $\unit(A_{i-1})$ is chasing $\unit(A_{i+1})$. (subscripts are taken modulo 4)
\item[]  \textbf{\emph{broad}}: if its opposite corner $A_{i+2}$ is narrow; i.e., if $\unit(A_{i+1})$ is chasing $\unit(A_{i-1})$.
\item[]  \textbf{\emph{even}}: if otherwise; i.e.\ if neither of $\unit(A_{i+1})$ and $\unit(A_{i-1})$ is chasing the other.
\end{enumerate}
\end{definition}

\begin{figure}[h]
\centering\includegraphics[width=.2\textwidth]{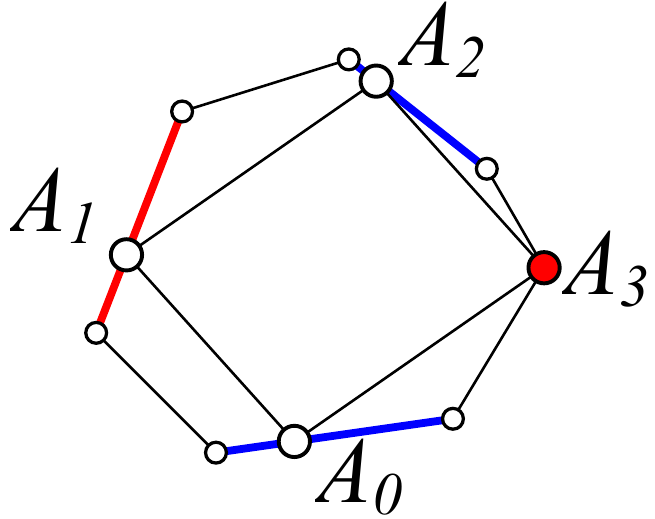}
\setcaptionwidth{.85\textwidth}
\caption{Illustration of broad, narrow and even corners.
In this example, unit $\unit(A_2)$ is chasing $\unit(A_0)$,
so $A_1$ is broad and $A_3$ is narrow.
Neither of $\unit(A_1),\unit(A_3)$ is chasing the other,
so $A_0$ and $A_2$ are both even.}\label{fig:threetypes}
\end{figure}

\begin{remark}
With the terms introduced in the above classification,
  we can observe that Lemma~\ref{lemma:clamping-broad} bounds broad corners, whereas Lemma~\ref{lemma:clamping-even} bounds even corners.
\end{remark}

A corner of an LMAP is \textbf{\emph{anchored}} if it coincides with a vertex of $P$.
Lemma~\ref{lemma:narrow_anchored} implies that an LMAP has an anchored corner and it in fact implies the following.

\begin{lemma}\label{lemma:3LMAPs}
For any LMAP $Q$, at least one of the following holds.
(a) $Q$ has an anchored narrow corner.
(b) $Q$ has an anchored broad corner that has at least one adjacent corner anchored.
(c) $Q$ has four even corners.
\end{lemma}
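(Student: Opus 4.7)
\medskip
\noindent\textbf{Proof proposal.}
My plan is to reduce the statement to two elementary ``anchoring facts'' about the corners of an LMAP and then run a short case analysis on the narrow/broad/even classification. The first fact (``even implies a neighbor is anchored'') says that whenever a corner $A_i$ is even, at least one of its adjacent corners $A_{i-1},A_{i+1}$ lies on a vertex of $P$; indeed, two distinct open edges always satisfy exactly one chasing relation (by the pairwise-nonparallel assumption on the edges of $P$), so if both $A_{i-1},A_{i+1}$ were in open edges, the units $\unit(A_{i-1}),\unit(A_{i+1})$ would be two edges with a decisive chasing relation between them, contradicting the evenness of $A_i$. The second fact (``narrow implies self or a neighbor is anchored'') is just the contrapositive of Lemma~\ref{lemma:narrow_anchored}: if $A_i$ is narrow and not anchored, then at least one of $A_{i-1},A_{i+1}$ must be anchored, for otherwise the hypothesis of Lemma~\ref{lemma:narrow_anchored} is met and $A_i$ would be forced onto a vertex.

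With these two facts the reduction is routine. If all four corners are even, case~(c) holds and we stop. Otherwise some corner is non-even; since narrow and broad come in opposite pairs, by a cyclic relabeling I may assume $A_0$ is narrow and $A_2$ is broad. If $A_0$ is anchored we are in case~(a), so from now on assume that no narrow corner of $Q$ is anchored and the target is~(b). Applying Fact~2 to the non-anchored narrow $A_0$ already yields that at least one of $A_1,A_3$ is anchored, supplying the ``adjacent anchored corner'' required by~(b).

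The remaining task, which I view as the crux, is to show that the broad corner $A_2$ itself is anchored. I do this by splitting on the type of $A_1$. If $A_1$ is narrow, the standing assumption makes it non-anchored, so Fact~2 on $A_1$ forces one of its neighbors $A_0,A_2$ to be anchored; since $A_0$ is not, $A_2$ must be. If $A_1$ is broad, then its opposite $A_3$ is narrow and hence non-anchored, and Fact~2 applied to $A_3$ (whose neighbors are $A_0,A_2$) again lands the anchoring on $A_2$. If $A_1$ is even, Fact~1 applied to $A_1$ gives the same conclusion directly. In every sub-case $A_2$ is anchored, and combined with the anchored neighbor obtained in the previous paragraph this completes~(b).

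The main obstacle is precisely this last ``$A_2$ is anchored'' step: Lemma~\ref{lemma:narrow_anchored} a priori only constrains narrow corners, so there is no immediate reason why the broad corner $A_2$ cannot slide freely along an open edge. The resolution is that, whatever the type of $A_1$, either the chasing dichotomy on edge pairs (Fact~1) or a second application of Lemma~\ref{lemma:narrow_anchored} (to $A_1$ itself in the narrow case, to its opposite $A_3$ in the broad case) supplies an ``anchoring pressure'' that $A_0$ (being free) cannot absorb, and therefore must land on $A_2$.
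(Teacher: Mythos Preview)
Your proof is correct. Both your approach and the paper's rest on Lemma~\ref{lemma:narrow_anchored}, but the decompositions differ. The paper first disposes of the case where some \emph{opposite} pair of corners is entirely unanchored: both then lie on open edges, one of which chases the other, so Lemma~\ref{lemma:narrow_anchored} immediately produces an anchored narrow corner and (a) holds. With that case gone, the paper works under the clean hypothesis~(X) that every opposite pair contains at least one anchored corner, and the remaining analysis is very short---(X) is applied once to $\{A_1,A_3\}$ and once to $\{A_0,A_2\}$, with no need to inspect the type of $A_1$. You instead isolate two local anchoring facts (``even $\Rightarrow$ a neighbor is anchored'' and ``narrow but unanchored $\Rightarrow$ a neighbor is anchored'') and run a three-way case split on the type of $A_1$ to force the anchoring onto $A_2$; in effect your three sub-cases re-derive~(X) for the pair $\{A_0,A_2\}$ by hand. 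The paper's route is shorter, but your two facts are pleasant standalone observations and make the role of the edge pairwise-nonparallel assumption more explicit.
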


\begin{proof}
Assume $Q=A_0,A_1,A_2,A_3$, where $A_0,A_1,A_2,A_3$ lie in clockwise order.

First, suppose a pair of opposite corners of $Q$ are both unanchored.
Assume that $A_3,A_1$ are unanchored and edge $\unit(A_3)$ is chasing edge $\unit(A_1)$ (Figure~\ref{fig:PC0}).
Since $A_0$ is anchored by Lemma~\ref{lemma:narrow_anchored} and is narrow by definition~\ref{def:classification}, we get (a).
Next, assume that (X) among each pair of opposite corners, at least one corner is anchored.

If neither of $\unit(A_0),\unit(A_2)$ is chasing the other and so do the pair $\unit(A_1)$ and $\unit(A_3)$, all the corners are even and (c) holds.
Next, assume $\unit(A_2)$ is chasing $\unit(A_0)$.
This means $A_3$ is narrow and $A_1$ is broad by Definition~\ref{def:classification}.
Further, since at least one of $A_1,A_3$ is anchored by the assumption (X), $Q$ has an anchored narrow corner or an anchored broad corner (or both).
Further applying (X), (a) or (b) holds.
 \end{proof}

Our general algorithm consists of three algorithms.
One computes those LMAPs with \emph{one anchored narrow corner}.
One computes those with \emph{one anchored broad corner that has at least one adjacent corner anchored}.
Another computes those with \emph{four even corners}.
They are sufficient for computing all the LMAPs due to Lemma~\ref{lemma:3LMAPs}.
We will see each of them runs in $O(n^2)$ time, so the general algorithm runs in $O(n^2)$ time.
The first two algorithms use similar ideas -- they both apply the clamping bounds (Lemma~\ref{lemma:clamping-broad}).
The last algorithm is straightforward.

\subsection{Definition of blocks and some related bounds of the LMAPs}\label{subsect:blocks-transform}

We now deduce new bounds for corners of LMAPs from the clamping bounds
  by using the geometric fact that two diagonals of a parallelogram bisect each other.

\begin{definition}[Region $u \oplus u'$ for two units $u,u'$]
Recall that $\M(X,X')$ denotes the mid point of $X,X'$.
For distinct units $u,u'$, denote
\begin{equation}\label{def:midregion}
u\oplus u'=\{\M(X,X')\mid X\in u, X'\in u'\}.
\end{equation}
The shape of $u\oplus u'$ is a parallelogram, a segment, or a point.
More specifically, $e_i\oplus e_j$ is an open parallelogram, whose four corners are respectively $\M(v_i,v_j)$, $\M(v_i,v_{j+1})$, $\M(v_{i+1},v_j)$, and $\M(v_{i+1},v_{j+1})$;
region $e_i\oplus v_j$ is the open segment $\overline{\M(v_i,v_j)\M(v_{i+1},v_j)}$;
region  $v_i\oplus e_j$ is the open segment $\overline{\M(v_i,v_j)\M(v_i,v_{j+1})}$;
and $v_i\oplus v_j$ is the single point $\M(v_i,v_j)$. See Figure~\ref{fig:block_def}.
\end{definition}

\begin{definition}[Blocks]
Recall reflection and scaling in section~\ref{sect:preliminiary}.
Assume $u$ is chasing $u'$.
Recall $\zeta(u,u')$ in Definition~\ref{def:zeta}.
We consider four cases.
\begin{enumerate}
\item $(u,u')=(e_i,e_j)$. See Figure~\ref{fig:block_def}~(a).\\
    The $2$-scaling of $e_i\oplus e_j$ about point $Z_i^j$ is a parallelogram whose sides are congruent to either $e_i$ or $e_j$. We define $\block(e_i,e_j)$ to be this parallelogram.

\item $(u,u')=(v_i,v_j)$. See Figure~\ref{fig:block_def}~(d).\\
    The reflection of $\zeta(v_i,v_j)$ around $\M(v_i,v_j)$ is a polygonal curve, and we refer to it as $\block(v_i,v_j)$.

\item $(u,u')=(v_i,e_j)$. See Figure~\ref{fig:block_def}~(b).\\
    In this case, $\block(v_i,e_j)$ is the region bounded by four curves:\\
    the $2$-scaling of $v_i\oplus e_j$ about $Z_{i-1}^j$; the $2$-scaling of $v_i\oplus e_j$ about $Z_i^j$;
    the reflection of $\zeta(v_i,e_j)$ around $\M(v_i,v_j)$; the reflection of $\zeta(v_i,e_j)$ around $\M(v_i,v_{j+1})$.\\
    Observe that $\block(v_i,e_j)$ consists of parallelograms that are parallel to $e_j$.

\item $(u,u')=(e_i,v_j)$. See Figure~\ref{fig:block_def}~(c).\\
    We define $\block(e_i,v_j)$ symmetric to $\block(v_i,e_j)$.\\
    Observe that $\block(e_i,v_j)$ consists of parallelograms that are parallel to $e_i$.
\end{enumerate}
In this way, we define a set of $\Theta(n^2)$ regions $\{\block(u,u')\mid u \text{ is chasing }u'\}$ in the plane.
 For convenience, we call each such region a \emph{block}.
\end{definition}

\begin{figure}[t]
\centering\includegraphics[width=\textwidth]{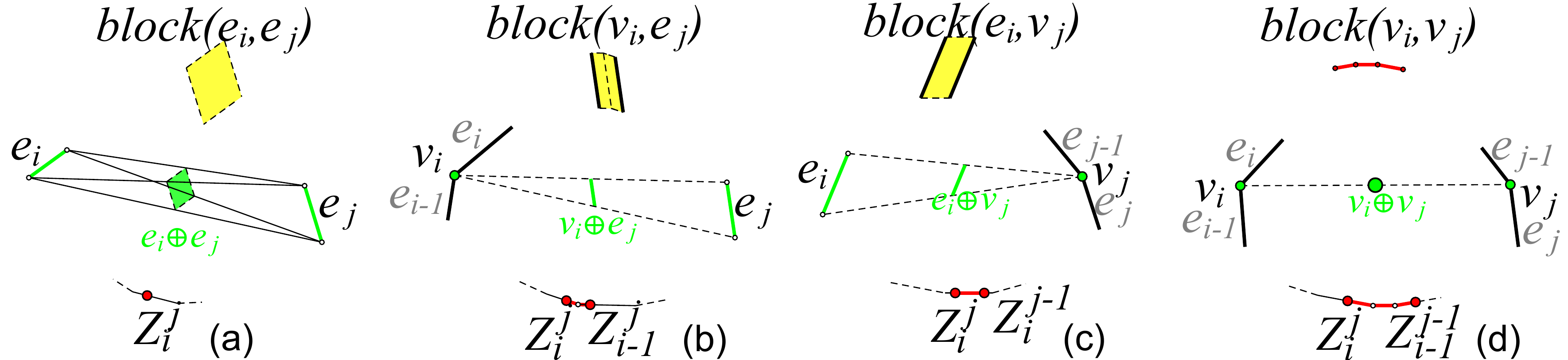}
\caption{Illustration of the geometric definition of the blocks.}\label{fig:block_def}
\end{figure}

\begin{lemma}\label{lemma:transformed}
Assume $A_0A_1A_2A_3$ is an LMAP, where $A_0,A_1,A_2,A_3$ lie in clockwise order, and $\unit(A_{i-1})$ is chasing $\unit(A_{i+1})$,
which means $A_i$ is narrow by Definition~\ref{def:classification}.
Then, $A_i$ lies in $\block(u,u')$, where $u=\unit(A_{i-1})$ and $u'=\unit(A_{i+1})$.
\end{lemma}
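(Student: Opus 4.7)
The plan is to combine the parallelogram midpoint identity with Lemma~\ref{lemma:clamping-broad} applied to the opposite corner $A_{i+2}$. Because the diagonals of any parallelogram bisect each other, $\M(A_i,A_{i+2})=\M(A_{i-1},A_{i+1})$, which rearranges to
\[ A_i \;=\; 2\,\M(A_{i-1},A_{i+1}) \,-\, A_{i+2}.\]
Since $A_{i-1}\in u$ and $A_{i+1}\in u'$, the midpoint $\M(A_{i-1},A_{i+1})$ lies in $u\oplus u'$ by~(\ref{def:midregion}).

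Next I would bound the opposite corner $A_{i+2}$. Its two neighbors are $A_{i+1}$ and $A_{i+3}=A_{i-1}$, so relative to Lemma~\ref{lemma:clamping-broad} the relevant units are $\unit(A_{(i+2)+1})=\unit(A_{i-1})=u$ and $\unit(A_{(i+2)-1})=\unit(A_{i+1})=u'$; the chasing condition required there is ``$u$ is chasing $u'$'', which is exactly the standing hypothesis of the present lemma. Hence Lemma~\ref{lemma:clamping-broad} yields $A_{i+2}\in\zeta(u,u')$. Combining the two containments,
\[ A_i \;\in\; R(u,u')\;:=\;\{\,2M-X \,:\, M\in u\oplus u',\; X\in \zeta(u,u')\,\}.\]

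It remains to identify $R(u,u')$ with $\block(u,u')$, case by case. When $(u,u')=(e_i,e_j)$, $\zeta(u,u')=\{Z_i^j\}$ collapses to a single point (since $back(e_i)=forw(e_i)=e_i$), and $R(u,u')=\{2M-Z_i^j:M\in e_i\oplus e_j\}$ is precisely the $2$-scaling of $e_i\oplus e_j$ about $Z_i^j$, matching the definition of $\block(e_i,e_j)$. Symmetrically, when $(u,u')=(v_i,v_j)$, $u\oplus u'=\{\M(v_i,v_j)\}$ is a single point, and $R(u,u')$ is the reflection of $\zeta(v_i,v_j)$ about $\M(v_i,v_j)$, i.e.\ $\block(v_i,v_j)$. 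In the two mixed cases (which are symmetric to each other), both factors $u\oplus u'$ and $\zeta(u,u')$ are genuine curves; since the map $(M,X)\mapsto 2M-X$ is affine in each argument, $R(u,u')$ is swept continuously by the segment $\{2M-X:M\in u\oplus u'\}$ as $X$ traverses $\zeta(u,u')$, and its boundary is formed by pinning one of the two parameters to one of its endpoints.

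The only real obstacle is the bookkeeping for the mixed cases: one must check that the four boundary curves of the sweep---namely, the two $2$-scalings of $u\oplus u'$ about the endpoints of $\zeta(u,u')$ and the two reflections of $\zeta(u,u')$ about the endpoints of $u\oplus u'$---match the four curves named in the definition of $\block(v_i,e_j)$ (resp.\ $\block(e_i,v_j)$) in the correct orientation. This reduces to verifying that the endpoints of $v_i\oplus e_j$ are $\M(v_i,v_j)$ and $\M(v_i,v_{j+1})$ and that the endpoints of $\zeta(v_i,e_j)$ are $Z_{i-1}^j$ and $Z_i^j$, both of which are immediate from the relevant definitions. Once this correspondence is in hand, the lemma follows.
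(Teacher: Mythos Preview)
Your proposal is correct and follows essentially the same route as the paper's proof: use the diagonal-bisection identity to write $A_i=2\M(A_{i-1},A_{i+1})-A_{i+2}$, place the midpoint in $u\oplus u'$, invoke Lemma~\ref{lemma:clamping-broad} on the opposite corner to get $A_{i+2}\in\zeta(u,u')$, and then unfold the definition of $\block(u,u')$. The paper only spells out the edge--edge case and dismisses the remaining three as ``similar'', whereas you carry out the case split more explicitly; but the underlying argument is identical.
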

\begin{proof}
We only discuss the case where $u,u'$ are edges, e.g.\ $u=e_i,u'=e_j$. The other cases are similar.
See Figure~\ref{fig:block_def}~(a).
Since $A_{i-1}$ and $A_{i+1}$ lie in $u,u'$ respectively, $\M(A_{i-1},A_{i+1})$ lies in $u\oplus u'= e_i\oplus e_j$.
Since two diagonals of $Q$ bisect each other, $\M(A_{i-1},A_{i+1})=\M(A_i,A_{i+2})$.
Together, the mid point of $A_i,A_{i+2}$ lies in $(e_i\oplus e_j)$.
Applying Lemma~\ref{lemma:clamping-broad}, $A_{i+2}$ lies in $\zeta(u,u')$.
In other words, $A_{i+2}=Z_i^j$.
Together, $A_i$ lies in the 2-scaling of region $(e_i\oplus e_j)$ about $Z_i^j$,
i.e., $A_i\in \block(u,u')$.
 \end{proof}

\subsection{Compute the LMAPs with an anchored narrow corner}

\textbf{Note:} In the rest of this section, we always assume that $Q=A_0A_1A_2A_3$ is an LMAP (to be determined)
 and $A_0,A_1,A_2,A_3$ lie in clockwise order.

\begin{claim}\label{claim:N}
Suppose $A_3\in u$ and $A_1\in u'$, where $(u,u')$ is a given pair of units such that $u$ is chasing $u'$.
Suppose the position (which means the coordinates) of $A_0$ is given.
Then the positions of four corners of $Q$ are all determined.
\end{claim}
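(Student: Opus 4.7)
The plan is to combine the parallelogram identity $A_0+A_2=A_1+A_3$ with the clamping bound from Lemma~\ref{lemma:clamping-broad}, applied to corner $A_2$, in order to reconstruct $A_1,A_2,A_3$ from the given data. Since $A_3\in u$, $A_1\in u'$, and $u$ chases $u'$, the lemma forces $A_2\in\zeta(u,u')$. I would then run a case analysis on the types (vertex or edge) of $u$ and $u'$.

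In the vertex--vertex case $(u,u')=(v_i,v_j)$, the corners $A_3=v_i$ and $A_1=v_j$ are fixed at once, and then $A_2=v_i+v_j-A_0$. In the edge--edge case $(u,u')=(e_i,e_j)$, the enumeration immediately after Lemma~\ref{lemma:clamping-broad} gives $\zeta(e_i,e_j)=\{Z_i^j\}$, so $A_2=Z_i^j$ is a single determined point; then the center $O=\M(A_0,Z_i^j)$ is known, and Claim~\ref{claim:ll'} applied with lines $\el_i,\el_j$ and midpoint $O$ produces the unique pair $(A_3,A_1)\in\el_i\times\el_j$ with $\M(A_3,A_1)=O$.

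The substantive case is the mixed one; take $(u,u')=(e_i,v_j)$ (the mirror case $(v_i,e_j)$ is handled symmetrically). Here $A_1=v_j$ is fixed, and the parallelogram identity gives $A_2=A_3+(v_j-A_0)$, so as $A_3$ sweeps $e_i$ the corner $A_2$ sweeps a translate of $e_i$ --- a line segment parallel to $\el_i$, equivalently a level set of $d_{\el_i}(\cdot)$. Thus $A_2$ must lie at the intersection of this level line with $\zeta(e_i,v_j)=[Z_i^{j-1}\circlearrowright Z_i^j]$, and what remains is to argue that the intersection consists of at most one point.

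The hard part is precisely this uniqueness step: a priori, a line parallel to $\el_i$ can meet $\partial P$ in two points, one on each side of $\D_i$. My plan is to show that the entire clamping portion $[Z_i^{j-1}\circlearrowright Z_i^j]$ is contained in $[\D_i\circlearrowright v_i)$: by Lemma~\ref{lemma:dist-unique-location}~part~2, $Z_i^{j-1}\in[\D_i\circlearrowright\D_{j-1}]$ and $Z_i^j\in[\D_i\circlearrowright\D_j]\cap(v_{j+1}\circlearrowright v_i)$, so the portion lies past $\D_i$ clockwise and strictly before $v_i$. On $[\D_i\circlearrowright v_i)$ the function $d_{\el_i}(\cdot)$ is strictly monotone, since it is unimodal on $\partial P$ with unique maximum at $\D_i$; hence a level line of $d_{\el_i}(\cdot)$ meets this portion in at most one point. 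This pins down $A_2$ uniquely, and then $A_3=A_2-(v_j-A_0)$ and the whole parallelogram $Q$ are determined.
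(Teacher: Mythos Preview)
Your proposal is correct and follows essentially the same route as the paper's proof. Both argue case-by-case on the types of $u,u'$; both pin down $A_2$ via the clamping bound $\zeta(u,u')$ (a single point in the edge--edge case, a reflection in the vertex--vertex case, the intersection of a translated edge-line with $\zeta$ in the mixed case); and both finish by recovering $(A_3,A_1)$ from the midpoint via Claim~\ref{claim:ll'}. The only cosmetic difference is that the paper writes out the uniqueness argument for the mixed case $(v_i,e_j)$ (using $\zeta(v_i,e_j)\subset[v_{j+1}\circlearrowright\D_j]$ and monotonicity of $d_{\el_j}$), while you write out the mirror case $(e_i,v_j)$ (using $\zeta(e_i,v_j)\subset[\D_i\circlearrowright v_i)$ and monotonicity of $d_{\el_i}$); these are symmetric instances of the same argument.
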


\begin{proof}
First, $A_2$ can be determined as follows.
If $(u,u')=(e_i,e_j)$, point $A_2$ lies at $Z_i^j$.
If $(u,u')=(v_i,v_j)$, point $A_2$ lies at the reflection of $A_0$ around $\M(v_i,v_j)$.
If $(u,u')=(v_i,e_j)$, point $A_2$ lies at the intersection of $s$ and $\zeta(u,u')$, where $s$ is the extended line of the 2-scaling of $(v_i\oplus e_j)$ about $A_0$.
If $(u,u')=(e_i,v_j)$, point $A_2$ lies at the intersection of $s'$ and $\zeta(u,u')$, where $s'$ is the extended line of the 2-scaling of $(e_i\oplus v_j)$ about $A_0$.
To be more clear, we point out the following facts:\\
\quad (i) When $(u,u')=(v_i,e_j)$, line $s$ has at most one intersection with $\zeta(u,u')$.\\
\quad (ii) When $(u,u')=(e_i,v_j)$, line $s'$ has at most one intersection with $\zeta(u,u')$.

\smallskip We only prove (i). The proof of (ii) is symmetric.

Applying part~2 of Lemma~\ref{lemma:dist-unique-location},  $\zeta(v_i,e_j)\subset [v_{j+1}\circlearrowright \D_j]$.
Therefore, for any line that is parallel to $e_j$, such as $s$, it has at most one intersection with $\zeta(v_i,e_j)=\zeta(u,u')$.

\smallskip After $A_2$ is determined, $\M(A_2,A_0)$ is determined and so is $\M(A_1,A_3)$.
Because $A_3\in u$ and $A_1\in u'$, after $\M(A_1,A_3)$ is determined, we can determine $A_1,A_3$ easily in $O(1)$ time due to Claim~\ref{claim:ll'}.
Thus all corners of $Q$ are determined.
 \end{proof}

The framework of our first algorithm is given in Algorithm~\ref{alg:N} below.
It assumes that $A_0$ is an anchored narrow corner and that $A_3\in u$ and $A_1\in u'$
  (so $u$ have to be chasing $u'$, since $A_0$ is narrow).
The correctness of this algorithm follows from Lemma~\ref{lemma:transformed}, which claims that $A_0\in \block(u,u')$ under the assumption.

\begin{algorithm}[h]
\caption{Computing those LMAPs with an anchored narrow corner}\label{alg:N}
\ForEach{$(u,u')$ such that $u$ is chasing $u'$}{
    \ForEach{vertex $V$ in $\block(u,u')$}{
        $A_0\leftarrow V$ and compute $A_2$ as stated in the proof of Claim~\ref{claim:N};\\
        Compute $\M(A_1,A_3)=\M(A_0,A_2)$ and then compute $A_1,A_3$;\\
        Output $Q=A_0A_1A_2A_3$ .
    }
}
\end{algorithm}

To analyze its running time, we need to provide more details -- especially, how do we efficiently find all the vertices in $\block(u,u')$ and compute $A_2$?
(The other steps are easy to analyze by the proof of Claim~\ref{claim:N}.) We need the following lemma.

\begin{lemma}[Monotonicity of the blocks]\label{lemma:block_mono}
Given $u=e_i$. Assume $\D_i=v_k$.
We know $u$ is chasing all units in the list $U'=(e_{i+1},v_{i+2},\ldots,v_{k-1},e_{k-1})$ and no other units.
Therefore, $\block(u,u')$ is defined for $u'\in U'$.
We claim that $$\block(u,e_{i+1}),\block(u,v_{i+2}),\ldots,\block(u,v_{k-1}),\block(u,e_{k-1})$$
 have a monotonicity in the direction perpendicular to $e_i$.
  More formally, when we project these blocks along direction $e_i$ onto some line nonparallel to $e_i$, their images are pairwise-disjoint and in order.
  See Figure~\ref{fig:block_mono} for an illustration.
\end{lemma}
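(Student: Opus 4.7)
The plan is to fix coordinates so that $\el_i$ is the $x$-axis and $P$ lies in the upper half-plane; projection along direction $e_i$ onto any line nonparallel to $e_i$ then reduces to reading off $y$-coordinates, and the claim becomes: as $u'$ traverses $U'$, the $y$-ranges of the blocks $\block(e_i,u')$ form pairwise-disjoint intervals arranged in monotone order along the $y$-axis.

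First I would compute the $y$-range of each block explicitly. For an edge unit $u'=e_j$, the block is the $2$-scaling of $e_i\oplus e_j$ about $Z_i^j$; since a midpoint $\M(X,X')$ with $X\in e_i$, $X'\in e_j$ has $y$-coordinate $y(X')/2$, scaling by $2$ about $Z_i^j$ sends it to $y(X')-y(Z_i^j)$. By convexity, $y$ increases monotonically along $\partial P$ clockwise from $v_{i+1}$ up to $\D_i=v_k$, so $y(v_j)\le y(v_{j+1})$ whenever $i+1\le j\le k-1$, and the $y$-range of $\block(e_i,e_j)$ is exactly $[y(v_j)-y(Z_i^j),\,y(v_{j+1})-y(Z_i^j)]$. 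For a vertex unit $u'=v_j$, the block is bounded by two segments parallel to $e_i$ (the $2$-scalings of $e_i\oplus v_j$ about $Z_i^{j-1}$ and $Z_i^j$) at heights $y(v_j)-y(Z_i^{j-1})$ and $y(v_j)-y(Z_i^j)$, together with two side curves obtained by reflecting $\zeta(e_i,v_j)=[Z_i^{j-1}\circlearrowright Z_i^j]$ through $\M(v_i,v_j)$ and $\M(v_{i+1},v_j)$. Since part~2 of Lemma~\ref{lemma:dist-unique-location} places $\zeta(e_i,v_j)\subseteq[v_k\circlearrowright v_i]$, where $y$ is monotonically decreasing by convexity, each side curve has $y$-range exactly $[y(v_j)-y(Z_i^{j-1}),\,y(v_j)-y(Z_i^j)]$, matching the two horizontal boundaries; so this interval is the full $y$-range of $\block(e_i,v_j)$.

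Second, I would observe that the $y$-ranges form a chain: the end of $\block(e_i,e_j)$ at $y(v_{j+1})-y(Z_i^j)$ coincides with the start of $\block(e_i,v_{j+1})$, and the end of $\block(e_i,v_{j+1})$ at $y(v_{j+1})-y(Z_i^{j+1})$ coincides with the start of $\block(e_i,e_{j+1})$. To conclude the intervals are ordered with pairwise-disjoint interiors, each block's $y$-range must be traversed in the same (upward) direction. For edge blocks this is $y(v_{j+1})>y(v_j)$. For vertex blocks I need $y(Z_i^{j-1})\ge y(Z_i^j)$, and this is where the bi-monotonicity of $Z$-points (Lemma~\ref{lemma:Z_bi-monotonicity}) enters: it gives $Z_i^{j-1}\le_\rho Z_i^j$ along $\rho=[v_k\circlearrowright v_i]$, so $Z_i^j$ lies further clockwise on the descending side of $\partial P$, forcing $y(Z_i^j)\le y(Z_i^{j-1})$.

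The step I expect to be the main obstacle is the vertex-block $y$-range calculation---I must confirm that neither side curve (each a reflected portion of $\partial P$) escapes the range set by the two horizontal boundaries. This amounts to showing that $y$ varies monotonically along $\zeta(e_i,v_j)$, which follows from $\zeta(e_i,v_j)\subseteq[\D_i\circlearrowright v_i]$ (Lemma~\ref{lemma:dist-unique-location}) together with convexity. Once that monotonicity is secured, the chain-plus-direction argument above is essentially bookkeeping.
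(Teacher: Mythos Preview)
Your proposal is correct and rests on the same two ingredients as the paper's proof: the bi-monotonicity of the $Z$-points (Lemma~\ref{lemma:Z_bi-monotonicity}) and the monotone variation of distance to $\el_i$ along the two arcs $[v_{i+1}\circlearrowright v_k]$ and $[v_k\circlearrowright v_i]$. The paper's argument is a proof-by-example: it picks one consecutive pair of blocks, identifies the separating line parallel to $e_i$ (namely the extended line of the $2$-scaling of $e_i\oplus v_j$ about the shared $Z$-point), and verifies that each block lies on the correct side by invoking exactly the two facts above; the general case is left to ``similarly.'' Your version is the same argument made fully explicit: you set up coordinates, compute the $y$-interval of every block, and check that consecutive intervals share an endpoint and are traversed in a fixed direction. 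What you gain is completeness---in particular, your treatment of the vertex blocks confirms that the reflected side curves do not escape the horizontal strip, a detail the paper glosses over. What the paper's presentation buys is brevity. Either way the content is identical.
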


\begin{figure}[h]
\centering\includegraphics[width=.8\textwidth]{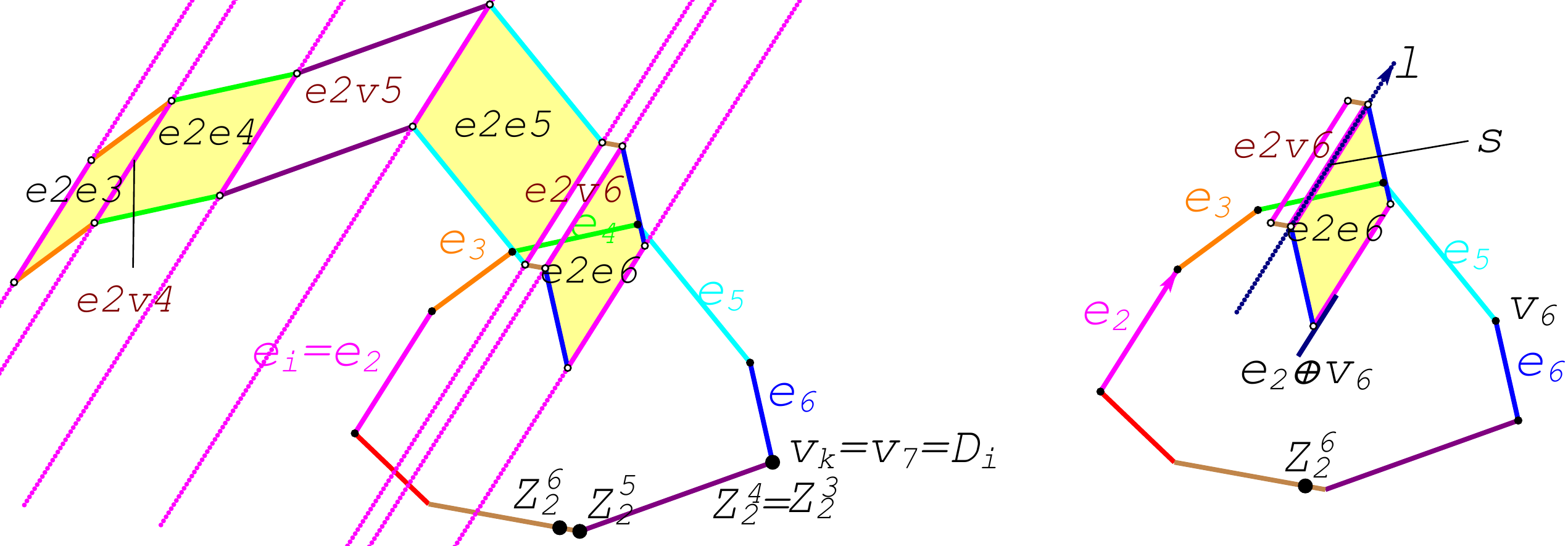}
\caption{Illustration of the monotonicity of the blocks (Lemma~\ref{lemma:block_mono}).}\label{fig:block_mono}
\end{figure}

\begin{proof}
We prove it by the example in Figure~\ref{fig:block_mono}.

First, we argue that $\block(e_2,v_6)$ and $\block(e_2,e_6)$ are separated by a line that is parallel to $e_2$.
Specifically, they are separated by the extended line $l$ of $s$, where $s$ denotes the 2-scaling of $(e_2\oplus v_6)$ about $Z_2^6$ (see the right picture).
Note that $s$ is a translate of $e_2$ and we assume that it has the same direction as $e_2$ (which is from $v_2$ to $v_3$).
By the definition of blocks, $s$ lies in $\block(e_2,v_6)$ and on the boundary of $\block(e_2,e_6)$.
Moreover, $\block(e_2,v_6)$ lies on the left of $s$, because $Z_2^5$ is further than $Z_2^6$ in the distance to the extended line of $e_2$.
  This uses the bi-monotonicity of $Z$-points given in Lemma~\ref{lemma:Z_bi-monotonicity}.
On the contrary, $\block(e_2,e_6)$ lies on the right of $s$, because $e_6$ is further than $v_6$ in the distance to the extended line of $e_2$.

Similarly, $\block(e_2,e_5)$ and $\block(e_2,v_6)$ are separated by the extended line of the 2-scaling of $(e_2\oplus v_6)$ about $Z_2^5$, and so on.
Thus, we obtain this lemma.
 \end{proof}

The details of Algorithm~\ref{alg:N} are given below.

\paragraph{1. How do we enumerate $u,u'$?}
We do the following for every edge $u$ (the case where $u$ is a vertex will be handled later). \emph{When $u=e_i$,
let $U'$ be the same as Lemma~\ref{lemma:block_mono}, and we let $u'$ go through all units in $U'$ in clockwise order}.

\paragraph{2. How do we find every vertex $V$ in $\block(u,u')$?}
  There are two lines parallel to $e_i$, denoted by $l_{u'},r_{u'}$, which separate $\block(u,u')$ from its two neighboring blocks.
\emph{We go through each vertex $V$ that lies between $l_{u'},r_{u'}$ and lies in $[v_i\circlearrowright v_k]$ in clockwise order.}
(Note: all vertices in $\block(u,u')$ will be enumerated in this way, but some enumerated vertices may not be in $\block(u,u')$.
We do not distinguish whether $V\in \block(u,u')$ at this moment since it is unnecessary and too expensive.)

We claim that for a fixed $u$, the total time for enumerating $u'$ and $V$ is only $O(n)$.
To achieve such running time, the main challenge lies in computing the two lines $l_{u'},r_{u'}$ efficiently for every $u'\in U'$.
This reduces to computing the endpoints of $\zeta(u,u')$ for every $u'$, i.e.\ $Z_i^{i+1},\ldots,Z_i^{k-1}$,
  which costs $O(n)$ time due to Lemma~\ref{lemma:Z-compute}.

\paragraph{3. How do we compute $A_2$?} Currently, $u=e_i,u',A_0=V$ are fixed.
  Due to Claim~\ref{claim:N}, we are able to determine $A_2$ now.
  (According to the proof of Claim~\ref{claim:N}, $A_2$ is well-defined even if $A_0\notin \block(u,u')$.)
    However, if we compute $A_2$ according to the proof of Claim~\ref{claim:N} (e.g., if we compute $A_2$ as $Z_i^j$ when $u'=e_j$), it would cost $O(\log n)$ time (by a binary search).
So, we need one more observation here.

\begin{claim}
Fix $u=e_i$, when we go through all $(u',V)$ as above, $A_2$ will always move in clockwise around $\partial P$,
  and can be computed in amortized $O(1)$ time.
\end{claim}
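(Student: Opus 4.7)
The plan is to prove two claims: (a) as the outer two loops enumerate the pairs $(u',V)$ in the order specified above, the resulting $A_2$ advances monotonically clockwise around $\partial P$; (b) using a single clockwise-moving pointer on $\partial P$, each new $A_2$ can be computed in amortized $O(1)$ time after an $O(n)$ preprocessing.

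For (a), I analyze according to the current $u'$. As $u'$ runs through the alternating list $U'=(e_{i+1},v_{i+2},e_{i+2},\ldots,v_{k-1},e_{k-1})$, the set of legal locations for $A_2$ is the boundary portion $\zeta(e_i,u')$: a single $Z$-point $\{Z_i^j\}$ when $u'=e_j$, and the arc $[Z_i^{j-1}\circlearrowright Z_i^j]$ when $u'=v_j$. By the bi-monotonicity of $Z$-points (Lemma~\ref{lemma:Z_bi-monotonicity}), the points $Z_i^{i+1},Z_i^{i+2},\ldots,Z_i^{k-1}$ lie in this exact clockwise order on $\partial P$, so the arcs $\zeta(e_i,u')$ appear clockwise-consecutively with adjacent arcs sharing the common endpoint $Z_i^j$ between $u'=v_j$ and $u'=e_j$. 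Hence the coarse movement of $A_2$ when $u'$ changes is already clockwise, and within a fixed $u'$ the case $u'=e_j$ is trivial because $A_2=Z_i^j$ is independent of $V$.

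The only remaining case is $u'=v_j$. Here $A_1=v_j$, $A_3\in e_i$, and the parallelogram relation yields $A_2=A_3+v_j-V$, which means $A_2$ lies on the line $s'$ parallel to $\el_i$ at signed height $d_{\el_i}(v_j)-d_{\el_i}(V)$ above $\el_i$. By Lemma~\ref{lemma:block_mono} the vertices $V$ of $P$ that fall inside $\block(e_i,v_j)$ form a clockwise-contiguous sub-range of $[v_{i+1}\circlearrowright \D_i]$, and on this sub-range $d_{\el_i}(V)$ is monotonically non-decreasing by the convexity of $P$; hence the height of $s'$ is monotonically non-increasing as $V$ advances. Since $\zeta(e_i,v_j)\subseteq[\D_i\circlearrowright v_i]$ lies past the unique maximizer $\D_i$ of $d_{\el_i}$ on $\partial P$ (combining Lemma~\ref{lemma:dist-unique-location} and the definition of $\zeta$), the function $d_{\el_i}$ strictly decreases as one moves clockwise along $\zeta(e_i,v_j)$, so $A_2=s'\cap\zeta(e_i,v_j)$ advances clockwise as $V$ advances clockwise.

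For (b), I first precompute $Z_i^{i+1},\ldots,Z_i^{k-1}$ in $O(n)$ time via Lemma~\ref{lemma:Z-compute}, then maintain a pointer $p$ recording the unit of $\partial P$ currently containing $A_2$. For every new pair $(u',V)$, I walk $p$ clockwise from its previous location until it contains the new $A_2$ (detected by an $O(1)$ side test), and then read off $A_2$ in $O(1)$ time (either as the precomputed $Z_i^j$ when $u'=e_j$ or as the intersection of the known line $s'$ with the current edge when $u'=v_j$). By (a) the pointer never moves backward and it stays inside $[Z_i^{i+1}\circlearrowright Z_i^{k-1}]$, so it touches at most $O(n)$ units in total over the entire iteration for the fixed $u=e_i$. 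Spreading this $O(n)$ work over the $O(n)$ pairs $(u',V)$ examined yields amortized $O(1)$ per pair. The principal obstacle is the monotonicity argument inside the case $u'=v_j$: one must verify that the geometric orientations on both sides (the vertex-side and the $A_2$-side) align so that the height function decreases in the clockwise direction, which ultimately boils down to the placement of $\zeta(e_i,v_j)$ past $\D_i$; once this orientation check is settled, the amortized pointer argument is routine.
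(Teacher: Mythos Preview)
Your proposal is correct and follows essentially the same route as the paper's own proof: for $u'=e_j$ you pin $A_2=Z_i^j$, for $u'=v_j$ you observe that $A_2$ lies on a line parallel to $\ell_i$ whose distance to $\ell_i$ decreases as $V$ advances clockwise, and you stitch the pieces together using the bi-monotonicity of the $Z$-points; the amortized cost follows from a single clockwise pointer. In fact you supply more justification than the paper does (the paper argues only by a worked example), in particular your explicit check that $\zeta(e_i,v_j)\subseteq[\D_i\circlearrowright v_i]$ so that decreasing height indeed means clockwise motion of $A_2$. One small imprecision: the algorithm enumerates all vertices $V$ in the slab between the separating lines $l_{u'},r_{u'}$, not only those in $\block(e_i,v_j)$; but since your monotonicity of $d_{\ell_i}(V)$ holds on the whole arc $[v_{i+1}\circlearrowright \D_i]$, this does not affect the argument.
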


\begin{proof}
We briefly prove this monotonicity of $A_2$ using the example in Figure~\ref{fig:block_mono}.
For $u'=e_5$, no matter where $V$ lies, we know that $A_2$ lies at $Z_2^5$.
Then, let $u'=v_6$. Let $s'_V$ denote the extended line of the 2-scaling of $(e_2\oplus v_6)$ about $V$.
When $V$ moves in clockwise order, $s'_V$ will get more and more closer to the extended line of $e_2$, which means $A_2$,
   the intersection of $s'_{V}$ and $[Z_2^5\circlearrowright Z_2^6]$, moves in clockwise.
   Finally, for $u'=e_6$, point $A_2$ stays at $Z_2^6$.
   To sum up, $A_2$ moves in clockwise in the whole enumerating process of $(u',V)$ for a fixed $u$.

To compute $A_2$, we maintain the unit $w$ in $[v_k\circlearrowright v_i]$ which intersects $s'_V$ and compute $A_2=w\cap s'_V$.
  Notice that $w$ also goes in clockwise order, so we can easily maintain $w$ and compute $A_2$ in amortized $O(1)$ time.
 \end{proof}

\paragraph{4. Two subroutines scheme.} To implement Algorithm~\ref{alg:N}, we use two symmetric subroutines.
Subroutine-1 takes charge of the case where $u$ is an edge, and Subroutine-2 the case where $u'$ is an edge.
By the analysis above, both of them run in $O(n^2)$ time.
However, we should modify these subroutines so that they can handle the case where both $u,u'$ are vertices.
We show this modification below.

\begin{figure}[h]
\centering\includegraphics[width=.43\textwidth]{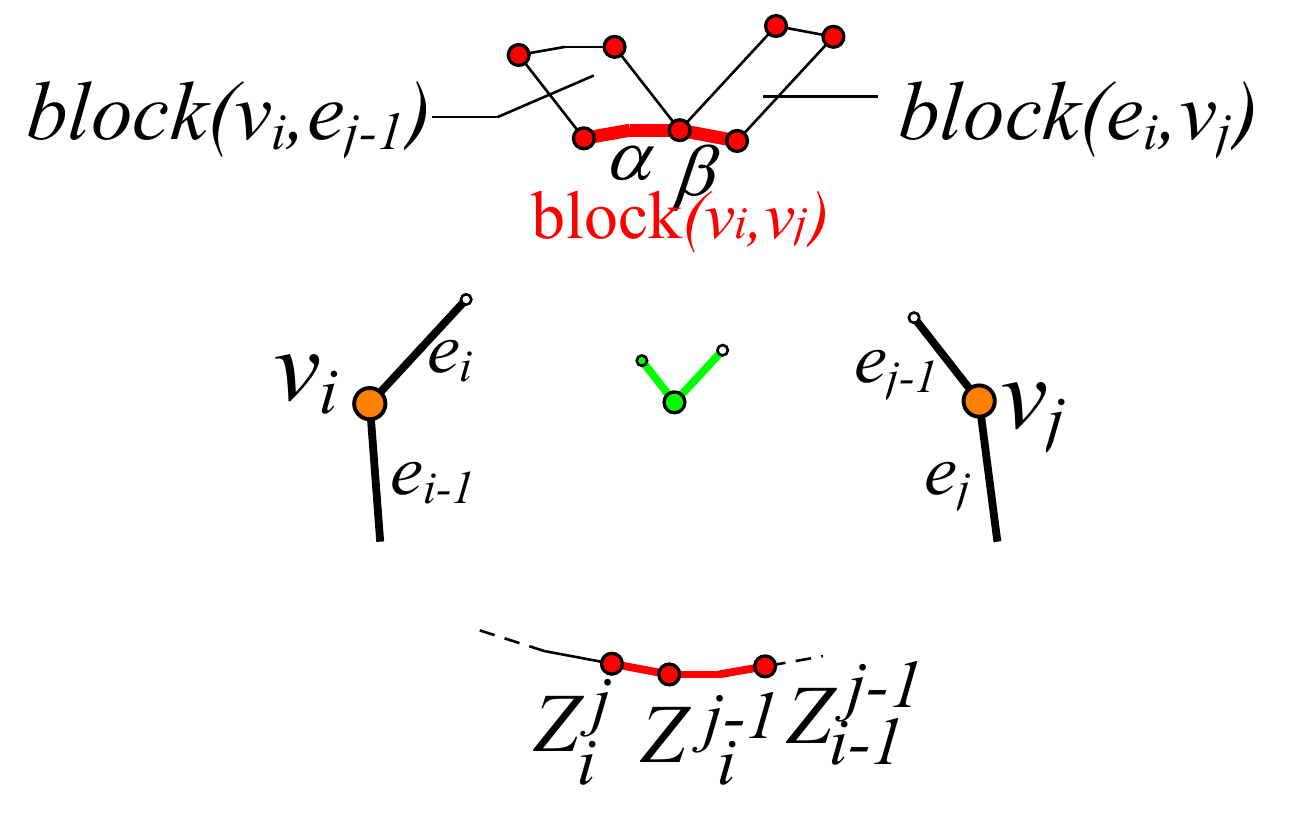}
\caption{$\block(v_i,v_j)$ is the concatenation of two curves $\alpha,\beta$, where $\alpha$ is the ``bottom border'' of $\block(v_i,e_{j-1})$, and
 $\beta$ is the ``bottom border'' of $\block(e_i,v_j)$.}\label{fig:alg-N-VV}
\end{figure}

\begin{claim}\label{claim:border}
     Assume $v_i$ is chasing $v_j$ ($j\neq i+1$). See Figure~\ref{fig:alg-N-VV}.
     Then, $\block(v_i,v_j)$ is the concatenation of $\alpha$ and $\beta$,
       where $\alpha$ is the reflection of $\zeta(v_i,e_{j-1})=[Z_{i-1}^{j-1}\circlearrowright Z_i^{j-1}]$ around $\M(v_i,v_j)$
         -- the bottom border of $\block(v_i,e_{j-1})$,
       and $\beta$ is the reflection of $\zeta(e_i,v_j)=[Z_i^{j-1} \circlearrowright Z_i^j]$ around $\M(v_i,v_j)$
         -- the bottom border of $\block(e_i,v_j)$.
\end{claim}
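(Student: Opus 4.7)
The plan is to reduce the claim to pure definition-unwinding together with one application of the bi-monotonicity of $Z$-points. First I would recall that by Case~2 of the block definition, $\block(v_i,v_j)$ is by definition the reflection of the boundary-portion $\zeta(v_i,v_j)$ about $\M(v_i,v_j)$. Thus it suffices to show that $\zeta(v_i,v_j)$ decomposes as the concatenation of $\zeta(v_i,e_{j-1})$ followed by $\zeta(e_i,v_j)$ on $\partial P$, since reflecting a concatenation of boundary-portions about a fixed point yields the concatenation of the reflections.

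Next I would apply Definition~\ref{def:zeta} to each of the three $\zeta$'s. Using $back(v_i)=e_{i-1}$, $forw(v_i)=e_i$, $back(v_j)=e_{j-1}$, $forw(v_j)=e_j$, and noting that $v_i$ chasing $v_j$ (together with $j\neq i+1$) gives $e_{i-1}\prec e_{j-1}$ and $e_i\prec e_j$, formula (\ref{eqn:zeta_chasing}) yields
\[
\zeta(v_i,v_j)=[Z_{i-1}^{j-1}\circlearrowright Z_i^j],\quad
\zeta(v_i,e_{j-1})=[Z_{i-1}^{j-1}\circlearrowright Z_i^{j-1}],\quad
\zeta(e_i,v_j)=[Z_i^{j-1}\circlearrowright Z_i^j].
\]
The remaining task is to verify that the three $Z$-points $Z_{i-1}^{j-1}$, $Z_i^{j-1}$, $Z_i^j$ all lie on a common boundary-portion in exactly the clockwise order written. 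This is where I would invoke Lemma~\ref{lemma:Z_bi-monotonicity}: applied with $(e_s,e_t)=(e_{i-1},e_j)$, the three edge-pairs $(e_{i-1},e_{j-1}),(e_i,e_{j-1}),(e_i,e_j)$ all belong to the set $S$, their $Z$-points all lie in $\rho=[v_{j+1}\circlearrowright v_{i-1}]$, and because the first two coordinates are non-decreasing along the chain, the $Z$-points appear in the claimed order along $\rho$. This gives the clean concatenation $\zeta(v_i,v_j)=\zeta(v_i,e_{j-1})\cup\zeta(e_i,v_j)$.

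Reflecting both sides about $\M(v_i,v_j)$ then gives $\block(v_i,v_j)=\alpha\cup\beta$ with $\alpha,\beta$ exactly as stated. To finish I would match $\alpha$ and $\beta$ with the "bottom borders" of $\block(v_i,e_{j-1})$ and $\block(e_i,v_j)$ by consulting Cases~3 and~4 of the block definition: one of the four bounding curves of $\block(v_i,e_{j-1})$ is listed as "the reflection of $\zeta(v_i,e_{j-1})$ around $\M(v_i,v_j)$", which is precisely $\alpha$; symmetrically $\beta$ is one of the bounding curves of $\block(e_i,v_j)$. A glance at Figure~\ref{fig:block_def} confirms that these are the borders opposite to the vertex $v_i$-side, justifying the informal label "bottom border".

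The proof is essentially bookkeeping; the only non-mechanical step is the application of bi-monotonicity, and even that is a direct consequence of Lemma~\ref{lemma:Z_bi-monotonicity}. The main thing to be careful about is ensuring the hypothesis $j\neq i+1$ is used exactly where needed (to guarantee $e_{i-1}\prec e_{j-1}$ and $e_i\prec e_j$, hence $v_i$ is genuinely chasing $v_j$ so that all three $\zeta$'s are defined by (\ref{eqn:zeta_chasing})), and that the reflection operation preserves the clockwise concatenation structure—both of which are straightforward.
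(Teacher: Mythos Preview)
Your proposal is correct and follows exactly the paper's approach: the paper's proof is the single sentence ``Using the bi-monotonicity of $Z$-points, $Z_i^{j-1}$ lies in $[Z_{i-1}^{j-1}\circlearrowright Z_i^j]$,'' and you have simply unpacked the surrounding definition-chasing that makes this sentence suffice. One small imprecision: the hypothesis $j\neq i+1$ is not what secures $e_{i-1}\prec e_{j-1}$ and $e_i\prec e_j$ (those follow directly from the assumption that $v_i$ is chasing $v_j$); rather, $j\neq i+1$ is what ensures $e_{j-1}\neq e_i$, so that $v_i$ is chasing $e_{j-1}$ and $e_i$ is chasing $v_j$, making $\zeta(v_i,e_{j-1})$ and $\zeta(e_i,v_j)$ well-defined via (\ref{eqn:zeta_chasing}).
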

\begin{proof}
 Using the bi-monotonicity of $Z$-points, $Z_i^{j-1}$ lies in $[Z_{i-1}^{j-1}\circlearrowright Z_i^j]$.
  \end{proof}

\paragraph{5. Modification of subroutines.}
Suppose we are under the case $(u,u')=(e_i,v_j)$ in Subroutine-1 and are enumerating all the vertices between $l_{u'},r_{u'}$.
As mentioned, we may find some vertex $V$ not in $\block(e_i,v_j)$.
As a result, after the computation of $A_2$ and $(A_1,A_3)$, we may find that $A_3$ is not contained in $e_i$.
Moreover, if $V$ lies in the ``bottom border'' of $\block(e_i,v_j)$ (namely, curve $\beta$), we will get that $A_3$ lies in $v_i$.
This can be seen from Figure~\ref{fig:alg-N-VV}.
When this case happens, we are not taking care of $(u,u',V)=(e_i,v_j,V)$, but actually taking care of a combination $(u,u',V)=(v_i,v_j,V)$.
Similarly, in Subroutine-2, when we find a vertex $V$ that lies in the ``bottom border'' of $\block(v_i,e_{j-1})$ (namely, curve $\alpha$),
  we are actually taking care of a combination $(v_i,v_j,V)$.
On the contrary, by Claim~\ref{claim:border}, every combination $(v_i,v_j,V)$ will be taken care of, either in Subroutine-1 or in Subroutine-2.

\subsection{Compute the LMAPs with an anchored broad corner which has an adjacent corner anchored}

Assume now $Q=A_0A_1A_2A_3$ is an LMAP with an anchored broad corner $A_2$ which has at least one adjacent corner anchored.
Assume $A_3\in u$ and $A_1\in u'$, where $(u,u')$ is a pair of units such that $u$ is chasing $u'$, and $u,u'$ are not both edges.

\begin{claim}\label{claim:B}
Under the above assumption, when $u,u'$ and the position of $A_2$ are given, the positions of four corners of $Q$ are all determined.
\end{claim}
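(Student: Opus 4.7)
The plan is to mirror the construction in the proof of Claim~\ref{claim:N}, but with the roles of $A_0$ (narrow) and $A_2$ (broad) swapped: I first recover $A_0$ from $A_2$ and $(u,u')$, and then use the parallelogram identity $\M(A_1,A_3)=\M(A_0,A_2)$ together with Claim~\ref{claim:ll'} to read off $A_1$ and $A_3$. The case analysis is driven by which of $u,u'$ is a vertex, and the easy subcase is when both are: then $A_3=u$, $A_1=u'$, and $A_0=u+u'-A_2$ are pinned down at once.

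The substantive subcase is when exactly one of $u,u'$ is a vertex; up to symmetry I assume $u=v_i$ (so $A_3=v_i$) and $u'=e_j$. The parallelogram identity becomes $A_0=A_1+v_i-A_2$, so as $A_1$ sweeps $e_j$, the corner $A_0$ sweeps the translate $\ell:=e_j+(v_i-A_2)$ of $e_j$. Combined with $A_0\in\partial P$, the task reduces to showing that $\ell\cap\partial P$ contains exactly one valid point; once $A_0$ is selected, $A_1=A_0+A_2-v_i$ is forced to lie in $e_j$ and the parallelogram is completely determined.

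The main obstacle is uniqueness of $A_0$: a line parallel to $e_j$ can cross the convex curve $\partial P$ in up to two points, so a spurious intersection must be ruled out. I plan to combine two ingredients. First, the clockwise order of the corners forces $A_0\in(v_i\circlearrowright v_{j+1})$. Second, since $A_2$ is broad its opposite $A_0$ is narrow, so Lemma~\ref{lemma:transformed} yields $A_0\in\block(v_i,e_j)$. Recall that $\block(v_i,e_j)$ has two straight sides parallel to $e_j$ (the $2$-scalings of $v_i\oplus e_j$ about $Z_{i-1}^j$ and $Z_i^j$) and two curved sides (the reflections of $\zeta(v_i,e_j)$ about $\M(v_i,v_j)$ and $\M(v_i,v_{j+1})$). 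Using the bi-monotonicity of the $Z$-points (Lemma~\ref{lemma:Z_bi-monotonicity}) together with the unimodality of $\dist_{\ell_i,\ell_j}$ along $\partial P$ (Lemma~\ref{lemma:dist_unimodal}), I would argue that the portion of $\partial P$ sitting inside $\block(v_i,e_j)\cap(v_i\circlearrowright v_{j+1})$ is monotone in the direction perpendicular to $e_j$, whence any line parallel to $e_j$ meets it at most once.

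Should this direct monotonicity check turn out to be awkward, a fallback is a contradiction argument: suppose two distinct candidates $A_0,A_0'\in\ell\cap\partial P\cap(v_i\circlearrowright v_{j+1})$, extract the corresponding midpoints $M,M'\in v_i\oplus e_j$ and the induced positions $A_1,A_1'\in e_j$, and then use the unimodality of the distance-product (Lemma~\ref{lemma:dist_unimodal}) to show that at most one of the two pairs $(A_0,M)$, $(A_0',M')$ is consistent with the anchor constraint $A_2\in\zeta(v_i,e_j)$.
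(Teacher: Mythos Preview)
Your overall architecture is exactly the paper's: recover $A_0$ from $(u,u',A_2)$, then get $A_1,A_3$ from $\M(A_1,A_3)=\M(A_0,A_2)$ via Claim~\ref{claim:ll'}. The $(v_i,v_j)$ case is identical. In the $(v_i,e_j)$ case, your line $\ell=e_j+(v_i-A_2)$ is precisely the extended line of the $2$-scaling of $v_i\oplus e_j$ about $A_2$ that the paper uses.

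Where you diverge is in the uniqueness argument, and there you are working much harder than necessary. You invoke $\block(v_i,e_j)$, the bi-monotonicity of the $Z$-points, and the unimodality of $\dist$, and you even prepare a fallback contradiction argument. None of this is needed. The paper simply intersects $\ell$ with the arc $[v_i\circlearrowright v_j]$ (not $(v_i\circlearrowright v_{j+1})$). Two elementary observations then finish the job: (1) $A_0$ cannot lie in $e_j$, since $A_1\in e_j$ and $Q$ is non-slidable, so $A_0\in[v_i\circlearrowright v_j]$; (2) because $v_i$ is \emph{chasing} $e_j$ (so $e_{i-1}\prec e_j$ and $e_i\prec e_j$), the arc $[v_i\circlearrowright v_j]$ lies within $[\D_j\circlearrowright v_j]$, on which the distance to $\el_j$ is strictly monotone by convexity. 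Hence any line parallel to $e_j$ meets $[v_i\circlearrowright v_j]$ in at most one point. The case $(e_i,v_j)$ is symmetric, with $A_0$ sought in $[v_{i+1}\circlearrowright v_j]$.

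So your proposal is correct, but you can replace the block/$Z$-point machinery by this one-line monotonicity observation, which is the route the paper (implicitly) takes.
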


\begin{proof}
We first determine $A_0$. Consider three cases.
If $(u,u')=(v_i,v_j)$, point $A_0$ lies at the reflection of $A_2$ around $\M(v_i,v_j)$.
If $(u,u')=(v_i,e_j)$, point $A_0$ lies at the intersection of $s$ and $[v_i\circlearrowright v_j]$,
  where $s$ is the extended line of the 2-scaling of $(v_i\oplus e_j)$ about $A_2$.
If $(u,u')=(e_i,v_j)$, point $A_0$ lies at the intersection of $s'$ and $[v_{i+1}\circlearrowright v_j]$,
  where $s'$ is the extended line of the 2-scaling of $(e_i\oplus v_j)$ about $A_2$.

After $A_0$ is determined, $\M(A_0,A_2)=\M(A_1,A_3)$ is determined, then $A_1,A_3$ are determined in $O(1)$ time;
  see the proof of Claim~\ref{claim:N} in the previous subsection.
 \end{proof}

\begin{algorithm}[h]
\caption{Computing those LMAPs with an anchored broad corner which has at least one adjacent corner anchored}\label{alg:B}
\ForEach{$(u,u')$ such that $u$ is chasing $u'$ and $u,u'$ are not both edges}{
    \ForEach{vertex $V$ in $\zeta(u,u')$}{
        $A_2\leftarrow V$ and compute $A_0$ as stated in the proof of Claim~\ref{claim:B};\\
        Compute $\M(A_1,A_3)=\M(A_0,A_2)$ and then compute $A_1,A_3$;\\
        Output $Q=A_0A_1A_2A_3$ .
    }
}
\end{algorithm}

The framework of our second algorithm is given in Algorithm~\ref{alg:B}.
Generally, it tries all the combinations of $(u,u',A_2)$ and compute the potential parallelogram $Q=A_0A_1A_2A_3$ using Claim~\ref{claim:B}.
The correctness follows from the clamping bounds (Lemma~\ref{lemma:clamping-broad}) which says $A_2\in \zeta(u,u')$.
More details are given below.

We implement Algorithm~\ref{alg:B} by four subroutines.
Subroutine-1 handles the case where $u$ is an edge and Subroutine-2 the case where $u'$ is an edge, and
two other subroutines are shown in the next paragraph.
In Subroutine-1, we first let $u$ go through each edge $e_i$ and $u'$ go through each vertex $v_j$ in clockwise order.
For a fixed pair $(u,u')=(e_i,v_j)$, let $V$ go through each vertex in $\zeta(e_i,v_j)=[Z_i^{j-1}\circlearrowright Z_i^j]$ in clockwise order.
Applying the bi-monotonicity of $Z$-points (Lemma~\ref{lemma:Z_bi-monotonicity}), the total time for enumerating $(u',V)$ is only $O(n)$ for fixed $u$.
Moreover, when we enumerate $(u',V)$ in this way, $A_0$ also goes in clockwise, and thus can be computed in amortized $O(1)$ time.
  This is symmetric to ``How do we compute $A_2$?'' in the previous subsection. Subroutine-2 is symmetric to Subroutine~1.

\smallskip We take care of the case $(u,u')=(v_i,v_j)$ by Subroutine-3 and Subroutine-4.
In Subroutine-3, let $u$ go through every vertex $v_i$ and $u'$ go through every vertex $v_j$ in clockwise order,
  and then let $V$ go through every vertex in $\zeta(v_i,e_{j-1})$ (\textbf{not} in $\zeta(v_i,v_j)$).
  In this way, we make sure that $V$ will move in clockwise for a fixed $u$, so $V$ and $A_0$ can be computed in amortized $O(1)$ time.
In Subroutine-4, let $u'$ go through every vertex $v_j$ and $u$ go through every vertex $v_i$ in clockwise order,
  and then let $V$ go through every vertex $V$ in $\zeta(e_i,v_j)$ (\textbf{not} in $\zeta(v_i,v_j)$).
  Similarly, $V$ will move in clockwise for a fixed $u'$, so $V$ and $A_0$ can be computed in amortized $O(1)$ time.
Together, because $\zeta(v_i,v_j)$ is the concatenation of $\zeta(v_i,e_{j-1})$ and $\zeta(e_i,v_j)$,
  every combination $(v_i,v_j,V)$ such that $V\in\zeta(v_i,v_j)$ will be considered.

\subsection{Compute the LMAPs with four even corners}

\newcommand{\SH}{\mathsf{H}}

Every edge $e_i$ is \emph{incident} to two vertices $v_i$ and $v_{i+1}$,
  and every vertex $v_i$ is \emph{incident} to two edges $e_{i-1}$ and $e_{i}$. (Subscripts taken modulo $n$.)
Two units are \emph{non-incident} if they are not incident to each other. (In particular, $e_i,e_{i+1}$ are non-incident.)
For each vertex $V=v_i$, denote by $\SH_V$ the set of units which lie in $(\D_{i-1}\circlearrowright \D_i)$.

\begin{claim}\label{claim:H}
  Assume $u,u'$ are distinct and non-incident units and neither of them is chasing the other. Then, one of the following holds:\\
    (a) $u$ is some vertex $V$, and $u'\in \SH_V$. \\ (b) $u'$ is some vertex $V$, and $u\in \SH_V$.
\end{claim}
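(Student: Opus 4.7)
The plan is to translate both ``chasing'' and the sets $\SH_V$ into conditions on the outward-normal cones of units, and then split into three cases according to the types of $u$ and $u'$.

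I would first record the following dictionary. For each unit $w$, its outward-normal cone $N(w)$ is either a single direction $n_m$ (when $w=e_m$) or the open exterior-angle arc $(n_{j-1},n_j)$ clockwise at $v_j$ (when $w=v_j$). The rotating-support description of $\D_i$ gives $\D_i=v_k\iff -n_i\in N(v_k)$, which in chasing terms reads $e_k\prec e_i\prec e_{k-1}$. It follows that $e_m\in \SH_{v_i}\iff \D_m=v_i\iff -n_m\in N(v_i)\iff e_i\prec e_m\prec e_{i-1}$, and that $v_j\in \SH_{v_i}\iff N(v_j)\subseteq -N(v_i)$, equivalently the four inequalities $e_i\prec e_{j-1}$, $e_i\prec e_j$, $e_{j-1}\prec e_{i-1}$, $e_j\prec e_{i-1}$ hold. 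Two further facts will be used throughout: chasing is a strict total order on pairs of edges (by pairwise-nonparallelness), and the open exterior-angle arc $N(v_j)$ of any vertex contains no edge normal of $P$.

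Next I would dispatch the three cases. If both $u,u'$ are edges, edge-chasing totality forces one to chase the other, contradicting the hypothesis, so this case is vacuous. If $u=v_i$ and $u'=e_k$ with $k\notin\{i-1,i\}$ by non-incidence, unfolding the unit-chasing definition gives: $v_i$ chases $e_k$ iff $e_{i-1}\prec e_k$ and $e_i\prec e_k$, while $e_k$ chases $v_i$ iff $e_k\prec e_{i-1}$ and $e_k\prec e_i$. Combining this with edge-totality, ``neither chases'' forces exactly the interleaving $e_i\prec e_k\prec e_{i-1}$ (the other potential interleaving $e_{i-1}\prec e_k\prec e_i$ would place $n_k$ in the open exterior-angle arc $N(v_i)$, which contains no edge normal), and this is precisely $e_k\in\SH_{v_i}$, giving claim~(a). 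If $u=v_i,u'=v_j$, adjacent vertices always chase each other (e.g.\ $v_i$ chases $v_{i+1}$ via $e_{i-1}\prec e_i\prec e_{i+1}$), so $v_i,v_j$ are non-adjacent; the hypothesis then translates via the edge-chasing analysis into the statement that the arcs $-N(v_i)$ and $N(v_j)$ must \emph{overlap}, while partial overlap (one endpoint of each strictly inside the other) is impossible.

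The main obstacle is excluding partial overlap. I would handle each partial-overlap configuration by showing it forces a chasing relation: if $-n_{i-1}\in \mathrm{int}(N(v_j))$ and $n_j\in \mathrm{int}(-N(v_i))$, then a direct check of the angular positions verifies $e_{i-1}\prec e_{j-1}$ and $e_i\prec e_j$, so $v_i$ chases $v_j$ (and symmetrically for the other partial-overlap configuration, yielding $v_j$ chases $v_i$), in either case contradicting the hypothesis. Equivalently, the ``would-be'' range of $n_{j-1}$ in these configurations expresses $\sum_{l=j}^{i}\alpha_l<\max(\alpha_i,\alpha_j)$, which is impossible since the sum contains both $\alpha_i$ and $\alpha_j$. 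Thus only \emph{full containment} can occur: either $N(v_j)\subseteq -N(v_i)$, giving claim~(a) with $V=v_i$ and $v_j\in\SH_{v_i}$, or $-N(v_i)\subseteq N(v_j)$ (equivalently $N(v_i)\subseteq -N(v_j)$), giving claim~(b) with $V=v_j$ and $v_i\in\SH_{v_j}$, completing the proof.
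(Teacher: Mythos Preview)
Your proof is correct, but it takes a genuinely different route from the paper. The paper argues entirely in the language of the chasing relation and the boundary arcs $[\D_{i-1}\circlearrowright\D_i]$: after observing that two edges always chase one another, it splits into the three cases (vertex--edge, edge--vertex, vertex--vertex) and, in each, reads off directly from the definition of unit-chasing which arcs of $\partial P$ the second unit can occupy. In the vertex--vertex case it simply sub-cases on whether $e_j\prec e_k$ or $e_k\prec e_j$ and uses ``neither chases'' to force the remaining relation (e.g.\ $e_{k-1}\prec e_{j-1}$), then concludes $v_k\in(\D_{j-1}\circlearrowright\D_j)$ in two lines. You instead build a dictionary with outward-normal cones, recast $\SH_V$ as a containment condition $N(\cdot)\subseteq -N(V)$, and handle the vertex--vertex case via an arc-overlap analysis together with an exterior-angle sum to exclude partial overlap. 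Your framework gives a pleasant unified picture (chasing becomes ``clockwise normal angle $<\pi$'', and $\SH_V$ becomes ``antipodal normal cone''), and the partial-overlap exclusion is a nice observation; the cost is that you import machinery (normal cones) not used anywhere else in the paper, whereas the paper's argument stays within the definitions already on the table and is correspondingly shorter. Either approach is fine here; the paper's is more self-contained for this context.
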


\begin{proof} Notice that at least one of $u,u'$ is a vertex; otherwise one of them is chasing the other. So we have the following three cases.
\begin{enumerate}
\item[Case~1:] $u$ is a vertex and $u'$ is an edge. We claim $u'\in \SH_{u}$ and so (a) holds.\\
    Assume $u=v_j$. First, since $u,u'$ are non-incident, $u'\notin \{e_j,e_{j-1}\}$.
    Second, because $u$ is not chasing $u'$, edge $u'$ is not contained in $[ v_{j+1}\circlearrowright \D_{j-1}]$.
    Third, because $u'$ is not chasing $u$, edge $u'$ is not contained in $[\D_j\circlearrowright v_{j-1}]$.
    Therefore, edge $u'$ can only lie in $(\D_{j-1}\circlearrowright \D_j)$, namely, $u'\in\SH_{u}$.

\item[Case~2:] $u$ is an edge and $u'$ is a vertex. Symmetrically, $u\in \SH_{u'}$ and (b) holds.

\item[Case~3:] $u,u'$ are both vertices. Assume that $u=v_j,u'=v_k$.\\
    First, consider the case where $e_j\prec e_k$.
    Then, $e_{k-1}\prec e_{j-1}$, otherwise $v_j$ is chasing $v_k$.
    Since $e_j\prec e_k$, we get $v_k\in (v_j\circlearrowright \D_j)$. Since $e_{k-1}\prec e_{j-1}$, we get $v_k\in (\D_{j-1}\circlearrowright v_j)$.
    Together, $v_k\in (\D_{j-1}\circlearrowright \D_j)$, i.e. $u'\in \SH_{u}$.\\
    For the other case where $e_k\prec e_j$, we can get $u\in \SH_{u'}$ symmetrically.
\end{enumerate}
\end{proof}

Denote $S=\{(u,u')\mid \text{$u$ is a vertex and $u'\in \SH_u$, or $u'$ is a vertex and $u\in \SH_{u'}$}\}.$

Assume $Q=A_0A_1A_2A_3$ is an LMAP with four even corners.
Denote $u_0=\unit(A_0),u_1=\unit(A_1),u_2=\unit(A_2),u_3=\unit(A_3)$.
Obviously, units $u_0,u_2$ are distinct, non-incident, and neither of them is chasing the other (because $A_1,A_3$ are even corners), so $(u_0,u_2)\in S$ due to Claim~\ref{claim:H}.
Similarly, $(u_1,u_3)\in S$.
Therefore, to compute those LMAPs with four even corners, we only need to try all possible choices of $(u_0,u_2)$ and $(u_1,u_3)$.
The algorithm is presented in Algorithm~\ref{alg:EE}.

\begin{algorithm}[h]
\caption{Computing those LMAPs with four even corners.}\label{alg:EE}
\ForEach{$(u_0,u_2)$ in $S$}{
    \ForEach{$(u_1,u_3)$ in $S$}{
        Let $C$ be the intersection between $(u_0\oplus u_2)$ and $(u_1\oplus u_3)$.\\
        Compute $A_0,A_2$ from $u_0,u_2,C$ and compute $A_1,A_3$ from $u_1,u_3,C$;\\
        Output $Q=A_0A_1A_2A_3$.
    }
}
\end{algorithm}

By the definition of $\SH$, set $S$ contains only $O(n)$ pairs of units.
Therefore, there are $O(n^2)$ choices for $(u_0,u_2,u_1,u_3)$.
Moreover, after $u_0,u_1,u_2,u_3$ are determined, it is easy to compute $A_0,A_1,A_2,A_3$ in $O(1)$ time,
  so the running time is $O(n^2)$.
  
\smallskip As a summary of this section, we have proved the following result.
\begin{theorem}[Main result]
Given a convex polygon $P$ bounded by $n$ halfplanes, we can compute the LMAPs and MAPs in $P$ in $O(n^2)$ time.
\end{theorem}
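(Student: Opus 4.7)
The plan is to assemble the theorem as a direct corollary of the three algorithms developed in the present section together with the structural decomposition of LMAPs given in Lemma~\ref{lemma:3LMAPs}. First I would invoke Lemma~\ref{lemma:3LMAPs} to partition the (unknown) set of LMAPs into three, possibly overlapping, classes: those with an anchored narrow corner (case~a), those with an anchored broad corner whose adjacent corner is anchored (case~b), and those with four even corners (case~c). The main algorithm is then simply: run Algorithm~\ref{alg:N}, Algorithm~\ref{alg:B}, and Algorithm~\ref{alg:EE} and collect their outputs as the list of candidates; every LMAP lies in at least one class and hence is produced, and a maximum-area candidate is an MAP.

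Correctness for the three classes follows from Lemma~\ref{lemma:transformed} (for case~a, which asserts $A_0\in\block(u,u')$ when $A_0$ is an anchored narrow corner), the clamping bound Lemma~\ref{lemma:clamping-broad} (for case~b, which asserts $A_2\in\zeta(u,u')$), and Claim~\ref{claim:H} together with Definition~\ref{def:classification} (for case~c, which forces $(u_0,u_2),(u_1,u_3)\in S$). In each case, once the outer units and the anchor vertex are fixed, Claims~\ref{claim:N} and \ref{claim:B} uniquely reconstruct the parallelogram, so enumerating over all admissible triples exhausts every LMAP in that class.

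The substantive step is the running-time accounting. For each algorithm I would verify the $O(n^2)$ bound along the lines already sketched: the outer loop runs over $O(n)$ values of the edge-unit $u$ (or $u'$, by the symmetric subroutine); for each fixed $u$, the monotonicity of blocks (Lemma~\ref{lemma:block_mono}) and the bi-monotonicity of $Z$-points (Lemma~\ref{lemma:Z_bi-monotonicity}) imply that the inner pair $(u',V)$ ranges over only $O(n)$ items in clockwise order; the relevant $Z$-points bounding $\zeta(u,u')$ are batch-computed in $O(n)$ time via part~3 of Lemma~\ref{lemma:Z-compute}; and the opposite corner $A_2$ (resp.\ $A_0$) moves monotonically around $\partial P$, so it is maintained in amortized $O(1)$ time with a single clockwise-advancing pointer. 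Algorithm~\ref{alg:EE} is handled separately: the set $S$ has $O(n)$ pairs by the definition of $\SH$, so the two nested loops enumerate $O(n^2)$ quadruples, each processed in $O(1)$ by intersecting $u_0\oplus u_2$ with $u_1\oplus u_3$.

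The main obstacle I expect is not a new argument but the bookkeeping needed to realize the amortized $O(1)$ update of the reconstructed corner in Algorithms~\ref{alg:N} and \ref{alg:B}, and the need for four symmetric subroutines in Algorithm~\ref{alg:B} (and two in Algorithm~\ref{alg:N}) so that pairs $(u,u')$ of the form $(v_i,v_j)$ are covered via the concatenation $\zeta(v_i,v_j)=\zeta(v_i,e_{j-1})\cup\zeta(e_i,v_j)$ identified in Claim~\ref{claim:border}. Once these details are in place, summing the three $O(n^2)$ contributions yields the claimed bound, completing the proof.
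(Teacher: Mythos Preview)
Your proposal is correct and follows essentially the same approach as the paper: the theorem is presented there as a summary of Section~\ref{sect:algo}, assembled exactly by invoking Lemma~\ref{lemma:3LMAPs} to split LMAPs into the three classes and then running Algorithms~\ref{alg:N}, \ref{alg:B}, \ref{alg:EE}, with correctness and the $O(n^2)$ bound justified via the same ingredients you list (Lemmas~\ref{lemma:transformed}, \ref{lemma:clamping-broad}, \ref{lemma:block_mono}, \ref{lemma:Z_bi-monotonicity}, \ref{lemma:Z-compute}, Claims~\ref{claim:N}, \ref{claim:B}, \ref{claim:H}, \ref{claim:border}, and the subroutine scheme). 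There is no substantive difference between your plan and the paper's argument.
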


\section{A monotonicity property of the LMAPs in a convex polygon}\label{sect:interleave}

This section proves an interesting monotonicity property of the LMAPs as stated in the next theorem.
  The proof applies once again of our basic tools given in Section~\ref{sect:identity}.
    (Recall that such tools were applied in proving the clamping bounds in Section~\ref{sect:clamping}.)

\begin{definition}[Interleaving]\label{def:interleaving}
Assume $A=A_0\ldots A_{k-1}$ and $B=B_0\ldots B_{k-1}$ are two $k$-gons inscribed in $P$,
  where the corners $A_0,\ldots,A_{k-1}$ lie in clockwise order and so do $B_0,\ldots,B_{k-1}$.
We say $A,B$ \emph{interleave} if
\begin{enumerate}
\item for any two neighboring corners $A_i,A_{i+1}$ of $A$, we can find a corner of $B$ which lies in $[A_i\circlearrowright A_{i+1}]$, and
\item for any two neighboring corners $B_i,B_{i+1}$ of $B$, we can find a corner of $A$ which lies in $[B_i\circlearrowright B_{i+1}]$ (subscripts are taken modulo $k$).
\end{enumerate}
\end{definition}

\begin{figure}[h]
\centering\includegraphics[width=.85\textwidth]{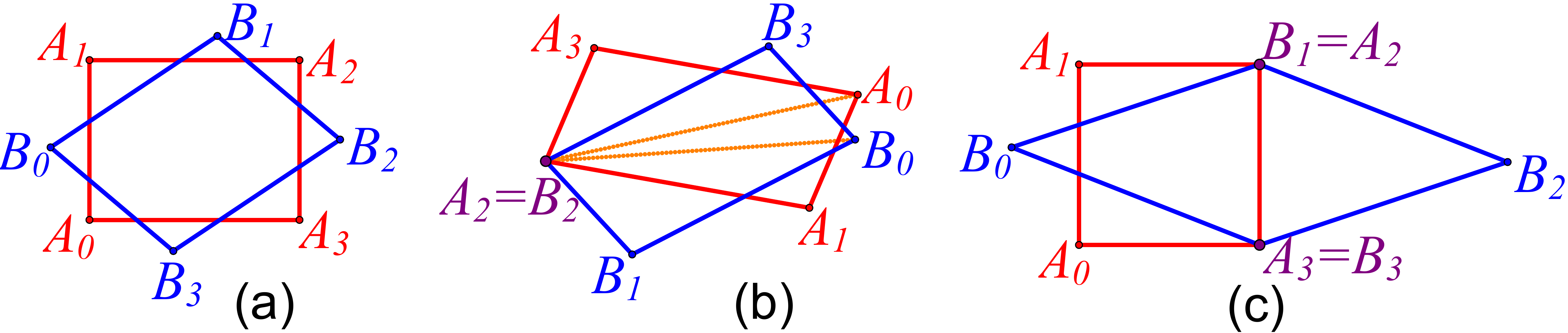}
\caption{Three examples in which $A$ interleaves $B$.}\label{fig:interleave-examples}
\end{figure}

\begin{theorem}[Auxiliary]\label{thm:interleave}
Any two LMAPs in the convex polygon $P$ interleave.
\end{theorem}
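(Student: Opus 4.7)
The plan is to argue by contradiction using the distance-product / area machinery of Sections~\ref{sect:Z} and~\ref{sect:identity}, along the template of the clamping-bound proofs of Section~\ref{sect:clamping}. Suppose $Q_1=A_0A_1A_2A_3$ and $Q_2=B_0B_1B_2B_3$ are two LMAPs (corners listed clockwise) that fail to interleave. Since each parallelogram contributes exactly four corners to $\partial P$, a straightforward counting argument shows that ``$Q_1$ and $Q_2$ do not interleave'' is equivalent to the existence of an open arc $\rho=(A_i\circlearrowright A_{i+1})$ of $\partial P$ containing no $Q_2$-corner (or symmetrically, an arc between two consecutive $Q_2$-corners containing no $Q_1$-corner); after relabelling, take $\rho=(A_0\circlearrowright A_1)$.

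The strategy is to isolate one corner $B_j$ of $Q_2$ and push it along $\partial P$ toward $\rho$ while letting the remaining three corners of $Q_2$ slide in response, producing a nearby inscribed parallelogram of strictly larger area, contradicting the local maximality of $Q_2$. Concretely, fix the opposite corner $B_{j+2}$ and let $\ell,\ell'$ be the extended edges of $P$ adjacent to the two remaining corners $B_{j+1},B_{j-1}$; by Claim~\ref{claim:XX'll'} (as in Cases~1 and~2 of the clamping-bound proof of Lemma~\ref{lemma:clamping-broad}), for every sufficiently small displacement of $B_j$ along $\partial P$ the other two corners of the moving parallelogram are uniquely determined as the pair on $\ell,\ell'$ with midpoint $\M(B_j,B_{j+2})$, and these corners remain in the appropriate edges by continuity.

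By Lemma~\ref{lemma:area}, $Area(\parallelogram(B_j,B_{j+2},\ell,\ell'))$ is proportional to $|\dist_{\ell,\ell'}(B_j)-\dist_{\ell,\ell'}(B_{j+2})|$, so the area change under the perturbation is governed entirely by the change in $\dist_{\ell,\ell'}(B_j)$. Because $\rho$ contains no $Q_2$-corner, the relevant $Z$-point from Lemma~\ref{lemma:dist-unique-location} can be placed on the far side of $B_j$ relative to $\rho$; then the strict unimodality of $\dist_{\ell,\ell'}$ along $\partial P$ (Lemma~\ref{lemma:dist_unimodal}), together with the bi-monotonicity of the $Z$-points (Lemma~\ref{lemma:Z_bi-monotonicity}), guarantees that sliding $B_j$ into $\rho$ moves $\dist_{\ell,\ell'}(B_j)$ in the direction that strictly enlarges $|\dist_{\ell,\ell'}(B_j)-\dist_{\ell,\ell'}(B_{j+2})|$, yielding the required larger parallelogram.

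The main obstacle will be the case analysis: which corner $B_j$ to perturb, and in which direction, depends on how the four $B$-corners distribute among the three non-empty $A$-arcs and on whether any $B_j$ coincides with a vertex of $P$ (the latter requiring the ``Case~2'' treatment of Lemma~\ref{lemma:clamping-broad}, where one of $\ell,\ell'$ is chosen adjacent to the vertex rather than through the corresponding edge). A clean way to organise the subcases is to first show, using only the emptiness of $\rho$ together with the already-established clamping bounds on $Q_2$, that the candidate pair $(\ell,\ell')$ and a valid direction of motion for some $B_j$ always exist; from there, the distance-product argument above concludes uniformly.
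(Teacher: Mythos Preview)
Your proposal has a genuine gap at its core. You take $\ell,\ell'$ to be the extended edges of $P$ through $B_{j+1},B_{j-1}$ and then try to slide $B_j$ toward the empty arc $\rho$ so as to increase $|\dist_{\ell,\ell'}(B_j)-\dist_{\ell,\ell'}(B_{j+2})|$. But this is exactly the perturbation already used in the proof of the clamping bounds (Lemmas~\ref{lemma:clamping-broad}, \ref{lemma:clamping-even}), and since $Q_2$ is an LMAP it \emph{already} satisfies those bounds: $B_j$ lies in $\zeta(\unit(B_{j+1}),\unit(B_{j-1}))$, which is precisely the locus where no such area-increasing slide along $\partial P$ with $\ell,\ell'$ taken from the edges of $P$ is available. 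The sentence ``the relevant $Z$-point \ldots\ can be placed on the far side of $B_j$ relative to $\rho$'' is unsupported: the $Z$-point $Z_{\ell}^{\ell'}$ depends only on the edges near $B_{j\pm1}$ and has no relation whatsoever to where $A_0,A_1$ sit; the emptiness of $\rho$ gives you no control over it. In short, your perturbation only probes $Q_2$ against $P$, never against $Q_1$, so no contradiction can emerge.

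The paper's proof avoids this by choosing $\ell,\ell'$ very differently: instead of edges of $P$, it uses the lines through \emph{corresponding corners of the two LMAPs}, e.g.\ the extended lines of $\overline{A_1B_1}$ and $\overline{A_3B_3}$. With this choice, $\parallelogram(A_0,A_2,\ell,\ell')=A$ and $\parallelogram(B_0,B_2,\ell,\ell')=B$ simultaneously, and sliding $A_0$ (or $B_0$) along the segment $\overline{A_0B_0}$ produces nearby inscribed parallelograms whose area is controlled by the concavity of $\dist_{\ell,\ell'}$ on that segment. Since $A_0$ and $B_0$ cannot both be on the ``increasing'' side of the midpoint, one of $A,B$ fails to be locally maximal. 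This coupling of the two LMAPs through auxiliary lines built from both is the missing idea in your plan; after setting it up, the paper still needs a careful three-step case analysis (diagonals of $B$ not outside $A$; diagonals of $A$ and $B$ intersect; final interleaving), together with a preliminary lemma on the cyclic order of the eight corners, and several affine normalisations to control the geometry.
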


\begin{remark}\label{remark:interleave}
It is well-known that all the locally maximal triangles in $P$ interleave each other.
Recently, by utilizing this interleaving property,
  Jin \cite{Triangle-ultimate-Arxiv} gave a linear time algorithm for computing all the locally maximal triangles.
Hence Theorem~\ref{thm:interleave} might be useful in designing a better algorithm for finding the LMAPs in the future.
\end{remark}

We prove one preliminary lemma before proving Theorem~\ref{thm:interleave}.

\begin{lemma}\label{lemma:inter-pre}
See Figure~\ref{fig:interleave0}~(a).
Assume that $A=A_0A_1A_2A_3$ and $B=B_0B_1B_2B_3$ are parallelograms inscribed in $P$ with the following properties:
\begin{enumerate}
\item[(a)] $A_0,A_2,B_2,B_0$ are distinct and lie in clockwise order.
\item[(b)] $A_1,B_1$ both lie in $\rho=(A_0\circlearrowright A_2)$, and $A_3,B_3$ both lie in $\rho'=(B_2\circlearrowright B_0)$.
\end{enumerate}
We claim that the following holds:
\begin{enumerate}
\item $A_1\neq B_1$ and symmetrically, $A_3 \neq B_3$.
\item When $B_1<_\rho A_1$, then $B_3<_{\rho'}A_3$, and when $A_1<_\rho B_1$, then $A_3<_{\rho'} B_3$.
\end{enumerate}
\end{lemma}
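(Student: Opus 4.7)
The plan is to exploit the parallelogram midpoint identities
\[
A_1+A_3 \;=\; A_0+A_2 \;=\; 2M_A, \qquad B_1+B_3 \;=\; B_0+B_2 \;=\; 2M_B,
\]
whose difference is the displacement relation
\[
(A_1-B_1)+(A_3-B_3) \;=\; 2\,(M_A-M_B).
\]
A preliminary observation is that $M_A\ne M_B$: otherwise $A_0+A_2=B_0+B_2$ would force $A_0,B_0,A_2,B_2$ to be the vertices of a parallelogram whose diagonals are $\overline{A_0A_2}$ and $\overline{B_0B_2}$, so their cyclic order around $\partial P$ would be $A_0,B_0,A_2,B_2$, contradicting assumption~(a).

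For part~1, suppose $A_1=B_1=:X\in\rho$. The chords $\overline{XA_3}$ and $\overline{XB_3}$ of $P$ pass through the distinct midpoints $M_A,M_B$, so they point in different directions at $X$, and hence $A_3\ne B_3$. To rule out the equality $A_1=B_1$ itself, I would use that the four sides of the two parallelograms meeting at the shared corner $X$ terminate at $A_0,A_2,B_0,B_2$; the cyclic order in which these four directions emanate from $X$ must match the prescribed clockwise order $A_0,A_2,B_2,B_0$ of their endpoints on $\partial P$, but the parallelogram constraint (the two sides at each corner of $A$ or $B$ point into opposite half-planes of the corresponding diagonal) forces a different cyclic order, yielding a contradiction. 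The claim $A_3\ne B_3$ then follows by the symmetric argument with $\rho,\rho'$ swapped.

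For part~2, assume $B_1<_\rho A_1$ and, toward a contradiction, that $A_3\leq_{\rho'} B_3$; equivalently, the eight points appear on $\partial P$ in the clockwise order $A_0,B_1,A_1,A_2,B_2,A_3,B_3,B_0$. Let $\ell_A,\ell_B$ denote the extended lines of the chords $\overline{A_0A_2}$ and $\overline{B_0B_2}$, and apply Lemma~\ref{lemma:area} to the auxiliary parallelograms $\parallelogram(A_1,A_3,\ell_A,\ell_B)$ and $\parallelogram(B_1,B_3,\ell_A,\ell_B)$ (in the sense of Claim~\ref{claim:XX'll'}): their areas are proportional to the differences of the distance-product $\dist_{\ell_A,\ell_B}$ evaluated at the respective endpoint pairs, using signed distances as per the footnote to Lemma~\ref{lemma:area}. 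Combining these two area equations with the displacement identity and invoking the strict concavity and unimodality of $\dist_{\ell_A,\ell_B}$ along $\rho$ and $\rho'$ (Lemmas~\ref{lemma:dist_concave} and \ref{lemma:dist_unimodal}), one concludes that the assumed ordering of $A_3,B_3$ along $\rho'$ cannot coexist with the assumed ordering of $A_1,B_1$ along $\rho$.

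The main obstacle will be the sign-bookkeeping when invoking the area formula with signed distances: because $A_1,A_3$ lie on opposite sides of both $\ell_A$ and $\ell_B$ (and likewise for $B_1,B_3$), the relevant signs of $\dist_{\ell_A,\ell_B}$ have to be tracked carefully, and this is the point where the clockwise-order condition~(a) enters crucially, via the fact that $\rho$ and $\rho'$ sit on opposite sides of each of the two diagonal chords.
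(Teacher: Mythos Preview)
Your proposal has genuine gaps in both parts.

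\textbf{Part~1.} The midpoint observation $M_A\ne M_B$ is fine, and from $A_1=B_1$ you correctly get $A_3\ne B_3$. But that is not what you must rule out: you need to show that $A_1=B_1$ is itself impossible. The ``cyclic order of directions'' argument you sketch does not yield a contradiction. At the hypothetical common corner $X$, the parallelogram constraint says only that the diagonal direction $\overrightarrow{XA_3}$ lies between the side directions $\overrightarrow{XA_0},\overrightarrow{XA_2}$, and likewise $\overrightarrow{XB_3}$ lies between $\overrightarrow{XB_0},\overrightarrow{XB_2}$. Since $A_3,B_3\in(B_2\circlearrowright B_0)\subset(A_2\circlearrowright A_0)$, both of these are perfectly compatible with the cyclic order $A_0,A_2,B_2,B_0$ of the endpoints; no contradiction arises. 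What actually goes wrong when $A_1=B_1$ is that one of the points $A_3,B_3$ ends up in the interior of the convex hull of the others, violating that it lies on $\partial P$. The paper exposes this directly by normalizing $A$ to the unit square via an affine map, reading off coordinate inequalities for $B_0,B_2$ from the clockwise order, and computing that $B_3$ lands in the open quadrant $\{x>1,\,y<0\}$, whence $A_3=(1,0)$ lies strictly inside $\triangle A_0A_2B_3$.

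\textbf{Part~2.} You invoke Lemma~\ref{lemma:dist_unimodal} for the lines $\ell_A,\ell_B$ through the diagonals $\overline{A_0A_2}$ and $\overline{B_0B_2}$, but that lemma is stated and proved only for extended lines of \emph{edges} of $P$; its proof uses the vertices $\D_i,\D_j$ of maximal distance, which makes sense only when the lines support $P$. For secant chords the distance-product need not be unimodal along a boundary arc. Moreover, the auxiliary parallelograms $\parallelogram(A_1,A_3,\ell_A,\ell_B)$ and $\parallelogram(B_1,B_3,\ell_A,\ell_B)$ do not coincide with $A$ and $B$ (for instance, both $A_0$ and $A_2$ lie on $\ell_A$, so $A$ has two corners on one of your lines and none on the other), so it is unclear what equality you would extract from Lemma~\ref{lemma:area}. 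The paper again avoids all of this: after the same affine normalization, from $B_1<_\rho A_1$ one reads off $B_1.x<0$, computes $B_3.x=B_0.x+B_2.x-B_1.x>1$, and convexity of $P$ then forces $0<B_3.y<1$, which places $B_3$ clockwise before $A_3=(1,0)$ on $\rho'$.

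In short, the affine normalization collapses the whole lemma to a few lines of coordinate arithmetic; your distance-product route, as written, neither applies nor closes.
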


\begin{figure}[h]
\centering\includegraphics[width=.82\textwidth]{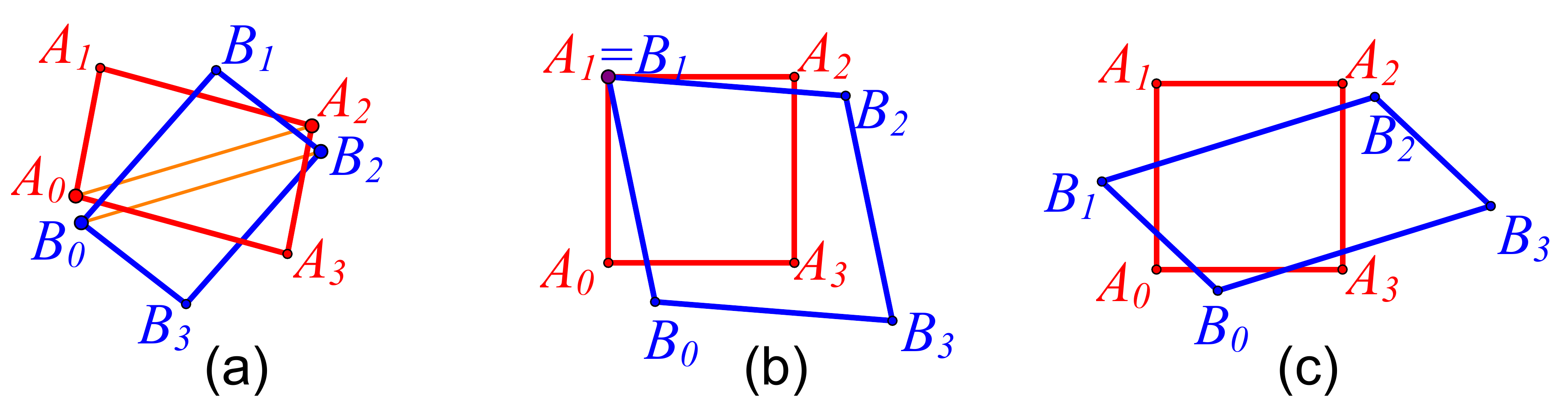}
\caption{Proof of the interleaving property: - a preliminary lemma.}\label{fig:interleave0}
\end{figure}

\begin{proof}[Proof of Lemma~\ref{lemma:inter-pre}]
Without loss of generality, assume $A$ is an axis-parallel unit square with $A_0=(0,0)$ and $A_2=(1,1)$.
Otherwise we use a linear transformation to make it so - the parallelograms remain parallelograms after any non-degenerate linear transformations, and $P$ remains a convex polygon.

Let $X.x$ and $X.y$ denote the x-coordinate and y-coordinate of any point $X$.

See Figure~\ref{fig:interleave0}~(a) and (b). By the assumptions, $A_2,B_2,A_3,B_0,A_0$ are distinct and lie in clockwise order.
Thus, $B_2.x > 1$, $0<B_2.y<1$, $0<B_0.x<1$, $B_0.y<0$.

First, we prove that $B_1\neq A_1$.
  Suppose to the opposite that $B_1=A_1$, as shown in Figure~\ref{fig:interleave0}~(b).
  Let $C_B$ denote the center of $B$.
  We know
  \begin{equation*}
    \left\{
    \begin{gathered}
       B_3.x=2C_B.x-B_1.x=B_0.x+B_2.x-0>1.\\
       B_3.y=2C_B.y-B_1.y=B_0.y+B_2.y-1<0.
    \end{gathered}\right.
  \end{equation*}
  This means $A_3$ is contained in the interior of $\triangle A_0A_2B_3$. So, $A_3$ is not on the boundary of $P$. Contradiction!
    Thus $B_1\neq A_1$. Symmetrically, $B_3\neq A_3$.\smallskip

  Next, assume that $B_1<_\rho A_1$ and we prove that $B_3<_{\rho'}A_3$. See Figure~\ref{fig:interleave0}~(c).
    Similar as above, we have $B_3.x>1$. Hence $0<B_3.y<1$ (otherwise $P$ is not convex).
      Therefore, $B_3<_{\rho'}A_3$.
      Symmetrically, when $A_1<_\rho B_1$, we can get $A_3<_{\rho'}B_3$.
 \end{proof}

\begin{proof}[Proof of Theorem~\ref{thm:interleave}]
Let $A=A_0A_1A_2A_3$ and $B=B_0B_1B_2B_3$ be two LMAPs.

\paragraph{Step~1.} First, we show that \emph{any diagonal of $B$ cannot lie outside $A$}.
To be more rigorous, the following three cases are forbidden.\\
(I)  $B_i,B_{i+2}$ lies in $(A_j\circlearrowright A_{j+1})$ for some $i,j$ (see Figure~\ref{fig:interleave1}~(a)).\\
(II) $B_i=A_j$ and $B_{i+2}\in (A_j\circlearrowright A_{j+1})$ for some $i,j$ (see Figure~\ref{fig:interleave1}~(b)).\\
(III) $B_{i+2}=A_{j+1}$ and $B_i \in (A_j\circlearrowright A_{j+1})$ for some $i,j$ (symmetric to (II)).\\
(But the diagonal of $B$ may coincide with an edge of $A$, as shown in Figure~\ref{fig:interleave-examples}~(c).)

\begin{figure}[h]
\centering\includegraphics[width=.77\textwidth]{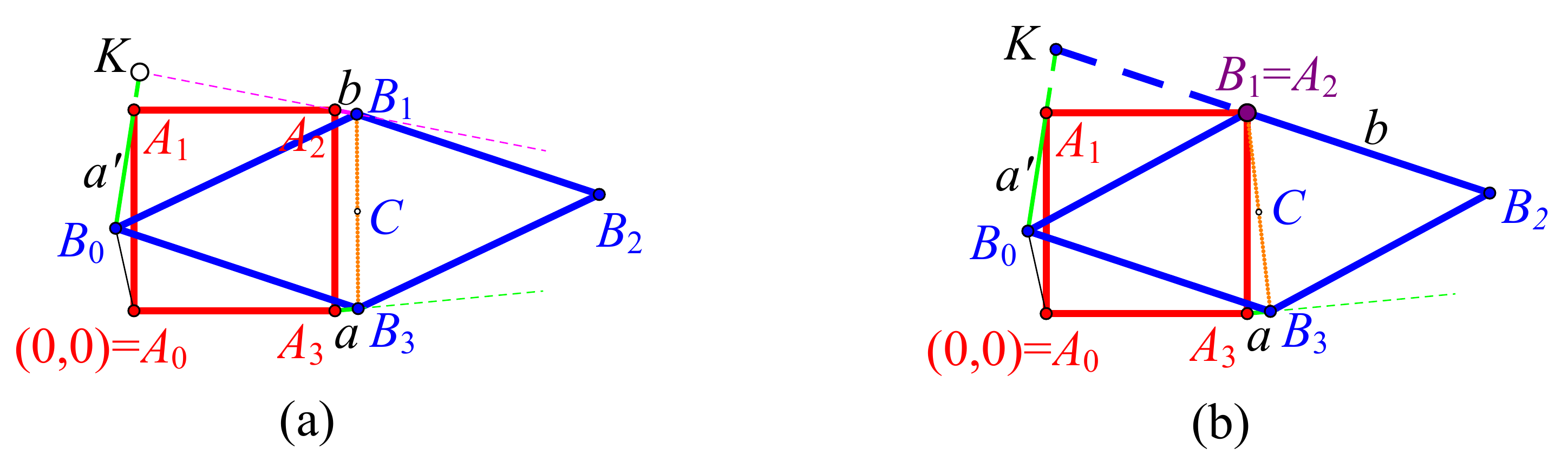}
\caption{Proof of the interleaving property: any diagonal of $B$ cannot lie outside $A$.}\label{fig:interleave1}
\end{figure}

We only show the proof of (I). The others are similar.

For a contradiction, suppose that $B_1,B_3$ lie in $(A_2\circlearrowright A_3)$, as shown in Figure~\ref{fig:interleave1}~(a).
Without loss of generality, assume $A$ is an axis-parallel unit square with $A_0=(0,0)$ and $A_2=(1,1)$.
Otherwise we use linear transformations - the LMAPs remain LMAPs after any non-degenerate linear transformations.

Let $a=\overline{A_3B_3},a'=\overline{A_1B_0},b=\overline{A_2B_1}$.
Let $l_a,l_{a'}$ denote the extended lines of $a,a'$ respectively.
Assume they intersect the extended line of $b$ at $J,K$.
Let point $X$ be restricted to $b$, and denote $Q_X=\parallelogram(A_0,X,l_a,l_{a'})$. We state two observations:
(i) When $X$ is sufficiently close to $A_2$, the two corners of $Q_X$ that are restricted to the extended lines of $a,a'$ will lie in segments $a,a'$,
and so $Q_X\in P$.
(ii) $Area(Q_X)$ increases when $X$ moves from $A_2$ towards $B_1$ for a sufficiently small distance.

Clearly, (i) and (ii) together imply that $Q_{A_2}=A$ is not locally maximal.

\smallskip \noindent \emph{Proof of (i).} When $X$ goes from $A_2$ to $B_1$, the center of $Q_X$ gets closer to $l_a$,
  thus the corner of $Q_X$ restricted to $l_{a'}$ moves toward $B_0$ and will stay in $a'$ if $X$ is sufficiently close to $A_2$.
  The center of $Q_X$ gets away from $l_{a'}$, thus the corner of $Q_X$ restricted to $l_{a}$ moves toward $B_3$ and will stay in $a$ if $X$ is sufficiently close to $A_2$.

\medskip \noindent \emph{Proof of (ii).} By Lemmas~\ref{lemma:area-pre} and \ref{lemma:area},
  $Area(Q_X)$ is proportional to $\dist_{l_a,l_{a'}}(X)-\dist_{l_a,l_{a'}}(A_0)$.
Further applying the concavity of $\dist_{l_a,l_{a'}}$ (Lemma~\ref{lemma:dist_concave}),
  proving (ii) reduces to proving that $A_2$ lies strictly between $K$ and $\M(J,K)$.
  Equivalently, we should prove that the x-coordinate of $\M(J,K)$ is larger than 1.

  Let $C$ be the center of $B$.
  Observing the figure, from the assumption we have:
  $B_0.x\leq 0,C.x>1$. Together, $B_2.x>2$. Further, observe that $J.x$ must be larger than $B_2.x$, so $J.x>2$.
  Moreover, we have $K.x>0$. Together, $\M(J,K).x>1$.

\paragraph{Step~2.} Second, we show that \emph{any diagonal of $B$ intersects any diagonal of $A$}.
Rigorously, we show that any endpoint-inclusive segment $\overline{B_iB_{i+2}}$ shares at least one common point with any endpoint-inclusive segment $\overline{A_jA_{j+2}}$ for any $i,j$.
For example, in Figure~\ref{fig:interleave-examples}~(b), the two diagonals $A_0A_2$ and $B_0B_2$ intersect at $A_2=B_2$.

\smallskip For a contradiction, assume that $A_0A_2$ does not intersect $B_0B_2$ and that $A_0,A_2,B_2,B_0$ lie in clockwise order.
Denote $\rho=(A_0\circlearrowright A_2)$ and $\rho'=(B_2\circlearrowright B_0)$.
Clearly, $A_1\in \rho$ and $B_3\in \rho'$.
Moreover, applying the result of Step~1, $B_1\in \rho$ and $A_3\in\rho'$.
Therefore, applying Lemma~\ref{lemma:inter-pre}, there are two possibilities:
$$\text{1. $B_1<_\rho A_1$ and $B_3<_{\rho'}A_3$, or, 2. $A_1<_\rho B_1$ and $A_3<_{\rho'} B_3$.}$$

\begin{figure}[h]
\centering\includegraphics[width=.66\textwidth]{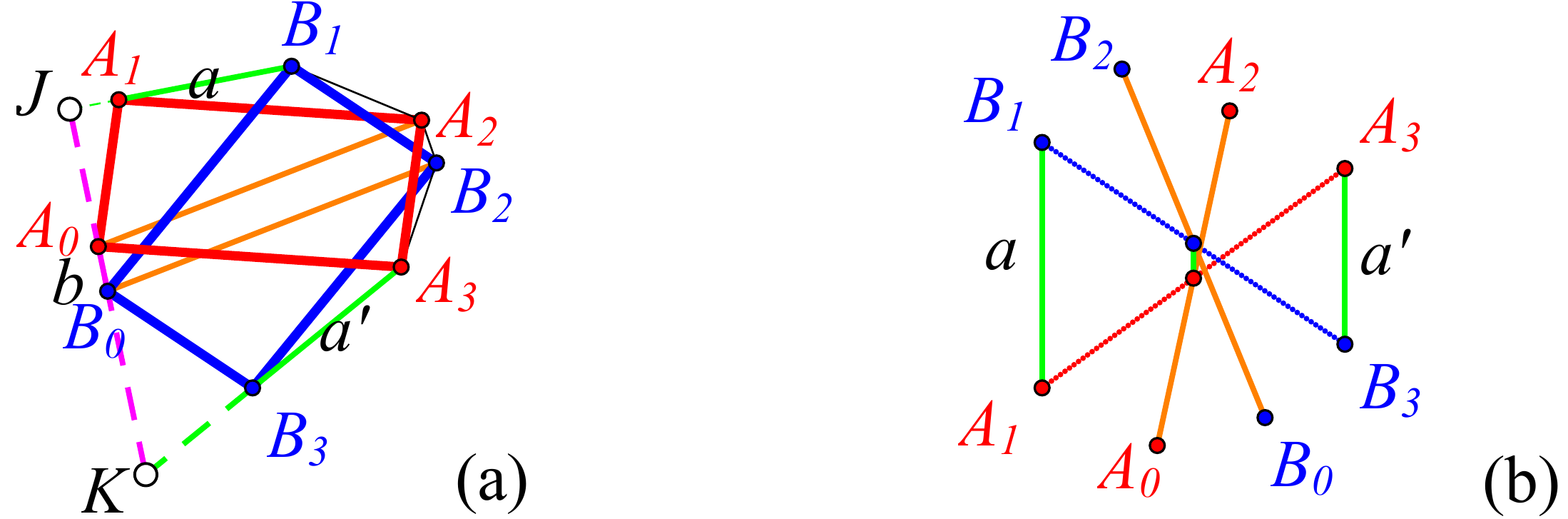}
\caption{Proof of the interleaving property: - diagonals from $A,B$ must intersect.}\label{fig:interleave2}
\end{figure}

In the next, assume that $A_1<_\rho B_1$ and $A_3<_{\rho'} B_3$, as shown in Figure~\ref{fig:interleave2}~(a).

\smallskip Let $a=\overline{A_1B_1},a'=\overline{A_3B_3}$.
We state that \emph{$a$ is not parallel to $a'$}. The proof is deferred for a moment.
Let $l_a,l_{a'}$ be the extended lines of $a,a'$ respectively.
Without loss of generality, assume $l_a,l_{a'}$ intersect on rays $\overrightarrow{A_1B_1}$ and $\overrightarrow{B_3A_3}$; otherwise it is symmetric.
Let $b=\overline{A_0B_0}$ and assume its extended line intersects $l_{a},l_{a'}$ at points $J,K$ respectively.
Assume $X$ lies in segment $b$. Denote $Q^A_X=\parallelogram(X,A_2,l_a,l_{a'})$ and $Q^B_X=\parallelogram(X,B_2,l_a,l_{a'})$ for short.
We state four observations:
\begin{enumerate}
\item[(i')] If $X$ is sufficiently close to $A_0$, parallelogram $Q^A_X$ is inscribed in $P$.
\item[(ii')] If $|JA_0|<\frac{1}{2}|JK|$, $Area(Q^A_X)$ grows when $X$ moves from $A_0$ to $\M(J,K)$.
\item[(i'')] If $X$ is sufficiently close to $B_0$, parallelogram $Q^B_X$ is inscribed in $P$.
\item[(ii'')] If $|KB_0|<\frac{1}{2}|JK|$, $Area(Q^B_X)$ grows when $X$ moves from $B_0$ to $\M(J,K)$.
\end{enumerate}
The proofs of these observations are similar to those of (i) and (ii) and omitted.

\smallskip Since $|JA_0|<\frac{1}{2}|JK|$ or $|KB_0|<\frac{1}{2}|JK|$,
applying the above observations, $A=Q^A_{A_0}$ is not locally maximal or $B=Q^B_{B_0}$ is not locally maximal.

\medskip Now we prove that $a\nparallel a'$.
Suppose to the opposite that $a$ is parallel to $a'$, as shown in Figure~\ref{fig:interleave2}~(b).
Assume $l_a,l_{a'}$ are vertical lines for convenience.
Since $B_3,B_0,A_0,A_1$ lie in clockwise order (observe Figure~\ref{fig:interleave2}~(a) here), we know $(A_0).x < (B_0).x$.
Further since $(A_0).x+(A_2).x=(B_0).x+(B_2).x$, we have $(A_2).x > (B_2).x$, which means $B_1,B_2,A_2,A_3$ lie in clockwise order,
  which contradicts the fact that $B_1,A_2,B_2,A_3$ lie in clockwise order.

\paragraph{Step~3.} Finally, we prove that $A$ interleaves $B$ by contradiction.
Suppose without loss of generality that no corner of $B$ lies in $[A_0\circlearrowright A_1]$.

\smallskip First, we prove the following facts.\\
(a) No corner of $B$ can lie in $(A_2\circlearrowright A_3)$.\\
(b) No corner of $B$ can lie at $A_2$ or $A_3$.

\smallskip \noindent \emph{Proof of (a).} We prove it by contradiction.
    Assume without loss of generality that $B_2$ lies in $(A_2\circlearrowright A_3)$.
By assumption, $B_0\notin [A_0\circlearrowright A_1]$. So segment $B_0B_2$ cannot intersect $A_0A_2$ and $A_1A_3$ simultaneously.
This contradicts the result of Step~2.

\smallskip \noindent \emph{Proof of (b).} We prove it by contradiction.
    Assume some corner of $B$ lies at $A_2$. (Similar arguments hold for $A_3$.)
Without loss of generality, assume $B_2=A_2$.

Consider the position of $B_0$. According to the result of Step~1, $B_0\notin (A_1\circlearrowright A_3)$.
Moreover, it cannot lie in $[A_0\circlearrowright A_1]$ (there is no such corner in $B$ by assumption).
Furthermore, it cannot lie at $A_3$, since otherwise $B_3$ can only lie in $(A_2\circlearrowright A_3)$ which contradicts (a).
Therefore, $B_0$ can only lie in $(A_3\circlearrowright A_0)$.

Next, consider the position of $B_3$. Assume that $B_3\neq A_3$ (otherwise, parallelograms $A,B$ share the same side $A_2A_3=B_2B_3$
    and it must happen that $A_0,A_1,B_0,B_1$ lie in the same line and thus $A$ interleaves $B$).
  Moreover, by fact (a), $B_3\notin (A_2\circlearrowright A_3)$.
  Therefore, $B_3$ also lies in $(A_3\circlearrowright A_0)$.

Now, the parallelograms $A,B$ are illustrated in Figure~\ref{fig:interleave3}~(a).

\begin{figure}[h]
\centering\includegraphics[width=\textwidth]{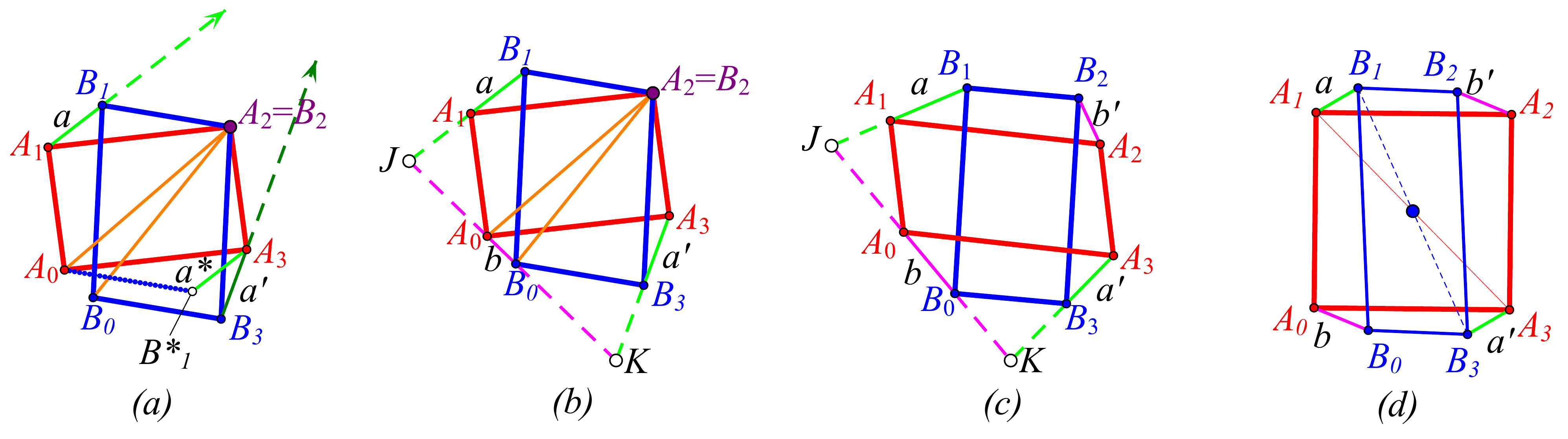}
\caption{Proof of the interleaving property of LMAPs - last step.}\label{fig:interleave3}
\end{figure}

\smallskip We claim that (X) \emph{the extended lines of $\overline{A_1B_1},\overline{B_3A_3}$ intersect in rays $\overrightarrow{A_1B_1}$ and $\overrightarrow{B_3A_3}$}.
See Figure~\ref{fig:interleave3}~(a). Make the symmetric point $B^*_1$ of $B_1$ around the center of $A$.
Clearly, $A_0B^*_1$ is a translate of $B_1A_2$ and so a translate a $B_0B_3$.
So, $B^*_1$ is on the right of $\overrightarrow{B_3B_0}$.
Similarly, $B^*_1$ is on the right of $\overrightarrow{B_2B_3}$ since $A_0$ is on the right of $\overrightarrow{B_1B_0}$.
Together, $B^*_1$ is on the right of $\overrightarrow{A_3B_3}$. Further since $A_1B_1\parallel A_3B^*_1$, we get (X).
Using (X), we can deduce that one of $A,B$ is not locally maximal.
The proof is the same as the proof in Step~2 (compare Figure~\ref{fig:interleave3}~(b) with Figure~\ref{fig:interleave2}~(a)) and thus is omitted.
This means $B_2$ cannot lie at $A_2$. Thus fact~(b) holds.

\smallskip Combing (a) and (b), no corner of $B$ lies in $[A_2\circlearrowright A_3]$.
Recall that we also know no corner of $B$ lies in $[A_0\circlearrowright A_1]$.
Therefore, it must be the case that two corners of $B$ lie in $(A_1\circlearrowright A_2)$
 whereas another two lie in $(A_3\circlearrowright A_0)$, as shown in Figure~\ref{fig:interleave3}~(c).

\medskip Let $a=\overline{A_1B_1}$, $a'=\overline{A_3B_3}$, $b=\overline{A_0B_0}$, $b'=\overline{A_2B_2}$.
We shall prove that one of $A,B$ is not locally maximal.
Using the technique developed in Step~2, this is clear when $a\nparallel a'$ or $b\nparallel b'$.
So, assume that $a\parallel a'$ and $b\parallel b'$ in the next. See Figure~\ref{fig:interleave3}~(d).

\smallskip Clearly, $a'$ is a translate of $a$ and $b'$ a translate of $b$. So the centers of $A,B$ coincide.
If $A,B$ are both locally maximal, we have:\\
(1) $A_2$ is further than $B_2$ to the extended line of $A_1A_3$.\\
(2) $B_2$ is further than $A_2$ to the extended line of $B_1B_3$.

Clearly, these two facts contradict. So one of $A,B$ is not locally maximal.
\end{proof}

\begin{corollary}\label{corol:count}
For any given convex polygon $P$, there are only $O(n)$ LMAPs.
\end{corollary}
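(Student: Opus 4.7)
The plan is to combine Theorem~\ref{thm:interleave} (pairwise interleaving of LMAPs) with the structural fact that every LMAP has at least one corner situated on a vertex of $P$, and then conclude by a pigeonhole argument on the $n$ vertices.

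First, I would verify that every LMAP has an anchored corner. Lemma~\ref{lemma:3LMAPs} covers three cases; cases (a) and (b) explicitly guarantee such a corner. For case (c), where all four corners are even, Definition~\ref{def:classification} applied to $A_1$ and $A_3$ says that neither of $\unit(A_0),\unit(A_2)$ is chasing the other. Moreover, since $A_0,A_2$ are opposite corners of a non-degenerate inscribed parallelogram, $\unit(A_0)$ and $\unit(A_2)$ are distinct and non-incident. Claim~\ref{claim:H} then forces one of $\unit(A_0),\unit(A_2)$ to be a vertex of $P$, giving an anchored corner in case (c) as well.

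Second, I would argue that at most a constant number of LMAPs can share a corner at any fixed vertex $V$ of $P$. Let $Q^{(1)},\dots,Q^{(m)}$ be LMAPs all containing $V$ as a corner; after relabeling, assume $V=Q^{(j)}_0$ for each $j$. Pairwise interleaving (Theorem~\ref{thm:interleave}) then forces, for each pair $Q^{(i)},Q^{(j)}$, the cyclic order of their six remaining corners on $\partial P$ to strictly alternate (there are exactly two patterns, mirror to one another). Combining this rigid combinatorial alternation with the parallelogram constraint (opposite corners linked by central symmetry through the center of the parallelogram) and strict local maximality, I would rule out long families of such LMAPs by a case analysis on the type of $V$ (narrow, broad, or even) within each $Q^{(j)}$, showing $m=O(1)$.

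Once that fiber bound is established, the corollary is immediate: assigning each LMAP to a vertex where it has an anchored corner yields a map from the set of LMAPs to the $n$ vertices whose fibers all have size $O(1)$, hence there are at most $O(n)$ LMAPs. The main obstacle is the middle step -- turning the interleaving alternation plus central symmetry into a crisp $O(1)$ bound per vertex. The alternation pattern is quite restrictive on its own, but the clean argument will likely require carefully exploiting the fact that each $Q^{(j)}$'s center is the midpoint of $V$ and its opposite corner $Q^{(j)}_2$, so the positions of the opposite corners on $\partial P$ together with strict local maximality can only be realized by boundedly many $j$.
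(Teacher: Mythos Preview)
Your Step~1 is correct: every LMAP does have an anchored corner, and your derivation via Lemma~\ref{lemma:3LMAPs} together with Claim~\ref{claim:H} for case~(c) is sound.

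The genuine gap is Step~2, which you yourself flag as ``the main obstacle.'' You assert that at most $O(1)$ LMAPs can share a fixed vertex $V$, but you do not prove it; the phrase ``rigid combinatorial alternation \ldots\ case analysis on the type of $V$'' is a hope, not an argument. Pairwise interleaving alone does not obviously force an $O(1)$ bound here. If $Q^{(1)},\dots,Q^{(m)}$ all have $A_0=V$, then for any pair $Q^{(i)},Q^{(j)}$ the shared point $V$ already witnesses both arcs of $Q^{(i)}$ incident to $V$ (the interleaving definition uses closed arcs), so the only residual constraint is that the remaining three corners of each parallelogram interleave with those of the others. That residual constraint is the same kind of ``nested/alternating'' condition as for the full family, and by itself it permits $m$ to grow with $n$; getting $m=O(1)$ would require a genuinely new ingredient beyond interleaving and central symmetry, which you have not supplied. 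As written, the proposal is incomplete.

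The paper's proof takes a different and more direct route that bypasses this difficulty entirely. It uses Theorem~\ref{thm:interleave} globally: pairwise interleaving lets one index the $m$ LMAPs so that all $4m$ corners appear on $\partial P$ in the single cyclic order
\[
A^{(1)}_0,\ldots,A^{(m)}_0,\;A^{(1)}_1,\ldots,A^{(m)}_1,\;A^{(1)}_2,\ldots,A^{(m)}_2,\;A^{(1)}_3,\ldots,A^{(m)}_3
\]
(coincidences allowed). One then counts \emph{units} rather than vertices: two distinct LMAPs cannot have all four corners in the same four units, so as $i$ increases the running total $\delta^{(i)}=\sum_j |\{\unit(A^{(1)}_j),\ldots,\unit(A^{(i)}_j)\}|$ strictly increases; since there are only $2n$ units, this forces $m\le 2n$. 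The anchored-corner fact is never used.
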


\begin{proof}
Assume there are $m$ LMAPs. Denote all of them by $\{A^{(i)}_0A^{(i)}_1A^{(i)}_2A^{(i)}_3\mid 1\leq i \leq m\}$.
By Theorem~\ref{thm:interleave}, we can assume that
$$A^{(1)}_1, \ldots, A^{(m)}_1,A^{(1)}_2, \ldots, A^{(m)}_2,A^{(1)}_3, \ldots, A^{(m)}_3, A^{(1)}_4, \ldots, A^{(m)}_4$$
are in clockwise order (where coincidences are allowed).

Now, let $\delta^{(i)}_j$ denote the number of units in $\{\unit(A^{(1)}_j),\ldots,\unit(A^{(i)}_j)\}$, and $\delta^{(i)}=\delta^{(i)}_1+\ldots+\delta^{(i)}_4$.
Since there are only $2n$ units, $4\leq \delta^{(1)}<\ldots<\delta^{(m)}< 2n+4$. So $m\leq 2n$.
The inequality $\delta^{(i)}<\delta^{(i+1)}$ follows from a fact that there exists $j\in \{0,1,2,3\}$ such that $\unit(A^{(i)}_j)\neq \unit(A^{(i+1)}_j)$;
i.e., the four units containing the four corners cannot be all the same for different LMAPs. (Further details are omitted;
    this corollary is proved more clearly in our follow-up work \cite{followup-MAP-arxiv15}.)
 \end{proof}

\section*{Acknowledgements}
We thank professors Haitao Wang and Kevin Matulef for taking part in discussions and for their unselfish helps,
  and thank Andrew C. Yao, Jian Li, Danny Chen, Matias Korman, Wolfgang Mulzer, and Donald Sheehy for their precious suggestions.
Last but not least, we appreciate the developers of Geometer's Sketchpad${}^\circledR$.

\appendix

\section{Literature of Theorem~\ref{theorem:approx}}\label{sect:ratio}

Theorem~\ref{theorem:approx} follows a combination of the following two results.
The first result \cite{Math-Sas-CM39} says that for convex bodies in ${\mathbb R}^d$, the hardest to approximate with inscribed $n$-gons are exactly the ellipsoids.
The second result \cite{Math-Dow-AMS44} says that for any centrally-symmetric convex body $K$ in the plane, and any even $n \ge 4$,
among the inscribed (or contained) convex $n$-gons of maximal area in $K$, there is one which is centrally-symmetric.

\section{A proof of Lemma~\ref{lemma:local-maximal-non-slidable}}\label{sect:omit-slidable}

\begin{figure}[h]
  \centering \includegraphics[width=.53\textwidth]{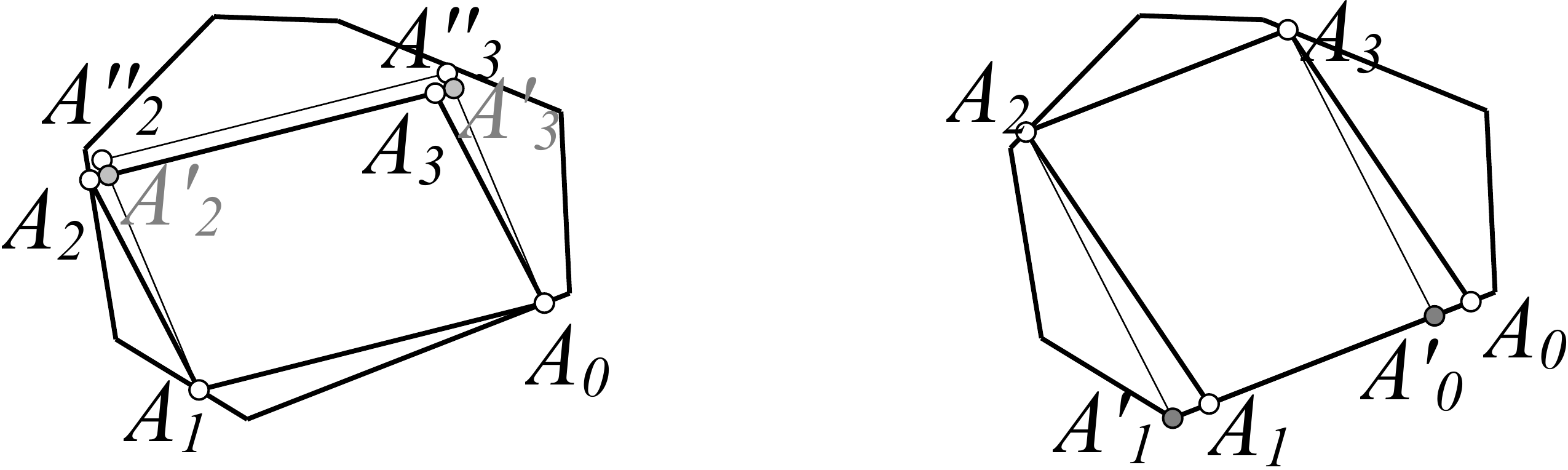}
  \caption{Illustration of the proof of Lemma~\ref{lemma:local-maximal-non-slidable}.}\label{fig:LMAPs_fact}
\end{figure}

\begin{proof}[Proof of Lemma~\ref{lemma:local-maximal-non-slidable}]
1. For a contradiction, suppose parallelogram $Q=A_0A_1A_2A_3$ is locally maximal but not inscribed.
Without loss of generality, assume $A_3\notin \partial P$. See Figure~\ref{fig:LMAPs_fact}~(a).
First, we slide (namely, translate) segment $A_2A_3$ along direction $\overrightarrow{A_2A_3}$ for a sufficiently small distance to create $A'_2A'_3$.
Next we slide it along direction $\overrightarrow{A_0A'_3}$ for a sufficiently small distance to create $A''_2A''_3$ where
$A''_2$ and $A''_3$ are still in $P$.
Clearly, $Area(A_0A_1A_2A_3)<Area(A_0A_1A''_2A''_3)$, so $Q$ is not locally maximal.

2. Assume $Q=A_0A_1A_2A_3$ is inscribed and $A_0,A_1$ lies in the same edge. See Figure~\ref{fig:LMAPs_fact}~(b).
We slide segment $A_0A_1$ along direction $\overrightarrow{A_0A_1}$ to create $A'_0A'_1$ so that $A'_1$ coincides with an endpoint of the edge.
Then, $A'_1$ does not lie on this edge, since the edge does not contain its endpoints.
So the new parallelogram $A'_0A'_1A_2A_3$ is non-slidable. Moreover, it is inscribed and it clearly has the same area as $A_0A_1A_2A_3$.
 \end{proof}

\section{The problem can be formulated as a linearly constrained quadratic yet non-convex programming}\label{sect:address-cp}

The MAP problem can be easily formulated as:
\begin{equation}\label{eqn:plausibleCP}
\begin{small}
\begin{gathered}
  f(x_0,y_0,x_1,y_1,x_2,y_2)=\max \left|\det \left(
                                 \begin{array}{cc}
                                   x_1 & x_2 \\
                                   y_1 & y_2 \\
                                 \end{array}
                               \right) \right| ,
  \text{ s. t.} \left\{\begin{array}{c}
                        (x_0+x_1,y_0+y_1) \in P \\
                        (x_0-x_1,y_0-y_1) \in P \\
                        (x_0+x_2,y_0+y_2) \in P \\
                        (x_0-x_2,y_0-y_2) \in P
                      \end{array}\right.
\end{gathered}
\end{small}
\end{equation}

The absolute function in this formulation can be removed without bringing any difference.
  Therefore, (\ref{eqn:plausibleCP}) is a (linearly constrained) \emph{quadratic programming}.\medskip

It is well-known that $g(X) = \log \det X$ is concave on domain of symmetric positive definite $n\times n$ matrices $S^n_{++}$
(see \cite{ellipse-book-co}).
Based on this result, (\ref{eqn:plausibleCP}) looks like a convex programming - indeed, two professors in my previous university who are also reviewers of my follow-up paper
   wrong thought that (\ref{eqn:plausibleCP}) is a convex programming.
This is not true, since matrix $\left(\begin{array}{cc}x_1 & x_2 \\y_1 & y_2 \\\end{array}\right)$ may not belong to $S^2_{++}$ in our case.

\bigskip \noindent \textbf{Claim.}
\emph{Function $f$ is \textbf{not} concave and so (\ref{eqn:plausibleCP}) is \textbf{not} a convex programming.}

\begin{proof}
 \begin{equation*}
    \text{Let} \left\{ \begin{array}{ccccc}
              \mathbf{x} & = & (x_0,y_0,x_1,y_1,x_2,y_2) & = & (0,0,+1,+1,+1,-1)\\
              \mathbf{x}' & = & (x'_0,y'_0,x'_1,y'_1,x'_2,y'_2) & = & (0,0,-1,-1,-1,+1)
            \end{array} \right.
 \end{equation*}

We get $f(\mathbf{x})=f(\mathbf{x}')=2$. Yet $f(\frac{1}{2}(\mathbf{x}'+\mathbf{x}))=f(\mathbf{0})=0$.
 \end{proof}

\section{An example where there are multiple LMAPs}\label{sect:many-pentagon}

When $P$ is a regular pentagon, using the clamping bounds and the algorithm in section~\ref{sect:algo},
  we see that $P$ admits five LMAPs (which are also MAPs). See Figure~\ref{fig:many}.

\begin{figure}[h]
  \centering
  \includegraphics[width=.65\textwidth]{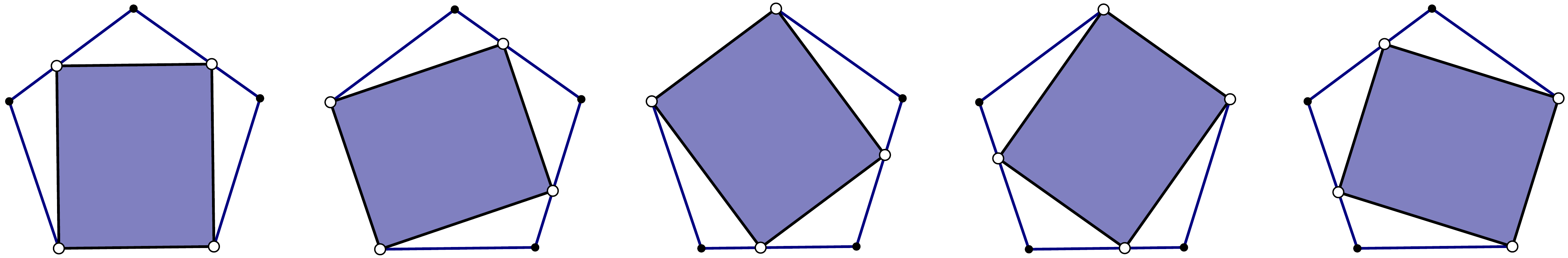}\\
  \caption{The regular pentagon is an example that admits five LMAPs.}\label{fig:many}
\end{figure}

\bibliographystyle{ws-ijcga}
\bibliography{MAP-J}

\begin{thebibliography}{10}
\newcommand{\enquote}[1]{#1}

\bibitem{Classic-shamosCG-dissertation}
M.~Shamos, \emph{Computational geometry} (1978).

\bibitem{Potato-polynomial-DCG86}
J.~Chang and C.~Yap, \enquote{A polynomial solution for the potato-peeling
  problem}, \emph{Disc. \& Comp. Geom.} \textbf{1} (1986) 155.

\bibitem{Kgon-extremal-STOC82}
J.~E. Boyce, D.~P. Dobkin, R.~L.~S. Drysdale, III and L.~J. Guibas,
  \enquote{Finding extremal polygons}, in \emph{14th Symposium on Theory of
  Computing} (1982), STOC '82, pp. 282--289.

\bibitem{ellipse-LPtype-SCG92}
J.~Matou\v{s}ek, M.~Sharir and E.~Welzl, \enquote{A subexponential bound for
  linear programming}, in \emph{8th Proc. of Symp. on Comp. Geom.} (New York,
  NY, USA, 1992), pp. 1--8.

\bibitem{Othershape-square-Allerton87}
N.~Pano, Y.~Ke and J.~O'Rourke, \enquote{Finding largest inscribed equilateral
  triangles and squares}, in \emph{Proc. of Alle. Conf. on Commu., Cont., and
  Comp.} (1987), pp. 869--878.

\bibitem{Placement-chazelle-CR83}
B.~Chazelle, \enquote{The polygon containment problem}, \emph{J. of Adv. in
  Comp. Res.} \textbf{1} (1983) 1.

\bibitem{Placement-ST-CGTA94}
M.~Sharir and S.~Toledo, \enquote{Extremal polygon containment problems},
  \emph{Comp. Geom.: Theor. and Appl.} \textbf{4} (1994) 99.

\bibitem{Placement-convex-DCG98}
P.~K. Agarwal, N.~Amenta and M.~Sharir, \enquote{Largest placement of one
  convex polygon inside another}, \emph{Disc. \& Comp. Geom.} \textbf{19}
  (1998) 95.

\bibitem{ShapeSurvey}
B.~Chazelle, \emph{Approximation and Decomposition of Shapes} (Routledge,
  1987), ch.~0, pp. 145--185, Routledge Revivals.

\bibitem{Math-Sas-CM39}
E.~Sas, \enquote{\"{U}ber ein extremumeigenschaft der ellipsen}, \emph{Compos.
  Math.} \textbf{6} (1939) 468.

\bibitem{Math-Dow-AMS44}
C.~Dowker, \enquote{On minimum circumscribed polygons}, \emph{Bullet. of the
  Amer. Math. Soc.} \textbf{50} (1944) 120.

\bibitem{followup-MAP-arxiv15}
K.~Jin, \enquote{Maximal parallelograms in convex polygons - a novel geometric
  structure}, \emph{CoRR} \textbf{abs/1512.03897}.

\bibitem{Kgon-Matrixsearch-Algc87}
A.~Aggarwal, M.~M. Klawe, S.~Moran, P.~Shor and R.~Wilber, \enquote{Geometric
  applications of a matrix-searching algorithm}, \emph{Algorithmica} \textbf{2}
  (1987) 195.

\bibitem{Kgon-Klink-DCG94}
A.~Aggarwal, B.~Schieber and T.~Tokuyama, \enquote{Finding a minimum-weight
  k-link path in graphs with the concave monge property and applications},
  \emph{Disc. \& Comp. Geom.} \textbf{12} (1994) 263.

\bibitem{Kgon-Klink-soda95}
B.~Schieber, \enquote{Computing a minimum-weight k-link path in graphs with the
  concave monge property}, in \emph{6th Symp. on Disc. Algo.} (1995), SODA '95,
  pp. 405--411.

\bibitem{Triangle-correct-IJCGA02}
S.~Chandran and D.~M. Mount, \enquote{A paraellel algorithm for enclosed and
  enclosing triangles}, \emph{Int. J. of Comp. Geom. \& Appl.} \textbf{02}
  (1992) 191.

\bibitem{Triangle-ultimate-Arxiv}
K.~Jin, \enquote{Maximal area triangles in a convex polygon}, \emph{CoRR}
  \textbf{abs/1707.04071}.

\bibitem{Triangle-wrong-FOCS79}
D.~Dobkin and L.~Snyder, \enquote{On a general method for maximizing and
  minimizing among certain geometric problems}, in \emph{20th Symp. on Foun. of
  Comp. Sci.} (1979), pp. 9--17.

\bibitem{Triangle-reportwrong-Arxiv}
V.~Keikha, M.~L{\"{o}}ffler, J.~Urhausen and I.~v.~d. Hoog,
  \enquote{Maximum-area triangle in a convex polygon, revisited}, \emph{CoRR}
  \textbf{abs/1705.11035}.

\bibitem{Othershape-rect-EuroCG14}
S.~Cabello, O.~Cheong and L.~Schlipf, \enquote{Finding largest rectangles in
  convex polygons}, in \emph{Euro. Workshop on Comp. Geom.} (2014).

\bibitem{Triangle-EnclArea-JA86}
J.~O'Rourke, A.~Aggarwal, S.~Maddila and M.~Baldwin, \enquote{An optimal
  algorithm for finding minimal enclosing triangles}, \emph{J. of Algo.}
  \textbf{7} (1986) 258 .

\bibitem{Triangle-EnclPeri-JCDCG02}
B.~Bhattacharya and A.~Mukhopadhyay, \enquote{On the minimum perimeter triangle
  enclosing a convex polygon}, in \emph{Discrete and Computational Geometry},
  eds. J.~Akiyama and M.~Kano (Springer Berlin Heidelberg, Berlin, Heidelberg,
  2003), pp. 84--96.

\bibitem{Classic-rcaliper-MEL83}
G.~Toussaint, \enquote{Solving geometric problems with the rotating calipers},
  in \emph{IEEE MELECON'83} (1983), pp. 1--4.

\bibitem{3dorEncl-areaParallelogram-SOCG95}
C.~Schwarz, J.~Teich, A.~Vainshtein, E.~Welzl and B.~Evans, \enquote{Minimal
  enclosing parallelogram with application}, in \emph{11th Proc. of Symp. on
  Comp. Geom.} (ACM, New York, NY, USA, 1995), pp. 434--435.

\bibitem{3dorEncl-periParallelogram-cccg10}
Y.~Bousany, M.~Karker, J.~O¡¯rourke and L.~Sparaco, \enquote{Sweeping minimum
  perimeter enclosing parallelograms: Optimal crumb cleanup}, in \emph{22rd
  Proc. of Cana. Conf. on Comp. Geom.} (2010), pp. 167--170.

\bibitem{ellipse-LPtype-JA96}
B.~Chazelle and J.~Matou\v{s}ek, \enquote{On linear-time deterministic
  algorithms for optimization problems in fixed dimension}, \emph{J. of Algo.}
  \textbf{21} (1996) 579.

\bibitem{ellipse-spanconvex-SCG92}
M.~Dyer, \enquote{A class of convex programs with applications to computational
  geometry}, in \emph{8th Proc. of Symp. on Comp. Geom.} (New York, NY, USA,
  1992), pp. 9--15.

\bibitem{ellipse-ipconvex-tr}
Y.~Zhang, \enquote{An interior-point algorithm for the maximum-volume ellipsoid
  problem}, Technical report, RICE UNIVERSITY, 1999.

\bibitem{ellipse-book-co}
S.~Boyd and L.~Vandenberghe, \emph{Convex Optimization} (Cambridge University
  Press, New York, NY, USA, 2004).

\bibitem{ellipse-john-CAV48}
F.~John, \enquote{Extremum problems with inequalities as subsidiary
  conditions}, \emph{Courant Anniversary Volume}  (1948) 187.

\bibitem{Math-approx-GD98}
M.~Lassak, \enquote{Approximation of convex bodies by centrally symmetric
  bodies}, \emph{Geometriae Dedicata} \textbf{72} (1998) 63.

\bibitem{Math-approx-JDG04}
Y.~Gordon, A.~E. Litvak, M.~Meyer and A.~Pajor, \enquote{John's decomposition
  in the general case and applications}, \emph{J. of Diff. Geom.} \textbf{68}
  (2004) 99.

\bibitem{Math-3dinscribed-french54}
A.~Bielecki and K.~Radziszewski, \enquote{Sur les parall\'{e}l\'{e}pip\`{e}des
  inscrits dans les corps convexes}, \emph{Ann. Univ. Mariae
  Curie-Sk{\l}odowska, Sect. A} \textbf{8} (1954) 97.

\bibitem{Math-survey-GM97}
T.~Hausel, E.~Makai and A.~Sz\"{u}cs, \enquote{Polyhedra inscribed and
  circumscribed to convex bodies}, \emph{General Mathematics} \textbf{5} (1997)
  183.

\bibitem{Math-PinC-AMM60}
C.~M. Fulton and S.~K. Stein, \enquote{Parallelograms inscribed in convex
  curves}, \emph{Amer. Math. Month.} \textbf{67} (1960) 257.

\bibitem{Math-3dinscribed-french52}
K.~Radziszewski, \enquote{Sur une probl\'{e}me extr\'{e}mal relatif aux figures
  inscrites et circonscrites aux fiures convexes}, \emph{Ann. Univ. Mariae
  Curie-Sk{\l}odowska, Sect. A} \textbf{6}.

\bibitem{Math-PinS-DM00}
G.~Leng, Y.~Zhang and B.~Ma, \enquote{Largest parallelotopes contained in
  simplices}, \emph{Disc. Math.} \textbf{211} (2000) 111.

\bibitem{Math-PinS-DCG99}
M.~Lassak, \enquote{Parallelotopes of maximum volume in a simplex}, \emph{Disc.
  {\&} Comp. Geom.} \textbf{21} (1999) 449.

\bibitem{Math-PinE-AMM07}
A.~Connes and D.~Zagier, \enquote{A property of parallelograms inscribed in
  ellipses}, \emph{Amer. Math. Month.} \textbf{114} (2007) 909.

\bibitem{Math-SafeDomain-EJP04}
J.~M. Richard, \enquote{Safe domain and elementary geometry}, \emph{Euro. J. of
  Phy.} \textbf{25} (2004) 835.

\bibitem{Heilbronn-convexhull-LMS71}
W.~M. Schmidt, \enquote{On a problem of {H}eilbronn}, \emph{J. of the Lond.
  Math. Soc.} \textbf{4} (1971) 545.

\bibitem{Heilbronn-Lefmann-JC00}
C.~Bertram{-}Kretzberg, T.~Hofmeister and H.~Lefmann, \enquote{An algorithm for
  {H}eilbronn's problem}, \emph{{SIAM} J. on Comp.} \textbf{30} (2000) 383.

\bibitem{Heilbronn-Lefmann3d-JC02}
H.~Lefmann and N.~Schmitt, \enquote{A deterministic polynomial-time algorithm
  for {H}eilbronn's problem in three dimensions}, \emph{{SIAM} J. on Comp.}
  \textbf{31} (2002) 1926.

\end{thebibliography}

\end{document}